\def\tsc#1{\csdef{#1}{\textsc{\lowercase{#1}}\xspace}}
\newcommand{\kibitz}[2]{\ifnum\Comments=1{\color{#1}{#2}}\fi}
\NewDocumentCommand{\twohalfcircle}{ O{#2} m }{%
    \begin{tikzpicture}
    \fill[#2] (0,0) -- (90:0.5ex) arc (90:270:0.5ex) -- cycle;; % Fill circle with base colour (arg#2)
    \fill[#1] (0,0) -- (90:0.5ex) arc (90:-90:0.5ex) -- cycle; % Fill a half circle filled with second colour (arg#1), if specified
    \draw (0,0) circle (0.5ex);
    \end{tikzpicture}
}
\renewcommand{\H}{  %hyperplane
    \textbf{H}
}
\newcommand{\B}{  %ball
    \textbf{B}
}
\newcommand{\He}{  %hemisphere
    \B^{\twohalfcircle[black]{white}}
}
\newcolumntype{x}[1]{>{\centering\arraybackslash\hspace{0pt}}p{#1}}
\pgfmathsetmacro{\r}{3} %
\pgfmathsetmacro{\ra}{2}
\pgfmathsetmacro{\h}{21/8} %
\newtheorem{theorem}{Theorem}
\newtheorem{lemma}[theorem]{Lemma}
\newtheorem{proposition}[theorem]{Proposition}
\newtheorem{definition}{Definition}
\newproof{pf}{Proof}
\begin{document}
\let\WriteBookmarks\relax
\def\floatpagepagefraction{1}
\def\textpagefraction{.001}

% Short title
\shorttitle{On the Approximability of the Yolk in the Spatial Model of Votings}    

% Short author
\shortauthors{}  

% Main title of the paper
\title [mode = title]{On the Approximability of the Yolk in the Spatial Model of Voting}  

% Title footnote mark
% eg: \tnotemark[1]
% \tnotemark[1] 

% Title footnote 1.
% eg: \tnotetext[1]{Title footnote text}
% \tnotetext[1]{} 

% First author
%
% Options: Use if required
% eg: \author[1,3]{Author Name}[type=editor,
%       style=chinese,
%       auid=000,
%       bioid=1,
%       prefix=Sir,
%       orcid=0000-0000-0000-0000,
%       facebook=<facebook id>,
%       twitter=<twitter id>,
%       linkedin=<linkedin id>,
%       gplus=<gplus id>]

\author{Ran Hu}[orcid=0000-0003-3133-6511]

\ead[url]{https://ise.rpi.edu/people/ran-hu}

\ead{hur6@rpi.edu}

\author{James P. Bailey}[orcid=0000-0002-7207-512X]

% Corresponding author indication
% \cormark[1]

% Footnote of the first author
% \fnmark[1]

% Email id of the first author
\ead{bailej6@rpi.edu}

% URL of the first author
\ead[url]{www.jamespbailey.com}

% Credit authorship
% eg: \credit{Conceptualization of this study, Methodology, Software}
% \credit{}

% Address/affiliation
\affiliation{organization={Rensselaer Polytechnic Institute (RPI)},
            addressline={110 8th St}, 
            city={Troy},
%          citysep={}, % Uncomment if no comma needed between city and postcode
            postcode={12180}, 
            state={NY},
            country={USA}}

% Corresponding author text
% \cortext[1]{Corresponding author}

% Footnote text
% \fntext[1]{}

% For a title note without a number/mark
%\nonumnote{}

% Here goes the abstract
\begin{abstract}
    In the spatial model of voting, the yolk and LP (linear programming) yolk are important solution concepts for predicting outcomes for a committee of voters. McKelvey and Tovey showed that the LP yolk provides a lower bound approximation for the size of the yolk and there has been considerable debate on whether the LP yolk is a good approximation of the yolk.
    In this paper, we show that for an odd number of voters in a two-dimensional space that the yolk radius is at most twice the size of the LP yolk radius. 
    However, we also show that (1) even in this setting, the LP yolk center can be arbitrarily far away from the yolk center (relative to the radius of the yolk) and (2) for all other settings (an even number of voters or in dimension $k\geq 3$) that the LP yolk can be arbitrarily small relative to the yolk. 
    Thus, in general, the LP yolk can be an arbitrarily poor approximation of the yolk.
\end{abstract}

% Research highlights
% \begin{highlights}
% \item We show that at most $k+1$ median hyperplanes in $\mathbb{R}^k$ are needed to determine the yolk.
% \item We prove that the LP yolk radius is at least 1/2 the size of the yolk radius for an odd number of voters in $\mathbb{R}^2$. 
% \item We also show for an odd number of voters in $\mathbb{R}^2$, the LP yolk center can be arbitrarily far away from the yolk center. 
% \item Furthermore, we show that the LP yolk can be arbitrarily small relative to the yolk in all other settings. 
% \item Therefore, we show that the LP yolk is an unreliable, and potentially arbitrarily poor, approximation of the yolk.
% \end{highlights}

% Keywords
% Each keyword is seperated by \sep
\begin{keywords}
Yolk \\
LP yolk\\
Dominance\\
Core\\
Spatial Model of Voting
\end{keywords}

\maketitle

%%%%%%%%%%%%%%%%%%%%%%%%

\section{Introduction}
The spatial model of voting \citep{enelow1984spatial,miller2015spatial} is a theoretical framework used in political science and economics to analyze voting behaviors (see e.g., \cite{reynolds1969spatial,merrill1993voting}), candidate strategies, and electoral outcomes \citep{kenny2005evidence}.
It models preferences and choices in a multidimensional space, where each dimension in the space represents a different political issue or ideology, e.g., economic policy on one axis and social policy on another.
In this model, voters are represented as \textit{ideal} points within this multidimensional space. 
A voter's position in this space represents their ideal point or preferred position on various issues. 
%Voters will choose the candidate whose policies are closest to their preferences in this multidimensional space, thus minimizing the distance between them.
In the standard spatial model of voting, voters preferences are single-peaked and they prefer policies candidates closer to their ideal point \citep{black1948rationale}.
%The formal spatial model of social choice has its root in the work of Harold Hotelling \cite{hotbllino1929stability} and Duncan Black \cite{black1948rationale}. 
%Spatial voting models help reveal the complexity of democratic processes, and assess the impact of different voting systems. It also can be used to explain voter behavior (see e.g., Reynolds \cite{reynolds1969spatial} and Merrill III \cite{merrill1993voting}). 
%The spatial model of voting \cite{enelow1984spatial} is used in social choice theory to simulate voter behavior in election and predict the likelihood of desired or undesired outcomes under various voting system.  
%This model represents political, economic, or social preferences on a spatial spectrum, often visualized as a one-dimensional line or more complex multi-dimensional space.

% The spatial voting model provides a framework to analyze voting behaviors (see e.g., Reynolds \cite{reynolds1969spatial} and Merrill III \cite{merrill1993voting}), candidate strategies, and electoral outcomes \cite{kenny2005evidence}.
%Spatial voting offers a robust analytical framework that is crucial for understanding the behavior of voters and political candidates.
The spatial voting model helps explain how individuals make voting decisions based on their policy preferences and the positions of different candidates (see e.g., \cite{boudreau2015lost}).
It also can be used to predict election outcomes by identifying where candidates stand on the political spectrum in relation to the electorate's distribution of preferences.
Understanding the spatial distribution of voter preferences allows political candidates and parties to tailor their campaigns more effectively \citep{owen1989optimal}.
%It helps understand how individuals make electoral choices based on their policy preferences, explains why candidates might shift their policy positions during campaigns to appeal to a broader electorate or target specific voter segments, based on the spatial distribution of voter preferences, see Boudreau et al. \cite{boudreau2015lost}.
%It can also serve as valuable tools for analyzing political competitions and mapping the strategic positions of parties or candidates \cite{owen1989optimal}. 
%Furthermore, spatial voting models allow the evaluation of different voting systems, including plurality, ranked choice or proportional representation, helping to provide a deeper understanding of their impact. 
%It helps dissect strategic voting patterns, providing insights into how individuals vote strategically to achieve specific outcomes.
By clarifying how democratic processes work and how policy decisions are made, spatial voting theory can contribute to a better understanding of democracy. 
%Spatial voting theory has broad applications and implications across political science \cite{owen1989optimal,poole1991patterns,schofield2004equilibrium,banks2006social,rapoport1985assessment,jessee2009spatial}, economics \cite{banks2002bounds}, and beyond \cite{hytonen2002spatial}, offering insights into electoral behavior, policy formation, and the strategic positioning of candidates and parties.
%Rapoport and Golan \cite{rapoport1985assessment} use spatial voting to assess the political power ratios of the parties represented in the Knesset and to judge the ideological similarity between them. 
%Rabinowitz and MacDonald \cite{rabinowitz1986power} trace the implications of the power of the states for the relative power of individual voters by using spatial choice model.
%Poole and Rosenthal \cite{poole2001d} apply the spatial model to the DW-NOMINATE procedure to analyze voter preferences.
\cite{jenkins1998spatial} extend research on the spatial theory of voting and the electoral relationship, using a spatial model of voting as a baseline to test the corrupt bargaining hypothesis.
\cite{endersby1992stability} compare candidate positions across time by using the spatial theory of voting.
%Reynolds \cite{reynolds1969spatial} uses spatial voting 
In \cite{rosenthal1977spatial}, the formal spatial models are applied to cross-sectional analysis of district results on the second ballot of French legislative elections, and the results prove the importance of spatial models in empirical analysis.
\cite{ccarkouglu2006spatial} adopt the theoretical framework of the spatial voting model to conduct a multidimensional analysis of the emerging ideological space in Turkey in the years leading up to the November 2002 elections.
%Jesse \cite{jessee2009spatial} uses spatial voting to analyze 2004 Presidential Election.
%Spatial voting models are a fundamental tool in political science and electoral research, providing a comprehensive framework for analyzing the multifaceted nature of electoral politics, voter decision-making, and strategic considerations of candidates and parties. They contribute to a more detailed understanding of how democracies operate and how electoral systems can be designed or reformed to achieve fairer and more representative outcomes.

%{\color{red} Small note: Try to use accents in names when they appear, e.g., \c{C}arko\u{g}lu}

In spatial voting theory, several solution concepts help understand the dynamics of voter preferences, candidate strategies, and the resulting equilibria.
The \textit{core} \citep{schofield1988core} in spatial voting models is a set of policy positions that cannot be defeated by any other position under majority rule. 
A position within the core is stable since no majority of voters prefers another position. 
%The core is commonly empty in multidimensional spaces if there's no single point that satisfies this condition due to the cycling of preferences. 
Ever since \cite{feld1987necessary}, \cite{schofield1978instability} and \cite{mckelvey1979general} showed that an exact core typically does not exist, many solution concepts have been proposed.
The $\epsilon$\textit{-core} \citep{maschler1979geometric,wooders1983epsilon,tovey1993some,tovey1995dynamical} is proposed after allowing the stability conditions to be relaxed slightly. 
The $\epsilon$-core consists of policy positions such that no coalition of voters can deviate and make themselves at least $\epsilon$ better off by choosing another policy position, with $\epsilon$ being a positive but arbitrarily small amount. 
The \textit{yolk} is an important concept in spatial voting theory proposed by \cite{ferejohn1984limiting} and \cite{mckelvey1986covering}. 
%The core is defined by policy stability under majority voting, whereas the yolk is a geometric concept that tries to approximate the center of voter preferences in the absence of a clear core.
%Unlike the core, which identifies stable outcomes that cannot be improved upon by any coalition, the yolk provides a measure related to the instability or the potential for compromise among voters with differing preferences.
Unlike the core, which is defined by policy stability under majority voting, the yolk is a geometric concept that tries to approximate the center of voter preferences in the absence of a clear core.
A yolk is a smallest ball that intersects every median hyperplane, a median hyperplane is a plane that divides the voter population into two equal halves. 
The center of a yolk is a point that, in some sense, represents a compromise position that minimizes the maximum distance to the voters' ideal points. 
It is used to predict the outcomes of majority rule voting in multidimensional policy spaces. 
An \textit{LP (linear programming) yolk} \citep{stone1992limiting,koehler1992limiting} is defined to be the ball with smallest radius that intersects every limiting median hyperplane \citep{miller1989geometry} -- in $\mathbb{R}^k$, a hyperplane is a limiting median hyperplane if it passes through at least $k$ of the ideal points.
The \textit{finagle point} is introduced by \cite{wuffle1989finagle}, it is the point with minimum finagle radius, where the finagle radius is the radius of a circle such that an alternative in the circle can defeat any alternative in the space if the candidate is located at its center. 
When the core is non-empty, all of these solution concepts are consistent, i.e., they all yield the core.
Relationships among these four concepts of core, $\epsilon$-core, yolk and finagle point are investigated in the work of \cite{brauninger2007stability}, \cite{nganmeni2023finagle} and \cite{martin2019dominance}. 
In this paper, we focus on the relationship between the yolk and LP yolk concepts.

Compared with yolk, LP yolk is easier to compute, and the radius is smaller than yolk. 
McKelvey's original introduction of the yolk includes a relatively simple, polynomial sized linear program to compute the LP yolk center and its radius \citep{mckelvey1986covering}.
\cite{Tovey92} presents a polynomial-time algorithm for computing the yolk in fixed dimension --- in two dimensions, it computes the yolk in $O(n^{4.5})$ time.
\cite{berg2018faster} provides an $O(n^{4/3} \log^{1+\epsilon} n)$ time algorithm to compute minimum-radius plurality ball (or alternatively, the yolk).
\cite{gudmundsson2019computing} present a near-linear time algorithms for approximately computing the yolk in the plane, by applying Megiddo’s parametric search technique, breaking the best known upper bound of $O(n^{4/3})$.

In this paper, we provide tight bounds on the approximability of the yolk via the LP yolk.
In general, the relationship between the yolk and LP yolk has an interesting history. 
McKelvey introduces the yolk --- equivalently, the generalized median set --- in \cite{mckelvey1986covering} as the smallest ball intersecting all median hyperplanes. 
It is then straightforward to show that the yolk can be found with via linear programming with an infinite number of constraints where each constraint requires the center of the yolk to be within radius $r$ of the corresponding median hyperplane. 
However, McKelvey claims in his formulation that it suffices to consider only the set of $O(n^k)$ limiting median hyperplanes (i.e., the LP yolk), thus yielding a finite linear program which is polynomial-time solvable. 
However, several years later, Stone and Tovey prove that the set of limiting median hyperplanes is insufficient to determine the yolk \citep{stone1992limiting}, which led to a series of papers comparing the yolk and LP yolk.
\cite{koehler1992limiting} experimentally shows that for a large number of voters that the set of median hyperplanes frequently are sufficient to determine the yolk. 
However, this is partially explained by two papers that show probabilistic convergence as the number of voters increase; Tovey shows that both radii converge to 0 \citep{Tovey2010} and McKelvey and Tovey show that the center and radius of the LP yolk converges to the center and radius of the yolk \citep{mckelvey2010approximation}.
Thus, for a large number of voters in low dimensions, the LP introduced by McKelvey provides a quick, close approximation of the yolk. 
However, many models of spatial voting are built around small committees in two dimensions.
The probabilistic approximation guarantees provided by \cite{Tovey2010} and the observations by \cite{koehler1992limiting} do not address this setting. 
Our paper aims to close this gap on the approximability of the yolk by providing tight bounds comparing the two concepts.

\subsection{Our Contributions}
In this paper, we show that the LP yolk radius is at least 1/2 the size of the yolk radius for an odd number of voters in $\mathbb{R}^2$ and demonstrate that the bound is asymptontically tight. 
We also show that for an odd number of voters in $\mathbb{R}^2$, the LP yolk center can be arbitrarily far away from the yolk center. 
Furthermore, when dimension $k\ge3$ or an even number of voters, we show that the LP yolk can be arbitrarily small relative to the yolk. 
Therefore, the LP yolk can generally be an poor approximation of the yolk.

\section{Notation}
In the spatial model of voting, a voter $v$ is represented by their ideal point $\pi_v$ in a space of dimension $k$. The set of ideal points is denoted by $I\subset \mathbb{R}^k$. 
\begin{definition}
  The set of all hyperplanes is given by ${\cal H} = \{(a, b) \in \mathbb{R}^k \times \mathbb{R}: \lVert a \rVert_2 = 1\}$. The hyperplane (a, b) refers to the set of points $\H(a, b) = \{x \in \mathbb{R}^k: a\cdot x = b \}$.
\end{definition}

Requiring $||a||_2=1$ loses no generality -- since the hyperplane $(a,b)$ is equivalent to the hyperplane $(a,b)/||a||_2$.  
Further, requiring $a$ to be a unit vector has two benefits; first, the distance between a hyperplane $(a,b)$ and a point $x$ is $|a^\intercal x -b|$.  
Second, and more importantly, the set of hyperplanes within $r$ of $x$ will necessarily be compact.

\begin{definition}
    The closed ball with radius $r$ centered at $c \in \mathbb{R}^k$ is denoted by $\B(c,r) = \{x \in \mathbb{R}^k: \lVert x-c \rVert_2 \le r \}$.
\end{definition}

%Example of tangent and non-tangent hyperplane are given in Figure \ref{fig:tangent}.

\begin{definition}
    For each hyperplane $(a, b) \in {\cal H}$, we denote the left and right half spaces as by $\H^{\twohalfcircle[white]{black}}(a, b):= \{x \in \mathbb{R}^k: a\cdot  x \le b \}$ and $\H^{\twohalfcircle[black]{white}}(a, b):= \{x \in \mathbb{R}^k: a\cdot  x \ge b \}$ respectively.
\end{definition}

A hyperplane in $\mathbb{R}^k$ is considered median hyperplane if it contains at least $|I|/2$ ideal points in each of the two closed half spaces it defines. 
When $|I|$ is even, a median hyperplane doesn't necessarily pass through any ideal points. 
The median hyperplane indicates a position of policy balance, where any move away from it in any direction would leave the majority of voters preferring a position closer to the hyperplane. This balance makes it a set of points of significant strategic interest in political competition.

\begin{definition}
    A hyperplane $\H$ is a median hyperplane if $|\H^{\twohalfcircle[white]{black}} \cap I| \ge |I|/2$ and $|\H^{\twohalfcircle[black]{white}} \cap I| \ge |I|/2$. I.e., $\H$ is a median hyperplane if at least half of the ideal points appear on each side of $\H$.
\end{definition}

The yolk in spatial voting theory is the smallest ball that can be drawn in the policy space such that every median hyperplane intersects the ball. The radius of the yolk reflects the degree of political consensus or conflict: a larger yolk indicates a more dispersed voter, making it harder to find a policy position that a stable majority supports.
\begin{definition}
    A yolk is a smallest ball that intersects with all median hyperplanes.
\end{definition}

\begin{definition}
    In $\mathbb{R}^k$, a hyperplane is a limiting median hyperplane if it passes through at least $k$ ideal points.
\end{definition}

\begin{definition}\label{def: LP yolk}
    A LP yolk is a ball with the smallest radius which intersects all limiting median hyperplanes.
\end{definition}

Example of median hyperplanes and limiting median hyperplanes are shown in Figure \ref{fig:limiting}.

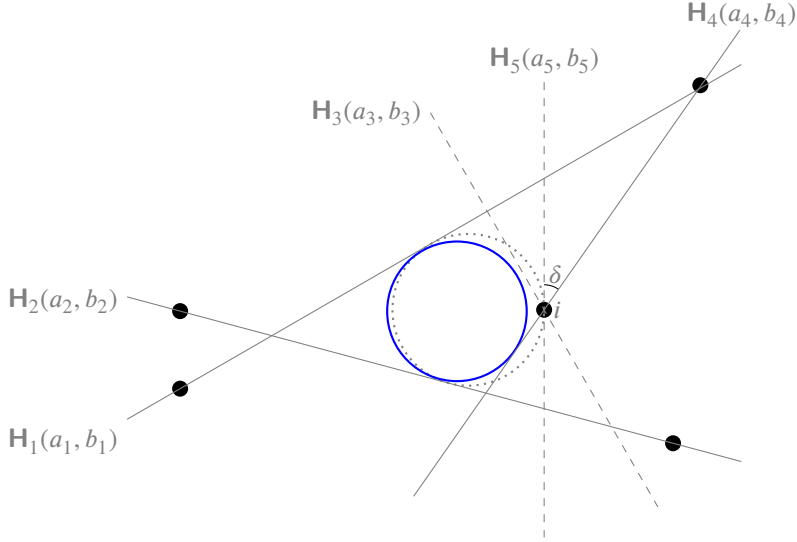
\begin{figure}[!ht]
\centering
\begin{tikzpicture}

\def\a{120}
\def\b{105}
\def\x{3.6}
\def\i{2.7}
\def\e{.5}
\def\y{-4.5}
\def\t{-3.8}
\def\d{30}
\def\k{145}
\def\kk{165}

\pgfmathsetmacro\c{cos(\a)}
\pgfmathsetmacro\s{sin(\a)}
\pgfmathsetmacro\cc{cos(\a/2)/2}
\pgfmathsetmacro\ss{sin(\a/2)/2}
\pgfmathsetmacro\p{-\c/\s*\x+1/\s)}
\pgfmathsetmacro\xx{1/(\c)-\e}
%\pgfmathsetmacro\pp{-\c/\s*\xx+1/\s}
\pgfmathsetmacro\pp{-\c/\s*\y+1/\s}
\pgfmathsetmacro\ppp{-\c/\s*\t+1/\s}

\pgfmathsetmacro\n{cos(\b)}
\pgfmathsetmacro\m{sin(\b)}
\pgfmathsetmacro\qi{\n/\m*\i-1/\m)}
\pgfmathsetmacro\q{\n/\m*\x-1/\m)}
\pgfmathsetmacro\qq{\n/\m*\y-1/\m)}
\pgfmathsetmacro\nn{cos(\b/2)/2}
\pgfmathsetmacro\mm{sin(\b/2)/2}
\pgfmathsetmacro\qqq{\n/\m*\t-1/\m)}

\fill[black] (1,0) circle[radius=3pt];
%\fill[black] (\x,\p) circle[radius=3pt];
\fill[black] (\i,\qi) circle[radius=3pt];

%\node[above right] at (1,0) {(1,0)};
%\node[above right] at (\x,\p) {$x^{+1}$};
%\node[above right] at (\x,-\p) {$x^{-1}$};

\fill[black] (\t,\ppp) circle[radius=3pt];
\fill[black] (\t,\qqq) circle[radius=3pt];
%\fill[black] (-0.5,2.598) circle[radius=3pt];
\fill[black] (3.06,2.96) circle[radius=3pt];

\node[ right] at (1,0) {$i$};

\node[below left] at (\y,\pp) {$\H_1(a_1,b_1)$};
\node[above left] at (\y,-\qq) {$\H_2(a_2,b_2)$};
\node[above] at (3.6,3.6) {$\H_4(a_4,b_4)$};
\node[left] at (-0.5,2.598) {$\H_3(a_3,b_3)$};
\node[above] at (1,3) {$\H_5(a_5,b_5)$};

\draw (\x,\p)--(\y,\pp);
%\draw (\x,-\p)--(\xx,-\pp);
\draw (\x,\q)--(\y,\qq);
\draw[dashed] (1,3)--(1,-3);
\draw[dotted,thick] (0,0) circle[radius=1];
\draw[blue,thick] (-0.15,-0.02) circle[radius=0.92];

%\node at (\cc,\ss) {\large$\alpha$};
%\draw[thick] (.33,0) arc (0:\a:.33);

%\node at (\nn,-\mm) {\large$\beta$};
%\draw[thick] (.25,0) arc (0:-\b:.25);

%\draw[dotted,thick] (1,0)--(0,0);
%\draw[dotted,thick] (\c,\s)--(0,0);
%\draw[dotted,thick] (0,0)--(\n,-\m);

\coordinate (o) at (1,0);
\coordinate (a) at (3.6,3.6);
\coordinate (b) at (1,3);
%\coordinate (b) at (0.25,1.3);
\pic["$\delta$", draw=black, -, angle eccentricity=1.5, angle radius=0.33cm]
    {angle=a--o--b};

%\draw[dashed] (1,3)--(1,-3);
\draw[rotate around={\d:(1,0)}, dashed] (1,3)--(1,-3);
\draw[rotate around={\k:(1,0)}] (1,3)--(1,-4.5);

\end{tikzpicture}\caption{Examples of limiting median hyperplanes and non-limiting median hyperplanes in $2$-dimension space. The black points are voters, the dotted black circle is yolk and the blue circle is LP yolk. The hyperplane $\H_1(a_1,b_1), \H_2(a_2,b_2)$ and $\H_4(a_4,b_4)$ are limiting median hyperplanes, and $\H_3(a_3,b_3), \H_5(a_5,b_5)$ are non-limiting median hyperplanes. $\H_ 3(a_3,b_3)$ rotates $\delta$ with $i$ as the center, that is, $\H_ 4(a_4,b_4)$.
}\label{fig:limiting}
\end{figure}

%exposition about finding the yolk.  (Cite fast schemes). 
%Some paper however, consider only limiting median hyperplanes \cite{}

% LP yolks tend to be faster to find but there is some debate on whether it can well predict the outcomes of voting. 
% The yolk identifies a central area in a multidimensional policy space, reflecting the potential instability or consensus difficulty among voter preferences. 
% A yolk's size, particularly the radius of the yolk, is crucial for understanding the dynamics of majority rule and the challenges in achieving stable policy outcomes in a voting population with diverse preferences. 
% The yolk is necessarily at least a certain size, because a smaller yolk implies that the median hyperplane has more influence on the outcome, leading to more stable and predictable decisions, while larger yolk indicate greater uncertainty.

% {\color{blue} possibly useful: why is the yolk necessarily at least as large?}

\section{Relationship Between Yolks and LP Yolks: The Worst Case}

%{\color{blue}To do: Intro to this section, exposition between results, proof for Theorem \ref{thm:OddR2Far}}.

In this section, we provide families of examples demonstrating that the LP yolk is a poor approximation of the yolk. 
We prove that when the dimension is $k\ge3$, or when both $k=2$ and the number of voters is even, the LP yolk can be arbitrarily small relative to the yolk (Theorem \ref{thm:EvenBad}).
However, we show that for $k=2$ and $|I|$ is odd, that there exists a family of examples where the LP yolk radius is at least 1/2 of the yolk radius (Theorem \ref{thm:OddR2OK}). 
Later, in Section \ref{sec:Odd}, we provide a positive result on the approximately of the yolk by showing that this bound of 1/2 is tight. 
However, in Theorem \ref{thm:OddR2Far} in this section, we show that despite the fact the yolk radius is similar in size, that it can be arbitrarily far from the yolk center, when $k=2$ and the number of voters is odd (Theorem \ref{thm:OddR2Far}).

\begin{theorem}\label{thm:EvenBad}
    Let $I\subset \mathbb{R}^k$. 
    The LP yolk radius can be arbitrarily small relative to the yolk radius when (1) $k\geq 3$ or  (2) $k=2$ and $|I|$ is even. 
\end{theorem}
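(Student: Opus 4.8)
The plan is to exhibit, in each regime, a configuration of ideal points whose limiting median hyperplanes all pass through one common point --- forcing the LP yolk radius to be $0$ --- while the full set of median hyperplanes stays spread out enough to force a positive (indeed unbounded) yolk radius. Since the LP yolk radius will be exactly $0$ against a strictly positive yolk radius, the ratio of radii is $0$, a fortiori arbitrarily small.

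\emph{Case $k=2$, $|I|$ even.} I would place all $n=|I|$ ideal points on the $x_1$-axis, with $n/2$ of them in $[-M-1,-M]$ and $n/2$ in $[M,M+1]$ for a parameter $M>0$, all distinct. Because the points are collinear, every pair determines the same line (the $x_1$-axis), so the $x_1$-axis is the \emph{only} line through at least $k=2$ ideal points; as it contains all of them it is trivially a median hyperplane, hence it is the unique limiting median hyperplane and the LP yolk is a single point on it, of radius $0$. For the yolk I would show that any line meeting the $x_1$-axis at $(c,0)$ with $c\in(-M,M)$ is median: such a line puts the entire left cluster (all with first coordinate $<c$) on one side and the right cluster on the other, i.e. exactly $n/2$ ideal points per side. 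Thus every such gap point lies on median hyperplanes of all orientations, so the yolk must contain the whole segment $[-M,M]\times\{0\}$ and therefore has radius at least $M$; letting $M\to\infty$ gives the separation.

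\emph{Case $k\ge 3$.} I would place all $n\ge k$ ideal points inside the coordinate $2$-plane $P=\{x_3=\dots=x_k=0\}$, in general position within $P$ so that no three are collinear, chosen so the planar configuration has a positive two-dimensional yolk radius (equivalently, an empty core --- e.g. a regular $n$-gon). The key observation is that a hyperplane not containing $P$ meets $P$ in an affine subspace of dimension at most $1$, hence contains at most two ideal points; since a limiting median hyperplane must contain at least $k\ge 3$ of them, every limiting median hyperplane contains $P$, in particular passes through the origin, so again the LP yolk has radius $0$. To bound the yolk from below it suffices to restrict to the ``vertical'' hyperplanes $\{a_1x_1+a_2x_2=b\}$: such a hyperplane is median in $\mathbb{R}^k$ exactly when the line $\{a_1u+a_2v=b\}$ is a median line of the planar set, and the distance from a center $(c_1,\dots,c_k)$ to it equals the planar distance from $(c_1,c_2)$ to that line. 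Hence any ball meeting all of these has radius at least the planar yolk radius, which is positive, and scaling the configuration makes the yolk radius arbitrarily large while the LP yolk radius stays $0$.

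The easy part in both cases is identifying the limiting median hyperplanes, which is a pure collinearity/coplanarity count. The main obstacle is the yolk lower bound: in the planar case it requires the orientation bookkeeping verifying that \emph{every} line through a gap point is median, and in the case $k\ge 3$ the crux is the reduction showing the $k$-dimensional yolk radius dominates the planar yolk radius of the embedded set, together with producing a planar configuration that simultaneously has no three collinear points and an empty core.
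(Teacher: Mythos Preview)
Your proposal is correct and follows the same strategy as the paper's proof: embed a lower-dimensional configuration with positive yolk radius into a proper flat of $\mathbb{R}^k$, so that every limiting median hyperplane must contain that flat and hence a common point (giving LP yolk radius $0$), while the median hyperplanes lifted from the embedded configuration keep the yolk radius positive. The only variation is that for $k\ge 3$ the paper appends a single zero coordinate to a $(k{-}1)$-dimensional set whereas you always embed a planar one --- and you are more careful than the paper about the general-position hypothesis needed to rule out extraneous limiting hyperplanes.
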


\begin{proof}
    Consider any set of ideal points $I'\subset \mathbb{R}^{k-1}$ with yolk radius $r>0$. 
    Such a set of ideal points exists since (1) the core is almost always empty when $k-1\geq 2$ and (2) there is almost never a unique median when there are an even number of voters
    (see e.g., \cite{martin2019dominance}).

    Let $I=\{(\pi,0): \pi\in I'\}\subset \mathbb{R}^k$.
    It is straightforward to verify that if $\{x\in \mathbb{R}^{k-1}: \sum_{i=1}^{k-1} a_i\cdot x_i = b\}$ is a median hyperplane for $I'$, then $\{x\in \mathbb{R}^{k}: \sum_{i=1}^{k-1} a_i\cdot x_i +0\cdot x_k= b\}$ is a median hyperplane for $I$.
    The yolk is at least large enough to cover all of these median hyperplanes. 
    Following from McKelvey's linear programming description of the yolk, the yolk radius is at least $r$. 
    
    Since all points kth coordinate is zero, they are contained in the hyperplane $\{x\in \mathbb{R}^k: x_k=0\}$, and the only possible limiting median hyperplane is $\{x\in \mathbb{R}^k: x_k=0\}$ implying that the LP yolk radius is 0. 
    Therefore the LP yolk radius can be arbitrarily small relative to the yolk radius. 
\end{proof}

While our construction in Theorem \ref{thm:EvenBad} is degenerate, there exists non-degenerate instances where the LP yolk radius can be arbitrarily small.
Instead of appending each ideal point with $0$, append each ideal point with $\epsilon$-noise.  
As $\epsilon\to 0$, we recover the instance in the proof of Theorem \ref{thm:EvenBad}. 
Thus, when $k\geq 3$ or when $k=2$ and $|I|$ is even, there is measurable set of instances where the LP yolk is significantly smaller than the corresponding yolk.  
For completeness, we provide a non-degenerate example for $k=2$ in Proposition \ref{prop:NonDegen}.

\begin{proposition}[First observed in \cite{stone1992limiting}]\label{prop:NonDegen}
    Let $I\subset \mathbb{R}^2$ where $|I|$ is even. 
    The LP yolk radius can be arbitrarily small relative to the yolk radius even when no 3 points are colinear.
\end{proposition}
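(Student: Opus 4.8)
The plan is to realise the degenerate construction behind Theorem~\ref{thm:EvenBad} as the limit of genuinely two-dimensional configurations by placing the ideal points on a parabola, which automatically forbids three collinear points. Concretely, I would fix an integer $m\ge 1$, a separation $L>0$, and a curvature parameter $\epsilon>0$, and let $I=I(\epsilon)\subset\mathbb{R}^2$ be the $2m$ points $(x_i,\epsilon x_i^2)$ whose $x$-coordinates are the $m$ values $-L-(m-1),\dots,-L$ (a ``left cluster'') together with the $m$ values $L,\dots,L+(m-1)$ (a ``right cluster''). Since no three points of a parabola are collinear, $I$ meets the nondegeneracy hypothesis for every $\epsilon>0$, and as $\epsilon\to 0$ the points collapse onto the $x$-axis, recovering the degenerate instance of Theorem~\ref{thm:EvenBad}. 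The entire argument then reduces to showing that the yolk radius stays bounded below while the LP yolk radius vanishes with $\epsilon$.

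For the lower bound on the yolk I would exhibit the dangerous pencil of \emph{non}-limiting median hyperplanes. Every vertical line $x=t$ with $t\in(-L,L)$ keeps the $m$ left points strictly on one side and the $m$ right points strictly on the other, so it is a median hyperplane, yet it passes through no ideal point and is therefore not limiting. A ball $\B(c,\rho)$ meets the line $x=t$ exactly when $|c_x-t|\le\rho$, so meeting all of them forces $\rho\ge L$ (the quantity $\max(|c_x+L|,|c_x-L|)$ is at least $L$ for every $c_x$). Hence the yolk radius is at least $L$, independently of $\epsilon$.

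The heart of the argument is the upper bound on the LP yolk radius, where I must enumerate every limiting median hyperplane, i.e.\ every line through two ideal points that is also median. Because $y=\epsilon x^2$ is convex, a chord lies above the curve between its endpoints and below it outside, so I can count ideal points on each side of a two-point line purely from their $x$-coordinates relative to the two endpoints. Doing this bookkeeping in the three cases (both endpoints in the left cluster, both in the right cluster, one in each), and remembering that for even $|I|$ the two endpoints count in \emph{both} closed half-spaces, shows that the only median two-point lines are the chord through the extreme pair of the left cluster, the chord through the extreme pair of the right cluster, and the ``cross'' chords joining a left point to a right point with a balanced count of interior points. Estimating the line equation $y=\epsilon(a+b)x-\epsilon ab$ over the range $x\in[-L,L]$ then shows that, with $L$ and $m$ fixed, all of these lines lie within distance $O(\epsilon)$ of the point $(0,\epsilon L^2)$; taking that point as the candidate center bounds the LP yolk radius by $O(\epsilon)$. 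Combining the two bounds yields a ratio of yolk radius to LP yolk radius of at least $L/(C\epsilon)$ for a constant $C=C(L,m)$, which tends to infinity as $\epsilon\to 0$.

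I expect the main obstacle to be this third step: the exhaustive classification of the two-point median lines and the verification that none of them escapes the thin band of height $O(\epsilon)$ about the $x$-axis. One must be careful that no cross chord rises to large height and that the closed half-space counting is carried out correctly in the even case. The convexity of the parabola is precisely what keeps the casework tractable, and it is simultaneously what delivers the nondegeneracy the proposition demands.
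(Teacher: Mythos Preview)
Your approach is correct and genuinely different from the paper's. The paper works with a single explicit six-point configuration $\{(\pm 2,\pm\epsilon),(\pm 1,0)\}$: it observes that the two vertical lines $x=\pm 1$ are (non-limiting) median hyperplanes, forcing yolk radius at least $1$, and then, exploiting the fourfold symmetry, simply lists all eleven limiting median lines and computes that the farthest of them from the origin is at distance $\epsilon/\sqrt{1+\epsilon^2}$. Your parabola construction replaces this one-off enumeration by a structural argument: convexity of $y=\epsilon x^2$ lets you classify the limiting median chords uniformly (the extreme pair of each cluster, plus the cross chords with $p+q\in\{m-2,m-1,m\}$ in the obvious indexing), and since the chord through $(a,\epsilon a^2)$ and $(b,\epsilon b^2)$ is $y=\epsilon(a+b)x-\epsilon ab$, its distance from the origin is at most $\epsilon|ab|\le\epsilon(L+m-1)^2$, giving the $O(\epsilon)$ LP-yolk bound directly. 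The payoff of your route is that it works for every even cardinality $2m$ at once and makes the mechanism (every limiting median line has slope and intercept of order $\epsilon$) transparent; the paper's route is shorter to verify because the eleven lines can be read off the picture. The obstacle you flag---the exhaustive classification---is real but entirely tractable via the convexity argument you sketch; for the distance estimate, it is cleaner to bound the distance from the origin to each chord than to ``estimate over $x\in[-L,L]$''.
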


\begin{proof}
We examine the set of six ideal points given by $I=\{(\pm 2, \pm \epsilon),(\pm 1, 0)\}$ where $\epsilon>0$.  
We show the yolk (depicted in Figure \ref{fig:yolk2}) has radius at least  $1$ while the LP yolk (Figure \ref{fig:lpyolk2}) has radius $\frac{\epsilon}{1+\epsilon^2}\to 0$ as $\epsilon \to 0$ thereby completing the proposition.
We remark that our construction with $\epsilon=1/2$ matches the construction given in \cite{stone1992limiting}, which establishes that the yolk and LP yolk are distinct concepts. 
Further, \cite{stone1992limiting} similarly acknowledges that the LP yolk becomes arbitrarily small relative to the yolk as the points are forced to be colinear.

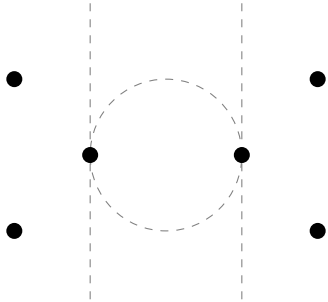
\begin{figure}[!ht]
\centering
\begin{tikzpicture}
\draw[dashed] (0,0) circle[radius=1];
\draw[dashed] (1,2)--(1,-2);
\draw[dashed] (-1,2)--(-1,-2);

\fill[black] (1,0) circle[radius=3pt];
\fill[black] (-1,0) circle[radius=3pt];
\fill[black] (2,1) circle[radius=3pt];
\fill[black] (2,-1) circle[radius=3pt];
\fill[black] (-2,1) circle[radius=3pt];
\fill[black] (-2,-1) circle[radius=3pt];

%\draw (2,1)--(-2,-.3333);
%\draw (2,-1)--(-2,.3333);
%\draw (-2,1)--(2,-.3333);
%\draw (-2,-1)--(2,.3333);
%
%\draw (2,1)--(-2,-1);
%\draw (2,-1)--(-2,1);
%\draw (2,0)--(-2,0);
%
%\draw (2,1)--(-1,-2);
%\draw (-2,1)--(1,-2);
%\draw (2,-1)--(-1,2);
%\draw (-2,-1)--(1,2);
\end{tikzpicture}\caption{A set of non-degenerate ideal points yielding an arbitrarily small LP Yolk.}\label{fig:yolk2}
\end{figure}

Both $\{x: x_1\geq 1\}$ and $\{x: x_1\leq -1\}$ are median hyperplanes and any ball containing both median hyperplanes must have radius at least one. 
Therefore the yolk radius is at least one.
We remark that the yolk radius is exactly one, but for the proposition, it suffices to only show the radius is at least one. 

We now show that the LP yolk radius is $\frac{\epsilon}{\sqrt{1+\epsilon^2}}$.  
By symmetry, $\B((0,0),r)$ is an LP yolk where $r$ is the maximum distance to a limiting median hyperplane.  
There are 11 limiting median hyperplanes which are given in Figure \ref{fig:lpyolk2}.

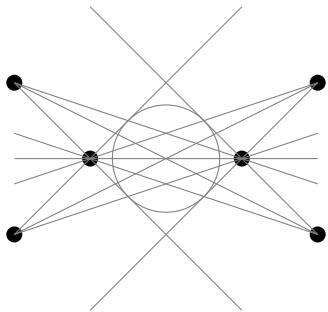
\begin{figure}[!ht]
\centering
\begin{tikzpicture}
%\draw[dashed] (0,0) circle[radius=1];
%\draw[dashed] (1,2)--(1,-2);
%\draw[dashed] (-1,2)--(-1,-2);

\fill[black] (1,0) circle[radius=3pt];
\fill[black] (-1,0) circle[radius=3pt];
\fill[black] (2,1) circle[radius=3pt];
\fill[black] (2,-1) circle[radius=3pt];
\fill[black] (-2,1) circle[radius=3pt];
\fill[black] (-2,-1) circle[radius=3pt];

\draw (2,1)--(-2,-.3333);
\draw (2,-1)--(-2,.3333);
\draw (-2,1)--(2,-.3333);
\draw (-2,-1)--(2,.3333);

\draw (2,1)--(-2,-1);
\draw (2,-1)--(-2,1);
\draw (2,0)--(-2,0);

\draw (2,1)--(-1,-2);
\draw (-2,1)--(1,-2);
\draw (2,-1)--(-1,2);
\draw (-2,-1)--(1,2);

\draw (0,0) circle[radius=.707];
\end{tikzpicture}\caption{Set of limiting median hyperplanes and the resulting LP Yolk.}\label{fig:lpyolk2}
\end{figure}

By symmetry, it suffices to consider only the limiting median hyperplane $\{x: x_2=0\}$ and the limiting median hyperplanes going through $(2,\epsilon)$ --- $\{x: \epsilon\cdot  x_1-x_2=\epsilon\}$, $\{x: - x_1+3x_2=3\epsilon\}$ and $\{x: \epsilon \cdot x_1-2x_2=0\}$. The distance from $(0,0)$ to each of these median hyperplanes respectively is $0, \frac{\epsilon}{\sqrt{1+\epsilon^2}}$, $\frac{\epsilon}{\sqrt{9+\epsilon^2}}$, and $0$. 
Therefore the LP yolk has radius $\frac{\epsilon}{\sqrt{1+\epsilon^2}}\to 0$ as $\epsilon\to 0$. 
Since this hold for all $\epsilon$, the LP yolk can be arbitrarily small relative to the size of the yolk. 
\end{proof}

Despite these negative results in the general setting, the spatial model voting is often used in two-dimensional settings to address small, typically odd-sized committees --- odd-sized committees avoid ties. 
Thus, it remains important to also understand when $I\subset \mathbb{R}^2$ and $|I|$ is odd. 
We begin by showing that there exists a family of examples where the LP yolk radius can be half the size of the yolk (Theorem \ref{thm:OddR2OK}). 
Later in Theorem \ref{thm:MainHalf}, we show this bound is tight and the LP yolk radius is a reasonable approximation of the yolk radius in this setting. 
However, we also show that the LP yolk center can be arbitrarily far away from the yolk center even when their radii are similar (Theorem \ref{thm:OddR2Far}). 

\begin{theorem}\label{thm:OddR2OK}
    When $|I|$ is odd in $\mathbb{R}^2$, there exists a set of ideal points where the ratio between the LP yolk radius is at most $1/2+\epsilon$ the size of the yolk radius for all $\epsilon >0$. 
\end{theorem}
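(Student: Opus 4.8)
The plan is to exhibit an explicit one-parameter family $I_t\subset\mathbb{R}^2$ of odd-sized point sets whose yolk radius $R(t)$ and LP yolk radius $\rho(t)$ satisfy $\rho(t)/R(t)\to 1/2$, so that choosing $t$ close to the limit makes the ratio at most $1/2+\epsilon$. The design principle is to \emph{separate} the two solution concepts: I want the LP yolk to be pinned by a cluster of limiting median lines that all pass within distance $\rho$ of a common center $c$, while the non-limiting median lines --- exactly the ones McKelvey's linear program omits --- reach roughly twice as far, forcing the true yolk to radius $\approx 2\rho$. Concretely, I would arrange $I_t$ to be invariant under a finite symmetry group whose only fixed point is $c$. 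Then the set of optimal yolk centers (and likewise LP yolk centers) is invariant under the group, and by strict convexity of the relevant min--max problem this set is a single point; hence both balls are centered at $c$, and each radius reduces to the maximal distance from $c$ to a median line (respectively, to a limiting median line).

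The main structural tool is the geometry of median lines for odd $|I|=2m+1$: every median line passes through at least one ideal point, and a median line through a single ideal point $p$ splits the remaining $2m$ points into exactly $m$ on each open side. Thus the median lines through a fixed $p$ form an angular pencil whose two extreme members are \emph{limiting} (a second ideal point is acquired just as a point crosses the line), while all interior members are non-limiting. As the line rotates about $p$, its distance to $c$ traces $|c-p|\,|\cos\psi|$, where $\psi$ is the angle between the line and the direction perpendicular to $cp$; so a pencil that straddles this perpendicular direction \emph{bulges}, its farthest member reaching nearly $|c-p|$ even though its two limiting endpoints are much closer to $c$. I would choose the balancing points so that through each relevant pivot $p$, with $|c-p|\approx 2\rho$, the bulge attains distance $\to 2\rho$ while the two bounding limiting lines sit at distance $\to\rho$.

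Carrying this out requires (i) computing $\rho(t)$ by enumerating the finitely many limiting median lines and checking that the farthest lies at distance $\to\rho$ from $c$, and (ii) computing $R(t)$ by showing the farthest median line overall is a bulging non-limiting line at distance $\to 2\rho$. Step (ii) is the crux and the main obstacle, since it is a global maximization over a \emph{continuum} of non-limiting lines: I must certify that no pivot's pencil, in any orientation, produces a median line farther than $2\rho$. I would handle this by a case analysis over the pencils of the finitely many ideal points, bounding each pencil's bulge with the sinusoidal distance formula together with the exact $m$--$m$ balance condition that pins down each pencil's angular extent. A second subtlety, which simultaneously explains why $1/2$ (and not $0$) is optimal here, is that the parity condition forbids two parallel single-point median lines on opposite sides of $c$: such a pair would force more than $m$ points strictly on one side. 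Consequently the yolk cannot be pinned by an antipodal pair of far non-limiting lines, and $c$ must instead be pinned by several non-parallel bulging lines whose outward normals positively span the plane. Once both radii are controlled, letting the parameter $t$ drive the limiting endpoints to distance exactly half the bulge yields $\rho(t)/R(t)\to 1/2$, which establishes the theorem and matches the complementary lower bound of Theorem \ref{thm:MainHalf}.
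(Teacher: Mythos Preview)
Your proposal describes a strategy but never executes it: you write ``I would arrange $I_t$'', ``I would choose the balancing points'', ``I would handle this by a case analysis'', yet you never specify a single coordinate or verify a single inequality. For an existence theorem the content of the proof \emph{is} the example; without explicit points $I_t$ and explicit computations of (or bounds on) $R(t)$ and $\rho(t)$, there is no proof here. Your step (ii) --- controlling the full continuum of non-limiting median lines --- is, as you yourself say, the crux, and you offer only a plan to do a case analysis, not the analysis itself. It is also not clear that a configuration with a point-fixing symmetry group (so that both balls are forced to be concentric) can actually attain the ratio $1/2$; you assert the balancing can be done but do not exhibit it.

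Your intended approach also differs substantially from the paper's and is harder. You want enough symmetry to force the two balls to share a center, reducing each radius to a maximal distance from that center. The paper instead takes five explicit points with only a single reflection symmetry; the yolk and LP yolk are \emph{not} concentric --- in fact the same family is reused in Theorem~\ref{thm:OddR2Far} to show the centers can be arbitrarily far apart. Because the centers differ, the paper never computes either radius exactly: it \emph{lower}-bounds the yolk radius by exhibiting three specific median lines (two tangent limiting lines and one non-limiting vertical line) that force radius $\geq 1$, and it \emph{upper}-bounds the LP yolk radius by exhibiting one specific ball tangent to the six limiting median lines, whose radius tends to $1/2$ as the parameters $\alpha\to\pi/2$ and $w\to\infty$. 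The ratio of these bounds gives the theorem. This completely sidesteps the global maximization over the continuum of non-limiting lines that you identify as your main obstacle. If you wish to salvage your plan you must at minimum write down a concrete $I_t$ and carry out both computations; but the paper's route via one-sided bounds is shorter and avoids precisely the step you flagged as hardest.
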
 

\begin{proof}
Consider the five ideal points $x^0=(1,0), x^{\pm 1}=\left(w,\pm \frac{1-\cos(\alpha)w}{\sin(\alpha)}\right), x^{\pm 2}=\left(\frac{1}{\cos(\alpha)}-\epsilon,\pm\frac{\epsilon \cos(\alpha)}{\sin(\alpha)}\right)$ for $\alpha\in \left(\frac{\pi}{2},\pi\right)$, $w>1$ and $\epsilon$ small as shown in Figure \ref{fig:yolkodd}.

\def\a{97}
\def\x{5}
\def\e{1}

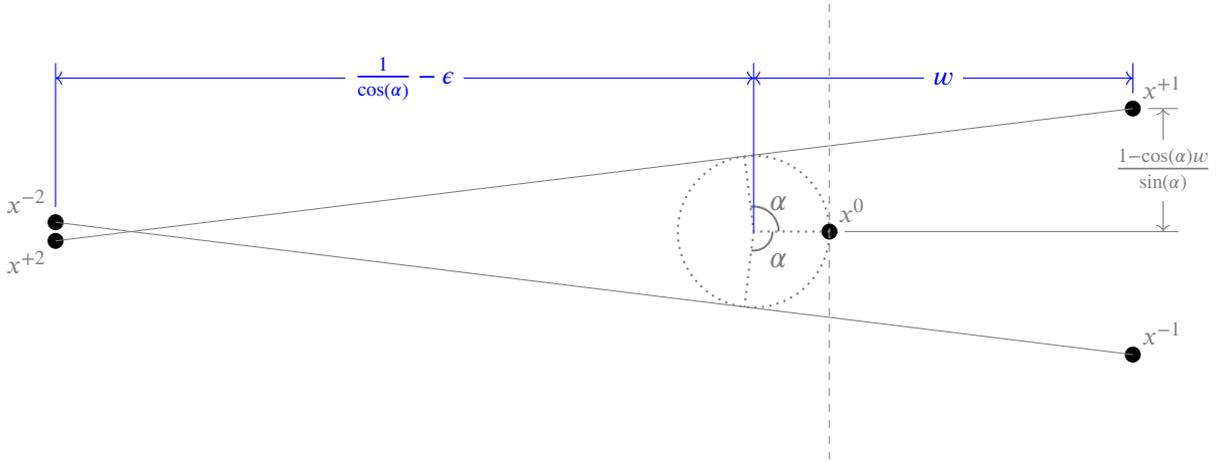
\begin{figure}[!ht]
\centering
\begin{tikzpicture}

\pgfmathsetmacro\c{cos(\a)}
\pgfmathsetmacro\s{sin(\a)}
\pgfmathsetmacro\cc{cos(\a/2)/2}
\pgfmathsetmacro\ss{sin(\a/2)/2}
\pgfmathsetmacro\p{-\c/\s*\x+1/\s)}
\pgfmathsetmacro\xx{1/(\c)-\e}
\pgfmathsetmacro\pp{-\c/\s*\xx+1/\s}

\fill[black] (1,0) circle[radius=3pt];
\fill[black] (\x,\p) circle[radius=3pt];
\fill[black] (\x,-\p) circle[radius=3pt];

\node[above right] at (1,0) {$x^0$};
\node[above right] at (\x,\p) {$x^{+1}$};
\node[above right] at (\x,-\p) {$x^{-1}$};

\fill[black] (\xx,\pp) circle[radius=3pt];
\fill[black] (\xx,-\pp) circle[radius=3pt];

\node[below left] at (\xx,\pp) {$x^{+2}$};
\node[above left] at (\xx,-\pp) {$x^{-2}$};

\draw (\x,\p)--(\xx,\pp);
\draw (\x,-\p)--(\xx,-\pp);

\draw[dotted,thick] (0,0) circle[radius=1];

\node at (\cc,\ss) {\large$\alpha$};
\draw[thick] (.33,0) arc (0:\a:.33);

\node at (\cc,-\ss) {\large$\alpha$};
\draw[thick] (.25,0) arc (0:-\a:.25);

\draw[dotted,thick] (1,0)--(0,0);
\draw[dotted,thick] (\c,\s)--(0,0);
\draw[dotted,thick] (0,0)--(\c,-\s);

%\fill[white] (0,0) circle[radius=2pt]; 
%\draw (0,0) circle[radius=2pt]; 
%\node[left] at (0,0) {$(0,0)$}; 

\draw[blue] (0,0)--($(0,\p+.6)$);
\draw[blue] ($(\x,\p+.2)$)--($(\x,\p+.6)$);
\draw[<->,blue] ($(0,\p+.4)$)--($(\x,\p+.4)$) node[midway, fill=white] {$w$} ;

\draw[<->,blue] ($(0,\p+.4)$)--($(\xx,\p+.4)$) node[midway, fill=white] {$\frac{1}{\cos(\alpha)}-\epsilon$} ;
\draw[blue] ($(\xx,-\pp+.2)$)--($(\xx,\p+.6)$);

\draw ($(\x+.2,\p)$)--($(\x+.6,\p)$);
\draw ($(1.2,0)$)--($(\x+.6,0)$);
\draw[<->] ($(\x+.4,\p)$)--($(\x+.4,0)$) node[midway, fill=white] {$\frac{1-\cos(\alpha)w}{\sin(\alpha)}$} ;

\draw[dashed] (1,3)--(1,-3);
\end{tikzpicture}\caption{Set of ideal points and the yolk yielding a ratio of $\frac{1}{2}+\epsilon$.}\label{fig:yolkodd}
\end{figure}

The ideal points are selected such that $x^0$ is the right most point on the unit circle $B((0,0),1)$, the lines $\overline{x^{+1}x^{+2}}$ and $\overline{x^{-1}x^{-2}}$ are tangent to the unit circle at $(\cos(\alpha),\sin(\alpha))$ and $(\cos(\alpha),-\sin(\alpha))$ respectively. 
By construction, $\overline{x^{+1}x^{-2}}$ and $\overline{x^{-1}x^{+2}}$ are limiting median lines while the dotted line $\{x: x_2=1\}$ is a non-limiting median line and therefore the yolk radius is at least one.  
It also straightforward to verify that the unit circle $\B((0,0),1)$ is the unique yolk and has radius one.

There are $6$ limiting median lines, $\overline{x^{+1}x^{+2}}, \overline{x^{-1}x^{-2}}, \overline{x^{+1}x^{0}}, \overline{x^{-1}x^{0}}, \overline{x^{+2}x^{0}},$ and $\overline{x^{-2}x^{0}}$ as depicted in Figure \ref{fig:LB.Limiting}.

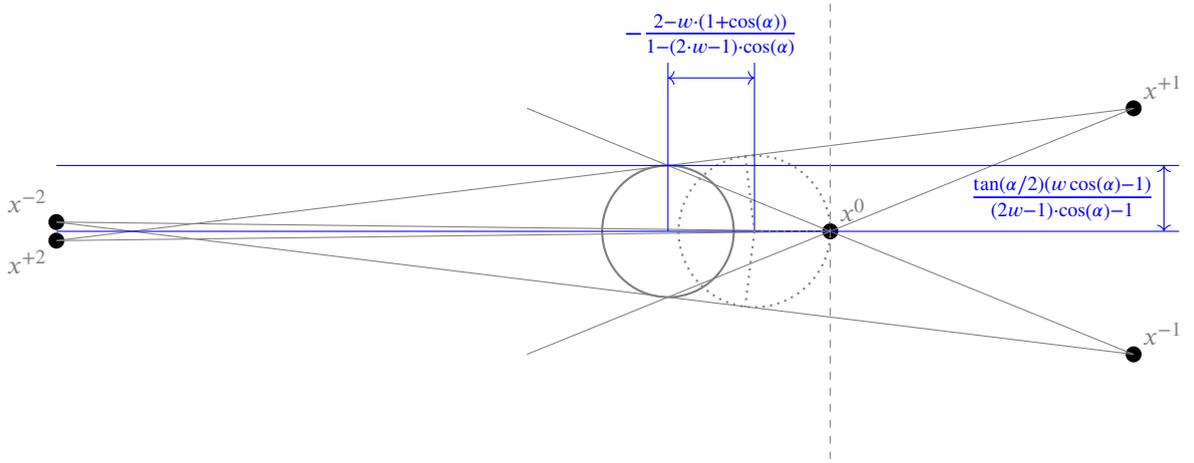
\begin{figure}[!ht]
\centering
\begin{tikzpicture}

\pgfmathsetmacro\c{cos(\a)}
\pgfmathsetmacro\s{sin(\a)}
\pgfmathsetmacro\cc{cos(\a/2)/2}
\pgfmathsetmacro\ss{sin(\a/2)/2}
\pgfmathsetmacro\p{-\c/\s*\x+1/\s)}
\pgfmathsetmacro\xx{1/(\c)-\e}
\pgfmathsetmacro\pp{-\c/\s*\xx+1/\s}

\pgfmathsetmacro\cc{(2-\x*(1+cos(\a)))/(1-(2*\x-1)*cos(\a))}
\pgfmathsetmacro\rr{(tan(\a/2)*(\x*cos(\a)-1))/(2*\x*cos(\a)-cos(\a)-1)}

\fill[black] (1,0) circle[radius=3pt];
\fill[black] (\x,\p) circle[radius=3pt];
\fill[black] (\x,-\p) circle[radius=3pt];

\node[above right] at (1,0) {$x^0$};
\node[above right] at (\x,\p) {$x^{+1}$};
\node[above right] at (\x,-\p) {$x^{-1}$};

\fill[black] (\xx,\pp) circle[radius=3pt];
\fill[black] (\xx,-\pp) circle[radius=3pt];

\node[below left] at (\xx,\pp) {$x^{+2}$};
\node[above left] at (\xx,-\pp) {$x^{-2}$};

\draw (\x,\p)--(\xx,\pp);
\draw (\x,-\p)--(\xx,-\pp);

\draw (\x,\p)-- (2-\x,-\p);
\draw (\x,-\p)-- (2-\x,\p);

\draw (\xx,\pp)--(1,0)--(\xx,-\pp);

\draw[dotted,thick] (0,0) circle[radius=1];

\draw[thick] (\cc,0) circle[radius=\rr];

\draw[blue] (0,0)--($(0,\p+.6)$);
\draw[<->, blue] ($(0,\p+.4)$)--($(\cc,\p+.4)$) node[above, midway,yshift=1ex] {$-\frac{2-w\cdot\left(1+\cos\left(\alpha\right)\right)}{1-\left(2\cdot w-1\right)\cdot\cos\left(\alpha\right)}$} ;
\draw[blue] ($(\cc,0)$)--($(\cc,\p+.6)$);

\draw[blue] ($(\xx,\rr)$)--($(\x+.6,\rr)$);
\draw[blue] ($(\xx,0)$)--($(\x+.6,0)$);
\draw[<->,blue] ($(\x+.4,\rr)$)--($(\x+.4,0)$) node[left, midway] {$\frac{\tan\left(\alpha/2\right)\left(w\cos\left(\alpha\right)-1\right)}{(2w-1)\cdot\cos\left(\alpha\right)-1}$} ;

\draw[dotted,thick] (1,0)--(0,0);
\draw[dotted,thick] (\c,\s)--(0,0);
\draw[dotted,thick] (0,0)--(\c,-\s);

\draw[dashed] (1,3)--(1,-3);
\end{tikzpicture}\caption{Set of limiting median lines and a ball that intersects them all. 
The measurements indicate the distance between the two balls and the height of the new ball that intersects all limiting median hyperplanes. }\label{fig:LB.Limiting}
\end{figure}

The equations of the lines $\overline{x^{+1}x^{+2}}$ and $\overline{x^{-1}x^{0}}$ are given by
\begin{align*}
    \sin(\alpha)\cdot x_2 &= - \cos(\alpha)\cdot x_1 +1 \tag{$\overline{x^{+1}x^{+2}}$}\\
    \left(w-1\right)\cdot\sin\left(\alpha\right)\cdot x_2&=-\left(1-w\cdot\cos\left(\alpha\right)\right)\left(x_1-1\right) \tag{$\overline{x^{-1}x^{0}}$}
\end{align*}
which intersect at 
\begin{align*}
    (c,r):=\left( \frac{2-w\cdot\left(1+\cos\left(\alpha\right)\right)}{1-\left(2\cdot w-1\right)\cdot\cos\left(\alpha\right)},  \frac{\tan\left(\alpha/2\right)\left(w\cos\left(\alpha\right)-1\right)}{(2w-1)\cdot\cos\left(\alpha\right)-1} \right).  \tag{Intersection of $\overline{x^{+1}x^{+2}}$ and $\overline{x^{-1}x^{0}}$}
\end{align*}
Therefore, the ball centered at $(c, 0)$ with radius $r$ intersects both of these median lines.  
Symmetrically, the ball also intersects $\overline{x^{-1}x^{-2}}$ and $\overline{x^{+1}x^{0}}$. 
Finally, for sufficiently small values of $\epsilon$, the ball also trivially intersects $\overline{x^{\pm 2}x^0}$. 
Thus, $\B(c,r)$ is a ball intersecting all limiting median hyperplanes and the LP yolk has radius at most $r$. 
Since the yolk radius is one, the LP yolk radius is at most $r$ relative to the yolk radius.  
In particular for $w=-\frac{k}{\cos(\alpha)}$ for $k\in \mathbb{R}_{>0}$ where $\frac{k+1}{2k+1}<1/2+\epsilon$,
\begin{align*}
    r = \frac{\tan\left(\alpha/2\right)\left(w\cos\left(\alpha\right)-1\right)}{(2w-1)\cdot\cos\left(\alpha\right)-1}=\frac{(k+1)\cdot \tan(\alpha/2)}{2k+1+\cos(\alpha)}  \to \frac{k+1}{2k+1} \ as \ \alpha\to\pi/2.
\end{align*}
yielding the statement of the theorem.  
\end{proof}

While the bound of 1/2, which we later show is tight, seems promising, the center of LP yolk can be arbitrarily far from the center of yolk.
This suggests that the LP yolk, even in this simple case, can be a poor approximation of the yolk.

\begin{theorem}\label{thm:OddR2Far}
    Let $I\subset \mathbb{R}^2$ where $|I|$ is odd. 
    Then the LP yolk can be arbitrarily far from the yolk (relative to the yolk radius). 
\end{theorem}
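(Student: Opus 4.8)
The plan is to reuse the five-point family from the proof of Theorem \ref{thm:OddR2OK} and show that the same limit $\alpha \to \pi/2$ that drives the radius ratio down to $1/2$ simultaneously pushes the LP yolk center off to infinity along the $x$-axis, while the yolk stays pinned at $\B((0,0),1)$. Concretely, I would fix $w = -k/\cos(\alpha)$ with, say, $k=1$, so that by Theorem \ref{thm:OddR2OK} the yolk is the unit ball centered at the origin (radius $1$) and the LP yolk radius $\rho$ satisfies $\rho \le r \to \frac{k+1}{2k+1} = \tfrac{2}{3}$ as $\alpha \to \pi/2$. In particular I may take $\alpha$ close enough to $\pi/2$ that $\rho \le \tfrac{3}{4} < 1$. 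The only facts borrowed from the earlier result are the identification of the yolk as $\B((0,0),1)$ and the upper bound $\rho \le r$, which holds because the LP yolk is the \emph{smallest} ball meeting all limiting median lines and Theorem \ref{thm:OddR2OK} already exhibits one of radius $r$.

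The core of the argument is a lower bound on how far left the center of any ball of radius below $1$ that meets all the limiting median lines must sit. I single out the two limiting median lines $\overline{x^{+1}x^{+2}}$ and $\overline{x^{-1}x^{-2}}$, which by construction are the tangents to the unit circle at $(\cos(\alpha), \pm\sin(\alpha))$; their equations are $\cos(\alpha)\, x_1 + \sin(\alpha)\, x_2 = 1$ and $\cos(\alpha)\, x_1 - \sin(\alpha)\, x_2 = 1$, each carrying a unit normal. Writing the LP yolk as $\B((t,s),\rho)$, the requirement that it meet both lines gives $|\cos(\alpha)\, t + \sin(\alpha)\, s - 1| \le \rho$ and $|\cos(\alpha)\, t - \sin(\alpha)\, s - 1| \le \rho$. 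Adding these and invoking the triangle inequality eliminates $s$ and yields $|\cos(\alpha)\, t - 1| \le \rho$. Since $\cos(\alpha) < 0$ and $\rho < 1$, we get $\cos(\alpha)\, t \ge 1-\rho > 0$, which forces $t \le (1-\rho)/\cos(\alpha) < 0$, hence $|t| \ge (1-\rho)/|\cos(\alpha)| \ge (1-r)/|\cos(\alpha)| \ge \tfrac{1}{4}\,/\,|\cos(\alpha)|$.

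Finally I would let $\alpha \to \pi/2$: then $|\cos(\alpha)| \to 0$, so $|t| \to \infty$. Since the distance from the LP yolk center $(t,s)$ to the yolk center $(0,0)$ is at least $|t|$ while the yolk radius stays equal to $1$, the ratio of the center separation to the yolk radius diverges. As this holds for every $\alpha$ sufficiently near $\pi/2$, the LP yolk center is arbitrarily far from the yolk center relative to the yolk radius, establishing the theorem. Note that the bound applies to every minimal-radius ball meeting the limiting median lines, so the conclusion is insensitive to whether the LP yolk center is unique.

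I expect the main obstacle to be the geometric observation that anchors the whole argument: recognizing that the two tangent limiting median lines $\overline{x^{\pm1}x^{\pm2}}$ meet far to the left at $(1/\cos(\alpha),0)\to(-\infty,0)$ and enclose an increasingly narrow wedge, so that keeping the radius bounded below $1$ forces the center to track that intersection out to $-\infty$. Once those two lines are isolated, the remaining distance-plus-triangle-inequality step that eliminates the vertical coordinate $s$ is routine, and the required upper bound $\rho < 1$ is supplied for free by Theorem \ref{thm:OddR2OK}.
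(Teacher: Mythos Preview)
Your proposal is correct and follows essentially the same approach as the paper: both reuse the five-point family from Theorem~\ref{thm:OddR2OK}, isolate the two limiting median lines $\overline{x^{\pm1}x^{\pm2}}$ forming a wedge with apex at $(1/\cos(\alpha),0)$, and combine the bound $\rho<1$ on the LP yolk radius with $|\cos(\alpha)|\to 0$ to push the LP yolk center to $-\infty$. The only difference is presentational---the paper argues via similar triangles (treating the LP yolk as tangent to the wedge), whereas your distance-to-line plus triangle-inequality step cleanly eliminates the $s$-coordinate without needing tangency, which is arguably a bit more robust.
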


\begin{proof}

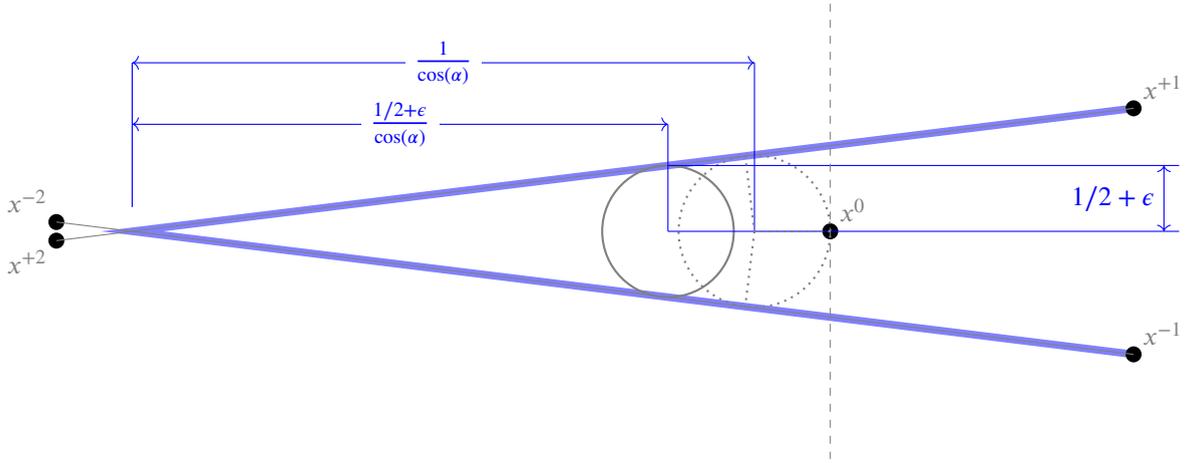
\begin{figure}[!ht]
\centering
\begin{tikzpicture}

\def\a{97}
\def\x{5}
\def\e{1}

\pgfmathsetmacro\c{cos(\a)}
\pgfmathsetmacro\s{sin(\a)}
\pgfmathsetmacro\cc{cos(\a/2)/2}
\pgfmathsetmacro\ss{sin(\a/2)/2}
\pgfmathsetmacro\p{-\c/\s*\x+1/\s)}
\pgfmathsetmacro\xx{1/(\c)-\e}
\pgfmathsetmacro\xxx{1/(\c)}
\pgfmathsetmacro\pp{-\c/\s*\xx+1/\s}

\pgfmathsetmacro\cc{(2-\x*(1+cos(\a)))/(1-(2*\x-1)*cos(\a))}
\pgfmathsetmacro\rr{(tan(\a/2)*(\x*cos(\a)-1))/(2*\x*cos(\a)-cos(\a)-1)}

\draw[line width=1mm, blue!50] (\x,\p)--(\xxx,0)--(\x,-\p);

\fill[black] (1,0) circle[radius=3pt];
\fill[black] (\x,\p) circle[radius=3pt];
\fill[black] (\x,-\p) circle[radius=3pt];

\node[above right] at (1,0) {$x^0$};
\node[above right] at (\x,\p) {$x^{+1}$};
\node[above right] at (\x,-\p) {$x^{-1}$};

\fill[black] (\xx,\pp) circle[radius=3pt];
\fill[black] (\xx,-\pp) circle[radius=3pt];

\node[below left] at (\xx,\pp) {$x^{+2}$};
\node[above left] at (\xx,-\pp) {$x^{-2}$};

\draw (\x,\p)--(\xx,\pp);
\draw (\x,-\p)--(\xx,-\pp);

\draw[dotted,thick] (0,0) circle[radius=1];

\draw[thick] (\cc,0) circle[radius=\rr];

\draw[blue] (\cc,0)--($(\cc,\p/2+.6)$);
\draw[<->,blue] ($(\cc,\p/2+.6)$)--($(\xxx,\p/2+.6)$) node[midway, fill=white] {$\frac{1/2 + \epsilon}{\cos(\alpha)}$} ;
\draw[blue] ($(\xxx,-\pp+.2)$)--($(\xxx,\p+.6)$);

\draw[blue] (0,0)--($(0,\p+.6)$);
\draw[<->,blue] ($(0,\p+.6)$)--($(\xxx,\p+.6)$) node[midway, fill=white] {$\frac{1}{\cos(\alpha)}$} ;
\draw[blue] ($(\xxx,-\pp+.2)$)--($(\xxx,\p+.6)$);

% \draw[red] (0,0)--($(0,-\p-.6)$);
% \draw[<->, red] ($(0,-\p-.4)$)--($(\cc,-\p-.4)$) node[below, midway,yshift=-1ex] {$-\frac{2-w\cdot\left(1+\cos\left(\alpha\right)\right)}{1-\left(2\cdot w-1\right)\cdot\cos\left(\alpha\right)}$} ;
% \draw[red] ($(\cc,0)$)--($(\cc,-\p-.6)$);

\draw[blue] ($(\cc,\rr)$)--($(\x+.6,\rr)$);
\draw[blue] ($(\cc,0)$)--($(\x+.6,0)$);
\draw[<->,blue] ($(\x+.4,\rr)$)--($(\x+.4,0)$) node[left, midway,blue] {$1/2+\epsilon$} ;

\draw[dotted,thick] (1,0)--(0,0);
\draw[dotted,thick] (\c,\s)--(0,0);
\draw[dotted,thick] (0,0)--(\c,-\s);

\draw[dashed] (1,3)--(1,-3);
\end{tikzpicture}\caption{By proportionality of similar triangles, the center of the LP yolk must be at least $\frac{1/2-\epsilon}{\cos(\alpha)}$ from the center of the yolk. }\label{fig:LB.Limiting2}
\end{figure}
The proof follows using the same family of instances in the proof of Theorem \ref{thm:OddR2OK}. 
Consider the cone given by $\overline{x^{\pm 1}, x^{\pm 2}}$ as depicted in Figure \ref{fig:LB.Limiting2} and let $(1/\cos(\alpha),0)$ denote the point of the cone.
Selecting $k$ sufficiently small and $\alpha$ sufficiently close to $\alpha/2$, the LP yolk is a ball tangent to the cone with radius $r\leq 1/2+\epsilon$ while the yolk is a ball tangent to the cone with radius $r'= 1$. 
Let $d$ and $d'$ be the distance from the point of the cone to the LP yolk center and the yolk center respectively; therefore the center of the LP yolk is $d'-d$ away from the center of the yolk. 
By proportionality of similar triangles, $d\leq (1/2+\epsilon)\cdot d'$ and, by construction, $d'= 1/\cos(\alpha)$.  
Finally, as $\alpha\to \pi/2$, $d'-d\geq \frac{1/2-\epsilon}{\cos(\alpha)}\to \infty$ thereby completing the proof. 
\end{proof}

We remark that all of constructions feature points in general position --- no set of three points are colinear --- and therefore bad instances are likely to occur with positive probability when ideal points are generated from a continuous distribution. 
In the remainder of the paper, we focus on showing that the bound of 1/2 is tight. 

\section{Properties of Yolks and LP Yolks}

%{\color{red}In this section, we establish that $k+1$ median hyperplanes is sufficient to define a yolk.  blah blah why is this useful. }
%In this section, we establish that $k+1$ median hyperplanes are sufficient to define a yolk.
In this section, we establish several properties of yolks to assist in showing that the LP yolk radius is at least 1/2 the yolk radius when $|I|\in \mathbb{R}^2$ and $|I|$ is odd.
We remark that the properties within may be useful for establishing other results regarding the yolk. 
Most notably, we show that at most $k+1$ median hyperplanes in $\mathbb{R}^k$ are needed to determine the yolk --- a drastic improvement on the bound of $O(|I|^k)$ given by \cite{Tovey92}.

\begin{definition}
   Given the ball $\B(c,r) = \{x \in \mathbb{R}^k: \lVert x-c \rVert_2 \le r \}$ with radius $r$ centered at $c$, for $\alpha$ where $\lVert \alpha \rVert_2 = 1$, the $\alpha$-hemisphere of $\B(c, r)$ is $\He(\alpha,c,r) = \{ x \in ext(\B(c, r)): a^t x \ge a^t c \}$ where $ext(\B(c, r)) = \{x \in \mathbb{R}^k: \lVert x-c \rVert_2 = r\}$ is the exterior/surface of $\B(c, r)$.
\end{definition}

%{\color{red} Let's reference Figure 3 and explain the hemisphere and tangent median hyperplanes. }

As indicated in Figure \ref{fig:shift}, the hyperplane $\H_1(a_1,b_1)$ is a median hyperplane tangent to ball $\B$ at $(a_1,b_1)$, and $\He(\alpha,c,r)$ is the $\alpha$-hemisphere of $\B(c, r)$.

Bailey \cite{bailey2023dimension} shows that for any unit vector $\alpha$ and for any yolk $\B(c,r)$, there is at least one median hyperplane tangent to the $\alpha$-hemisphere $\He(\alpha,c,r)$. 
We establish that this provides a complete characterization of the yolk by showing the second direction:  $\B(c, r)$ is a yolk if and only if every $\alpha$-hemisphere $\He(\alpha,c,r)$ has at least one tangent median hyperplane.

\begin{lemma}\label{lem:characterization}
    Let $\cal H$ be a compact set of hyperplanes. Then $\B(c,r)$ is a smallest ball intersecting every hyperplane  in ${\cal H}$ if and only if every $\alpha$-hemisphere of $\B(c,r)$ has at least one tangent hyperplane in ${\cal H}$. 
\end{lemma}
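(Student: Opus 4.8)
The plan is to recast the problem as the convex minimization of the function $f(c)=\max_{(a,b)\in{\cal H}}|a\cdot c-b|$, the smallest radius of a ball centered at $c$ that meets every hyperplane. The maximum is attained because ${\cal H}$ is compact and each map $(a,b)\mapsto|a\cdot c-b|$ is continuous, and $f$ is convex, being a pointwise maximum of the convex maps $c\mapsto|a\cdot c-b|$. Since $\B(c,r)$ intersects the hyperplane $(a,b)$ exactly when $|a\cdot c-b|\le r$, the smallest ball meeting every hyperplane is $\B(c^{\ast},f(c^{\ast}))$ for any minimizer $c^{\ast}$ of $f$. The whole lemma then reduces to matching the optimality condition of this convex program with the hemisphere condition.

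First I would translate tangency into linear-algebraic data. For a hyperplane $(a,b)$ at distance exactly $r$ from $c$ (i.e.\ $|a\cdot c-b|=r$), its unique contact point with $\B(c,r)$ is $p=c+r\,u$ with $u:=-\operatorname{sign}(a\cdot c-b)\,a$, the unit vector from the center to the contact point. Such a hyperplane is tangent to the hemisphere $\He(\alpha,c,r)$ precisely when $\alpha\cdot(p-c)\ge 0$, i.e.\ $\alpha\cdot u\ge 0$. Hence the hemisphere condition says that for every unit vector $\alpha$ there is an $r$-distant hyperplane whose contact direction $u$ satisfies $\alpha\cdot u\ge 0$. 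By a standard separating-hyperplane (Gordan) argument, this is equivalent to $0$ lying in the convex hull of the contact directions $\{u\}$ of the $r$-distant hyperplanes.

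With these dictionaries I would prove both implications. For the forward direction, assume $\B(c,r)$ is smallest, so $c$ minimizes $f$ and $r=f(c)$; the $r$-distant hyperplanes are exactly the active (farthest) ones. If some $\alpha$-hemisphere had no tangent hyperplane, every active contact direction would satisfy $\alpha\cdot u<0$, and compactness of the active set gives $\max_u \alpha\cdot u<0$; since the gradient of $|a\cdot c-b|$ at an active point is $-u$, the directional derivative $f'(c;-\alpha)=\max_u \alpha\cdot u<0$ exhibits $-\alpha$ as a strict descent direction, contradicting minimality. This reproves, for an arbitrary compact ${\cal H}$, the implication established by Bailey. For the backward direction, assume the hemisphere condition and let $\B(c',r')$ be any ball meeting every hyperplane. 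Applying the condition to $\alpha=(c-c')/\lVert c-c'\rVert$ yields an $r$-distant hyperplane $(a,b)$ with $\alpha\cdot u\ge 0$, and a short computation then shows the distance from $c'$ to $(a,b)$ is at least $r$, forcing $r'\ge r$. Equivalently, since $\operatorname{conv}\{u\}=-\partial f(c)$, the relation $0\in\operatorname{conv}\{u\}$ is exactly $0\in\partial f(c)$, which by convexity identifies $c$ as a global minimizer of $f$.

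I expect the main obstacle to be the feasibility gap in the backward direction: the hemisphere condition only constrains hyperplanes at distance exactly $r$ and, on its own, does not guarantee that $\B(c,r)$ meets hyperplanes lying farther than $r$ (a larger concentric ball can satisfy the hemisphere condition as well). The argument must therefore invoke, explicitly or through the standing restriction to balls that intersect every hyperplane, that $\B(c,r)$ is feasible, equivalently $r=f(c)$; only then does the lower bound $r'\ge r$ together with feasibility certify that $\B(c,r)$ is a smallest ball. Secondary technical points are justifying the directional-derivative and subdifferential formulas for the pointwise maximum $f$ over the compact family ${\cal H}$ (Danskin's theorem, valid since each active $|a\cdot c-b|=r>0$ is differentiable there) and dispatching the degenerate case $r=0$, where the ball collapses to the single common point $c$ of all hyperplanes.
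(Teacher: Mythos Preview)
Your plan is correct, and its core coincides with the paper's: for the backward direction you both take an arbitrary competing center $c'$, apply the hemisphere hypothesis to the direction $(c-c')/\lVert c-c'\rVert$, extract a tangent hyperplane there, and compute that its distance from $c'$ is at least $r$; for the forward direction you both exhibit $-\alpha$ as a descent/shift direction when the $\alpha$-hemisphere is uncovered. The paper carries this out by an elementary coordinate calculation, whereas you wrap the same idea in convex analysis---Danskin's theorem for the directional derivative of $f(c)=\max_{(a,b)\in{\cal H}}|a\cdot c-b|$, and the Gordan/separation equivalence between ``every hemisphere is touched'' and $0\in\operatorname{conv}\{u\}=-\partial f(c)$. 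Your framing is more systematic and makes the optimality characterization $0\in\partial f(c)$ transparent, at the price of citing heavier tools; the paper's bare-hands computation is shorter for this particular claim. You also correctly isolate the feasibility caveat in the backward implication (the hemisphere condition alone does not force $\B(c,r)$ to meet every hyperplane, as a non-intersecting hyperplane can be added to ${\cal H}$ without disturbing tangency); the paper resolves this exactly as you anticipate, by explicitly assuming at the outset of that direction that $\B(c,r)$ already intersects every hyperplane in ${\cal H}$.
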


Notably, \cite{bailey2023dimension} shows that the set of median hyperplanes is compact, and therefore Lemma \ref{lem:characterization} is a characterization for a yolk.  
However, for our later set of results, we consider an arbitrary compact set of hyperplanes, not necessarily the set of median hyperplanes. 

\begin{proof}[Proof of Lemma \ref{lem:characterization}]
    The first direction is given in \cite{bailey2023dimension}.  
    For completeness, we provide a sketch of the proof in the caption of Figure \ref{fig:shift}. 

    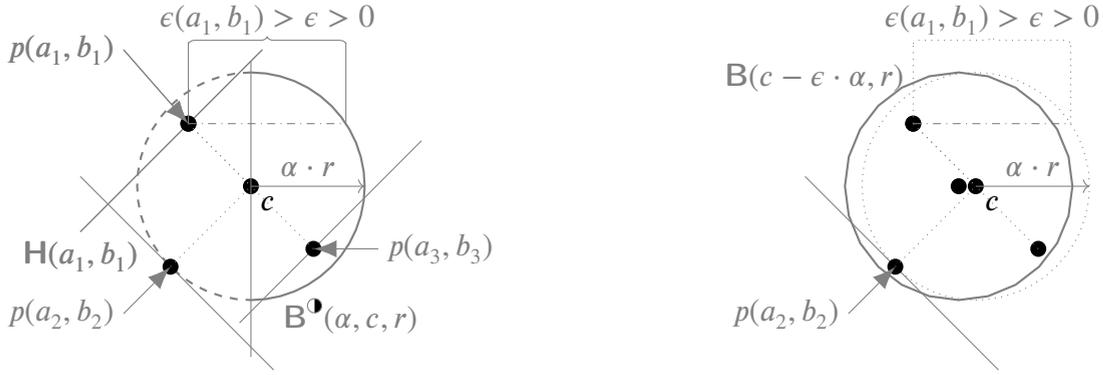
\begin{figure}[!ht]
    \begin{center}
    \begin{tikzpicture}[scale=1.5]
        \draw [thick,domain=-90:90,white] plot ({cos(\x)+1.385}, {sin(\x)});  
        \draw [thick,domain=90:270,white] plot ({cos(\x)+1.385}, {sin(\x)});
        \draw [ thick,domain=-90:90] plot ({cos(\x)}, {sin(\x)});  
        \draw [dashed,thick,domain=90:270] plot ({cos(\x)}, {sin(\x)});
        \fill[black] (0,0) circle[radius=2pt] node[below right] {\large $c$};
        \draw (0,1.1)--(0,-1.5);
        \node[below right] at (.2,-.9) {\large $\He(\alpha,c,r)$};
         
        \draw [domain=-.1:1.5] plot({-\x}, {-\x+1.1}) node[below] {\large $\H(a_1,b_1)$};
        \fill[black] (-.55,.55) circle[radius=2pt];
        \draw[dotted] (-.55,.55)--(0,0);
        \draw[latexnew-,arrowhead=.12in] (-.55,.55)--(-1.12,1.2) node[left] {\large $p(a_1,b_1)$};
         
        \draw [domain=-.1:1.5] plot({\x}, {\x-1.1});% node[above] {\large $\H(a_3,b_3)$};
        \fill[black] (.55,-.55) circle[radius=2pt];
        \draw[dotted] (.55,-.55)--(0,0);
        \draw[latexnew-,arrowhead=.12in] (.55,-.55)--(1.12,-.55) node[right] {\large $p(a_3,b_3)$};

        \draw [domain=-.1:1.5] plot({-\x}, {-\x+1.1}) node[below] {\large $\H(a_1,b_1)$};
        \fill[black] (-.55,.55) circle[radius=2pt];
        \draw[dotted] (-.55,.55)--(0,0);
        \draw[latexnew-,arrowhead=.12in] (-.55,.55)--(-1.12,1.2) node[left] {\large $p(a_1,b_1)$};

        \draw [domain=-1.5:.2] plot({\x}, {-\x-1.414});% node[below] {\large $\H(a_2,b_2)$};
        \fill[black] (-.707,-.707) circle[radius=2pt];
        \draw[dotted] (-.707,-.707)--(0,0);
        \draw[latexnew-,arrowhead=.12in] (-.707,-.707)--(-1.12,-1.12) node[left] {\large $p(a_2,b_2)$};

        \draw[dashdotted] (-.55,.55)--(.835,.55);
        \draw[decoration={brace,raise=30pt},decorate]
      (-.55,.55) -- node[above=30pt] {\large $\epsilon(a_1,b_1)> \epsilon>0$} (.835,0.55);
        \draw(-.55,.55)--(-.55,1.265);
        \draw(.835,.55)--(.835,1.265);
        \draw[->] (0,0)--(1,0) node[midway, above] {\large $\alpha\cdot r$};
    \end{tikzpicture}
    \hfill
    \begin{tikzpicture}[scale=1.5]
        \draw [thick,domain=-90:90,white] plot ({cos(\x)+1.385}, {sin(\x)});  
        \draw [thick,domain=90:270,white] plot ({cos(\x)+1.385}, {sin(\x)});
        \draw [dotted,domain=-90:90] plot ({cos(\x)}, {sin(\x)});  
        \draw [dotted,domain=90:270] plot ({cos(\x)}, {sin(\x)});
        
        \draw [thick,domain=0:360] plot ({cos(\x)-0.15}, {sin(\x)});
        \fill[black] (-0.15,0) circle[radius=2pt];
        \node[above left] at (-.55,.75) {\large $\B(c-\epsilon\cdot \alpha,r)$};

        \fill[black] (0,0) circle[radius=2pt] node[below right] {\large $c$};
        % \draw (0,1.1)--(0,-1.5);
        % \node[below right] at (.2,-.9) {\large $\He(\alpha,c,r)$};
         
        % \draw [domain=-.1:1.5] plot({-\x}, {-\x+1.1}) node[below] {\large $\H(a_1,b_1)$};
        \fill[black] (-.55,.55) circle[radius=2pt];
        \draw[dotted] (-.55,.55)--(0,0);
        % \draw[latexnew-,arrowhead=.12in] (-.55,.55)--(-1.12,1.2) node[left] {\large $p(a_1,b_1)$};
         
        % \draw [domain=-.1:1.5] plot({\x}, {\x-1.1});% node[above] {\large $\H(a_3,b_3)$};
        \fill[black] (.55,-.55) circle[radius=2pt];
        \draw[dotted] (.55,-.55)--(0,0);
        % \draw[latexnew-,arrowhead=.12in] (.55,-.55)--(1.12,-.55) node[right] {\large $p(a_3,b_3)$};

        % \draw [domain=-.1:1.5] plot({-\x}, {-\x+1.1}) node[below] {\large $\H(a_1,b_1)$};
        \fill[black] (-.55,.55) circle[radius=2pt];
        \draw[dotted] (-.55,.55)--(0,0);
        % \draw[latexnew-,arrowhead=.12in] (-.55,.55)--(-1.12,1.2) node[left] {\large $p(a_1,b_1)$};

        \draw [domain=-1.5:.2] plot({\x}, {-\x-1.414});% node[below] {\large $\H(a_2,b_2)$};
        \fill[black] (-.707,-.707) circle[radius=2pt];
        \draw[dotted] (-.707,-.707)--(0,0);
        \draw[latexnew-,arrowhead=.12in] (-.707,-.707)--(-1.12,-1.12) node[left] {\large $p(a_2,b_2)$};

        \draw[dashdotted] (-.55,.55)--(.835,.55);
        \draw[dotted, decoration={brace,raise=30pt},decorate]
      (-.55,.55) -- node[above=30pt] {\large $\epsilon(a_1,b_1)> \epsilon>0$} (.835,0.55);
        \draw[dotted](-.55,.55)--(-.55,1.265);
        \draw[dotted](.835,.55)--(.835,1.265);
        \draw[->] (0,0)--(1,0) node[midway, above] {\large $\alpha\cdot r$};
    \end{tikzpicture}
    % \begin{tikzpicture}[scale=1.5]
    % 	\node[below left,white] at (-.2,-.9) {\large $\B^{\twohalfcircle[white]{white}}(\alpha,\vec{0},r)$};
    % 	\draw [dashed, thick,domain=-90:90] plot ({cos(\x)}, {sin(\x)});  
    % 	\draw [dashed, thick,domain=90:270] plot ({cos(\x)}, {sin(\x)});
    % 	\draw [thick,domain=-90:90] plot ({cos(\x)+1.385}, {sin(\x)});  
    % 	\draw [thick,domain=90:270] plot ({cos(\x)+1.385}, {sin(\x)});
    % 	\node[below] at (1.385,-1) {\large $\B(\alpha\cdot \epsilon(a,b),r)$};
    % 	\fill[black] (0,0) circle[radius=2pt] node[below right] {\large $\vec{0}$};
    % 	\fill[black] (1.385,0) circle[radius=2pt] ;
    % 	\node[above] at (1.6,0) {\large $\alpha\cdot \epsilon(a,b)$};
    % 	\draw (0,1.1)--(0,-1.5);
    % 	% \node[below left] at (-.2,-.9) {\large $\He(\alpha,\vec{0},r)$};
    % 	\draw [domain=-.1:1.5] plot({\x}, {-\x+1.1});
    % 	\fill[black] (.55,.55) circle[radius=2pt];
    % 	\draw[latexnew-,arrowhead=.12in] (.55,.55)--(1.12,1.2) node[right] {\large $p(a,b)$};
    % 	\draw[dashdotted] (.55,.55)--(-.835,.55);
    % 	\draw[decoration={brace,raise=30pt},decorate]
    %   (-.835,.55) -- node[above=30pt] {\large $\epsilon(a,b)$} (0.55,0.55);
    %   \draw(.55,.55)--(.55,1.265);
    %   \draw(-.835,.55)--(-.835,1.265);
    %   % \draw[->] (0,0)--(1,0) node[midway, above] {\large $\alpha\cdot r$};
    % \end{tikzpicture}
    \caption{
    Suppose there is some hemisphere $\He(\alpha,c,r)$ with no tangent hyperplane. 
    Let $p(a,b)$ be the closest point on the hyperplane $(a,b)$ to the center $c$. 
    Since $p(a,b)$ is not on this hemisphere, there is some point strictly to the right on the hemisphere $\He(\alpha,c,r)$. 
    Equivalently, there exists an $\epsilon>0$ such that the ball $\B(\alpha, c-\epsilon\cdot\alpha, r)$ contains $p(a,b)$ in the interior. 
    This holds for every $(a,b)$ in the compact set ${\cal H}$, there exists some $r'<r$ such that every $p(a,b)$ is contained in $\B(c-\epsilon\cdot\alpha, r')$, i.e., if the $\alpha$-hemisphere is uncovered, then we can find a smaller ball intersecting every hyperplane in ${\cal H}$.
    The full details are given in \cite{bailey2023dimension}.
    }\label{fig:shift}
    \end{center}
\end{figure}

    For the second direction, we assume $\B(c,r)$ intersects every hyperplane in $\cal H$ and for every $\alpha$-hemisphere that there is a tangent hyperplane in $\cal H$. 
    We show that $\B(c,r)$ is a smallest ball intersecting hyperplane in $\cal H$. 
    Consider any ball $\B(c+{\alpha},r')$ that intersects with every hyperplane in $\cal H$. 
    It suffices to show that $r'\geq r$.

    First, suppose ${\alpha}=\vec{0}$. 
    Since there is a hyperplane $(a,b)$ tangent to $\B(c,r)$, the hyperplane $(a,b)$ is exactly $r$ away from $c$ implying that $r'\geq r$ since $B(c,r')$ intersects with the hyperplane given by $(a,b)$.
    
    Next, suppose ${\alpha}\neq\vec{0}$ and $||{\alpha}||_2>0$, and consider the $-\frac{\alpha}{||\alpha||_2}$-hemisphere of $\B(c,r)$, $\He(-\frac{\alpha}{||\alpha||_2}, c,r)$. 
    By selection of $\B(c,r)$, there exist a $(a,b)\in {\cal H}$ such that $\H(a,b)$ is tangent to $\He(-\frac{\alpha}{||\alpha||_2},c,r)$.
    Since $a$ is a unit vector and since $\H(a,b)$ is exactly $r$ away from $c$, $a^\intercal c \in \{b+r,b-r\}$.     
    Without loss of generality, we assume $a^\intercal c=b-r$ --- in the second case, the proof follows identically by considering the equivalent hyperplane $(-a,-b)$. 
    Let $p(a,b)=H(a,b)\cap \He(-\frac{\alpha}{||\alpha||_2},c,r)$ be the closest point in $H(a,b)$ to $c$. 
    Since $a^\intercal c=b-r$ and since $\H(a,b)$ is tangent to $B(c,r)$, $p(a,b)=c+r\cdot a$.
    Further, since $p(a,b)\in \He(-\frac{\alpha}{||\alpha||_2},c,r)$, 
    \begin{align*}
        -\frac{\alpha}{||\alpha||_2}^\intercal p(a,b) &\geq -\frac{\alpha}{||\alpha||_2}^\intercal c\\
        \Rightarrow-\alpha^\intercal (c+r\cdot a) &\geq -\alpha^\intercal c
    \end{align*}
    and $\alpha^\intercal a \leq 0$. 
    Equivalently, the angle between $c+\alpha$ and $p(a,b)$ is at least 90 degrees since $p(a,b)$ appears on the hemisphere opposite of $c+\alpha$.     
    Since $B(c+\alpha, r')$ intersects with $H(a,b)$, the radius $r'$ is the distance from $c+\alpha$ to $H(a,b)$ which is given by 
    \begin{align*}
        r'\geq |a^\intercal (c+\alpha) - b|& = |a^\intercal c + a^\intercal \alpha -b|\\
        &= |b-r + a^\intercal \alpha -b|\\
        &= |-r+a^\intercal \alpha|\\
        &= r + |a^\intercal \alpha|\geq r
    \end{align*}
    where the last line follows since both $-r$ and $a^\intercal \alpha$ are negative. 
    Thus, for the ball $B(c+\alpha, r')$ to intersect every hyperplane in ${\cal H}$, $r'\geq r$. 
    In both cases, $B(c,r)$ is a smallest ball intersecting every hyperplane in ${\cal H}$ thereby completing the second direction of the proof. 
\end{proof}

Next, we use this characterization to show that there exists a set of at most $k+1$ tangent hyperplanes that is sufficient to define the yolk. 
We remark that there have been previous claims that there exists exactly $k+1$ tangent hyperplanes that characterize a yolk (e.g., \cite{feld1988centripetal}).
However, the claim is not proven and, more importantly, is untrue; in the proof of Theorem \ref{thm:EvenBad} we provide an instance with an even number voters in $\mathbb{R}^2$ with fewer than $k+1=3$ tangent hyperplanes and every instance given in \cite{bailey2023dimension} has fewer than $k+1$ tangent hyperplanes even when there are an odd number of voters. 
Notably, \cite{bailey2023dimension} studies the uniqueness of the yolk solution concept and the instances used to demonstrate that the dimension of the set of yolk centers can be large have fewer than $k+1$ tangent hyperplanes. 
However, non-uniqueness is not necessary to have fewer than $k+1$ tangent hyperplanes; our instance in Theorem \ref{thm:EvenBad} has a unique yolk in $\mathbb{R}^2$ with only 2 tangent median hyperplanes. 

\begin{theorem}\label{thm:Defining}
    Given a set of ideal points $I$, let $\B(c,r)$ be a smallest ball intersecting with every median hyperplane in  a compact set ${\cal H}$. 
    There exists a subset ${\cal H}'\subseteq {\cal H}$ of size $|{\cal H}'|\leq k+1$, such that $\B(c,r)$ is a smallest ball intersecting with every median hyperplane in ${\cal H}'$.  
\end{theorem}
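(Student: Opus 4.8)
The plan is to reduce the statement to Carath\'eodory's theorem, using Lemma \ref{lem:characterization} to translate "being a smallest ball" into a statement about tangent hyperplanes. The starting observation is that any finite set is compact, and since ${\cal H}'\subseteq {\cal H}$ the ball $\B(c,r)$ automatically intersects every hyperplane of ${\cal H}'$. Hence, by Lemma \ref{lem:characterization}, proving the theorem is exactly the same as exhibiting a subset ${\cal H}'\subseteq{\cal H}$ with $|{\cal H}'|\le k+1$ such that every $\alpha$-hemisphere of $\B(c,r)$ still has a tangent hyperplane in ${\cal H}'$. So the entire content is: cover all hemispheres of $\B(c,r)$ using few of the tangent hyperplanes already available in ${\cal H}$. (I will assume $r>0$; if $r=0$ then every hyperplane of ${\cal H}$ passes through $c$, and a single such hyperplane already makes $\B(c,0)=\{c\}$ a smallest intersecting ball, so the bound $1\le k+1$ is trivial.)

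The key step is to encode "hemisphere is covered" as a convex-hull condition. For each $(a,b)\in{\cal H}$ that is tangent to $\B(c,r)$, let $p(a,b)$ be its point of tangency and set $u(a,b)=(p(a,b)-c)/r$, a unit vector. A direct computation shows $(a,b)$ is tangent to $\He(\alpha,c,r)$ if and only if $\alpha^\intercal u(a,b)\ge 0$. Writing $T=\{u(a,b):(a,b)\in{\cal H}\text{ tangent to }\B(c,r)\}$, I then claim that a subcollection covers every $\alpha$-hemisphere if and only if $\vec 0$ lies in the convex hull of its associated tangent directions. Indeed, if $\vec 0=\sum_i\lambda_i u_i$ with $\lambda_i\ge 0$ and $\sum_i\lambda_i=1$, then $\sum_i\lambda_i(\alpha^\intercal u_i)=0$ for every unit $\alpha$, so some $\alpha^\intercal u_i\ge 0$ and the hemisphere $\He(\alpha,c,r)$ is covered; conversely, if $\vec 0$ is not in the convex hull, the separating hyperplane theorem produces a direction $\alpha$ with $\alpha^\intercal u_i<0$ for all $i$, so $\He(\alpha,c,r)$ is left uncovered. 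This is precisely Gordan's alternative in disguise.

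With this dictionary the theorem follows quickly. Since $\B(c,r)$ is a smallest ball intersecting all of ${\cal H}$, the first direction of Lemma \ref{lem:characterization} guarantees that every hemisphere is covered by ${\cal H}$, so the claim gives $\vec 0\in\operatorname{conv}(T)$. As $T\subseteq\mathbb{R}^k$, Carath\'eodory's theorem writes $\vec 0$ as a convex combination of at most $k+1$ points $u_1,\dots,u_{k+1}\in T$. Selecting one hyperplane of ${\cal H}$ realizing each $u_i$ yields ${\cal H}'\subseteq{\cal H}$ with $|{\cal H}'|\le k+1$ and $\vec 0\in\operatorname{conv}\{u_1,\dots,u_{k+1}\}$; by the claim every hemisphere is then covered by ${\cal H}'$, and the second direction of Lemma \ref{lem:characterization} (valid because the finite set ${\cal H}'$ is compact and $\B(c,r)$ intersects all of it) concludes that $\B(c,r)$ is a smallest ball intersecting every hyperplane of ${\cal H}'$.

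The main obstacle is the convex-hull reformulation of the previous paragraph: once one recognizes that "every hemisphere of $\B(c,r)$ has a tangent hyperplane" is equivalent to "$\vec 0$ is in the convex hull of the tangent directions," the bound $k+1$ is an immediate consequence of Carath\'eodory and requires no further estimates. The only genuine technical point to check in support of the converse (separation) direction is that $T$ is compact: it is the image of the set of tangent hyperplanes of ${\cal H}$ under the continuous map $(a,b)\mapsto (p(a,b)-c)/r$, and that set is closed in the compact family ${\cal H}$, hence compact, which is what licenses strict separation of $\vec 0$ from $\operatorname{conv}(T)$.
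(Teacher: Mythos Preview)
Your proof is correct and proceeds via a genuinely different classical tool than the paper's. Both arguments start from the same convexification idea---encoding ``the tangent hyperplane $(a,b)$ covers the $\alpha$-hemisphere'' as the half-space condition $\alpha^\intercal u(a,b)\ge 0$---but then diverge. The paper takes complements, obtaining open half-spaces $F^c(a)=\{\alpha:\alpha^\intercal a<0\}$ whose intersection is empty, and applies Helly's theorem to extract $k+1$ of them with empty intersection; because Helly (as stated there) requires a finite family, the paper first invokes an auxiliary result of Tovey guaranteeing a finite defining set of size $O(n^k)$. You instead observe directly that ``every hemisphere covered'' is equivalent to $\vec 0\in\operatorname{conv}(T)$ and apply Carath\'eodory's theorem, which needs no finiteness hypothesis and so bypasses the Tovey step entirely. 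Your route is therefore a bit cleaner and more self-contained; the paper's route has the minor advantage that Helly is perhaps more familiar in this covering language and connects to the paper's Lemma~\ref{lem:covering} as written. Your remark that compactness of $T$ is needed for the strict-separation direction is well placed: without it one only gets $\alpha^\intercal u\le 0$, which does not rule out a boundary tangency.
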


Prior to showing this result, we introduce notation to identify the set of hemispheres \textit{covered} by a tangent hyperplane. 
Our definition actually allows a hyperplane to cover a subset of $\mathbb{R}^k$ (Lemma \ref{lem:covering}). 
This generalization causes hyperplanes to cover convex sets which allows us to apply Helly's theorem to show that there exists a set of at most $k+1$ tangent hyperplanes that cover all hemispheres.

\begin{definition}
    Given a yolk $\B(c,r)$, let ${\cal H}(c, r)$ where ${\cal H}(c,r)=\{a\in \mathbb{R}^k:(a,a^\intercal c+r)\in {\cal H}\}$ be the set of median hyperplanes tangent to $\B(c,r)$ and let $F(a)= \{\alpha: a \cdot \alpha \ge 0\}$ for all $a\in {\cal H}(c,r)$ where $F(a)\cap \{\alpha\in \mathbb{R}^k: ||\alpha||=1\}$ is the set of hemispheres covered by the tangent hyperplane $(a,a^\intercal c+r)$. 
\end{definition}

We first show that the collection of all hyperplanes covers all of $\mathbb{R}^k$.

\begin{lemma}\label{lem:covering}
    The set of tangent hyperplanes covers all of $\mathbb{R}^k$.  
    Formally, $\bigcup_{a\in {\cal H}(c,r)} F(a)=\mathbb{R}^k$.   
\end{lemma}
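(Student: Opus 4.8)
The plan is to reduce the statement directly to the characterization in Lemma~\ref{lem:characterization} by translating the geometric condition ``the $\alpha$-hemisphere has a tangent hyperplane in ${\cal H}$'' into the purely algebraic condition ``$\alpha\in F(a)$ for some $a\in{\cal H}(c,r)$.'' The key geometric fact I would record first is this: if $a\in{\cal H}(c,r)$, then by definition $(a,a^\intercal c+r)\in{\cal H}$, and since $a$ is a unit vector this hyperplane is tangent to $\B(c,r)$. Moreover, because the offset is $a^\intercal c+r$ (which places $c$ on the side $a\cdot x\le b$), its unique point of contact with the ball is $p(a)=c+r\,a$. This is the step where orientation matters, so I would check it explicitly: the alternative representation $(-a,-a^\intercal c-r)$ describes the same geometric hyperplane but is \emph{not} of the form $(\tilde a,\tilde a^\intercal c+r)$ when $r>0$, so ${\cal H}(c,r)$ records exactly the outward normals whose contact points lie at $c+r\,a$.

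Next I would establish the correspondence. For a unit vector $\alpha$, the contact point satisfies $\alpha^\intercal\bigl(p(a)-c\bigr)=r\,(\alpha^\intercal a)$, so $p(a)\in\He(\alpha,c,r)$ if and only if $\alpha^\intercal a\ge 0$ (using $r>0$), i.e.\ if and only if $\alpha\in F(a)$. Since a hyperplane tangent to $\B(c,r)$ is tangent to the $\alpha$-hemisphere precisely when its contact point lies on that hemisphere, it follows that the $\alpha$-hemisphere admits a tangent hyperplane drawn from ${\cal H}$ exactly when $\alpha\in\bigcup_{a\in{\cal H}(c,r)}F(a)$. Now I invoke the forward direction of Lemma~\ref{lem:characterization}: because $\B(c,r)$ is a smallest ball intersecting every hyperplane in ${\cal H}$, every $\alpha$-hemisphere has a tangent hyperplane in ${\cal H}$. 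Combined with the correspondence, every unit vector $\alpha$ lies in $\bigcup_{a}F(a)$.

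Finally I would extend coverage from the unit sphere to all of $\mathbb{R}^k$. Each $F(a)=\{\alpha:a\cdot\alpha\ge 0\}$ is a closed half-space and hence invariant under multiplication by nonnegative scalars, so the union $\bigcup_a F(a)$ is a cone. Any nonzero $\beta$ is a positive multiple of the unit vector $\beta/\lVert\beta\rVert_2$, which is covered, so $\beta$ is covered as well; and $\vec{0}\in F(a)$ for every $a$, with ${\cal H}(c,r)\neq\varnothing$ guaranteed by applying the characterization in any single direction. Hence $\bigcup_{a\in{\cal H}(c,r)}F(a)=\mathbb{R}^k$.

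The main obstacle I anticipate is purely bookkeeping rather than conceptual: pinning down the orientation so that the contact point is $c+r\,a$ and not $c-r\,a$, and confirming that ${\cal H}(c,r)$ singles out the correct normal among the two representations of each tangent hyperplane. Once the sign $\alpha^\intercal a\ge 0\iff p(a)\in\He(\alpha,c,r)$ is correct, the reduction to Lemma~\ref{lem:characterization} and the conic extension are both routine. (A minor degenerate case is $r=0$, where the two orientations of each tangent hyperplane both enter ${\cal H}(c,0)$, so coverage still follows, but the essential content is the case $r>0$.)
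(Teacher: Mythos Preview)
Your proof is correct and follows essentially the same approach as the paper's: both invoke the forward direction of Lemma~\ref{lem:characterization} to obtain a tangent hyperplane on each $\alpha$-hemisphere, translate the tangency condition at $c+r a$ into $\alpha^\intercal a\ge 0$, and then extend from unit vectors to all of $\mathbb{R}^k$ (the paper handles this last step by normalizing $\alpha$ and treating $\alpha=\vec{0}$ separately, whereas you use the cone property of $F(a)$, but this is cosmetic). Your more explicit treatment of the orientation---that ${\cal H}(c,r)$ singles out the outward normal so the contact point is $c+r a$---is a point the paper glosses over, so if anything your write-up is cleaner on that detail.
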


\begin{proof}
    We claim for all $\alpha \in \mathbb{R}^k$, there is an $a$ such that $\alpha \in F(a)$.
    
    Case 1: If $\alpha = \Vec{0}$. 
    This case trivially holds as long as ${\cal H}(c,r)$ is non-empty since $\vec{0}\in F(a)$ for all $a$. 
    By \cite{bailey2023dimension}, the set of median hyperplanes is compact and, by Lemma \ref{lem:characterization}, there is at least one tangent median hyperplane to every hemisphere of a yolk. 
    Thus, ${\cal H}(c,r)$ is non-empty and case 1 holds. 
    
    Case 2: If $||\alpha||_2 > 0$, then let $\Bar{\alpha}=\frac{\alpha}{||\alpha||_2}$ and $||\Bar{\alpha}||_2=1$. 
    By Lemma \ref{lem:characterization}, there exists ${\cal H}(a,a^tc+r)$ that is tangent to $\He(\bar{\alpha},c,r)$. 
    By selection, ${\cal H}(a,a^tc+r) \cap \He(\bar{\alpha},c,r) = c+ar$ and since $\He(\bar{\alpha},c,r) = \{ x \in ext(\B(c, r)): \bar{\alpha}^\intercal x \ge \bar{\alpha}^\intercal c \}$, 
    
    \begin{align*}
    \Bar{\alpha}^\intercal(c+a\cdot r) &\ge  \Bar{\alpha}^\intercal c \\
    \Rightarrow\Bar{\alpha}^\intercal  \cdot a\cdot r &\ge 0\\
    \Rightarrow\Bar{\alpha}^\intercal  \cdot a &\ge 0\\
    \Rightarrow||\alpha||_2\cdot {\alpha}^\intercal \cdot a &\ge 0\\
    \Rightarrow\alpha^\intercal  \cdot a &\ge 0
    \end{align*} 
    
    and $\alpha \in F(a)$. Thus, we cover all $\alpha$-hemisphere, i.e., $\bigcup_{a\in {\cal H}(c,r)} F(a) = \mathbb{R}^k$.
\end{proof}

Thus, the condition that every $\alpha$-hemisphere of $B(c,r)$ is covered is equivalent to $\bigcup_{a\in {\cal H}(c,r)} F(a)=\mathbb{R}^k$.
Notably, by extending $F(a)$ to include all real vectors instead of just unit vectors, we convexify the set of vectors covered by a median hyperplane.  
This convexification allows us to apply Helly's theorem for a finite collection of convex sets in order to prove Theorem \ref{thm:Defining}.

% \begin{figure}[!ht]
% \centering
% \begin{tikzpicture}
%     \draw (1,3)--(1,-3);
%     \draw (0,0) circle[radius=1];v
% \end{tikzpicture}
% \caption{Hello world}
% \end{figure}

%{\color{red} Make sure to introduce notation to understand the proof.  For $H(a,b)$, definitely in the notation section. }

\begin{lemma}(Helly's theorem)\label{lem:Helly}
    Let $C$ be a finite family of convex sets in $\mathbb{R}^k$.
    If $\bigcup_{c\in C: |c|\leq k+1} c\neq \emptyset$, then $\bigcup_{c\in C}c \neq \emptyset$.
    I.e., if the intersection of every subset of size $k+1$ is non-empty, then the intersection of the entire set is non-empty. 
\end{lemma}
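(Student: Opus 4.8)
The plan is to prove this statement (in its intended intersection form, as clarified by the ``I.e.'') by induction on the number $n$ of sets in the family, using Radon's theorem as the single non-trivial geometric input. First I would either cite or quickly establish Radon's theorem: any $k+2$ points $p_1,\dots,p_{k+2}\in\mathbb{R}^k$ are affinely dependent, so there exist scalars $\lambda_1,\dots,\lambda_{k+2}$, not all zero, with $\sum_i \lambda_i p_i = 0$ and $\sum_i \lambda_i = 0$. Splitting the indices according to the sign of $\lambda_i$ into $A=\{i:\lambda_i\ge 0\}$ and $B=\{i:\lambda_i<0\}$ and normalizing the positive weights yields a common point of $\operatorname{conv}\{p_i:i\in A\}$ and $\operatorname{conv}\{p_i:i\in B\}$.

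With Radon in hand, the base case of the induction, $n\le k+1$, is immediate, since the hypothesis directly guarantees that the intersection of any subfamily of size at most $k+1$ is non-empty. For the inductive step I would assume $n\ge k+2$ and that the statement holds for every family of fewer than $n$ convex sets. Labeling the sets $c_1,\dots,c_n$, I apply the inductive hypothesis to each of the $n$ subfamilies $\{c_j : j\ne i\}$; each such subfamily has $n-1<n$ members and still satisfies the hypothesis (every $k+1$ of its members intersect), so it has a common point $p_i\in\bigcap_{j\ne i}c_j$. This produces $n\ge k+2$ points $p_1,\dots,p_n$.

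I would then invoke Radon's theorem on $p_1,\dots,p_n$ to obtain a partition of the index set into $A$ and $B$ together with a point $p\in\operatorname{conv}\{p_i:i\in A\}\cap\operatorname{conv}\{p_i:i\in B\}$, and claim $p\in\bigcap_{i=1}^n c_i$. To verify this, fix any index $i$; it lies in exactly one side of the partition, say $i\in A$. Every $p_j$ with $j\in B$ then satisfies $p_j\in c_i$, because $j\ne i$ and $p_j\in\bigcap_{\ell\ne j}c_\ell$. Since $c_i$ is convex, $\operatorname{conv}\{p_j:j\in B\}\subseteq c_i$, and in particular $p\in c_i$; the case $i\in B$ is symmetric. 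As $i$ was arbitrary, $p$ lies in every set, so $\bigcap_{i=1}^n c_i\ne\emptyset$.

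The only genuinely non-routine point, and thus the main obstacle, is the bookkeeping in this last step: one must arrange the Radon partition so that, for each $i$, the side of the partition \emph{not} containing $i$ supplies points all lying in $c_i$. The observation that makes this work is that a point $p_j$ is excluded only from $c_j$, so any $p_j$ with $j\ne i$ is automatically in $c_i$; whichever side $i$ lands on, the opposite side therefore certifies $p\in c_i$ by convexity. Everything else reduces to standard affine-dependence and convex-combination arithmetic.
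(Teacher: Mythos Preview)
Your proposal is correct and is precisely the classical Radon-partition proof of Helly's theorem. Note, however, that the paper does not actually supply its own proof of this lemma: it merely records Helly's theorem as a known result and immediately invokes it in the proof of Theorem~\ref{thm:Defining}. So there is no ``paper's proof'' to compare against; your argument simply fills in what the paper treats as a black box, and it does so in the standard textbook manner.
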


\begin{proof}[Proof of Theorem \ref{thm:Defining}]
    By Lemma \ref{lem:covering}, for any unit vector $\alpha$ and any yolk $B(c,r)$, there is at least one median hyperplane tangent to the $\alpha$-hemisphere $\He(\alpha,c,r)$ and
    \begin{align*}
        &\bigcup_{a\in {\cal H}(c,r)} F(a) = \mathbb{R}^k
        \Rightarrow  \bigcap_{a\in {\cal H}(c,r)}  F^c(a) = \emptyset
    \end{align*}
    where $F^c(a) = \{\alpha: a \cdot \alpha <0\}$ is the set of $\alpha$-hemisphere $\He(\alpha,c,r)$ uncovered by the tangent hyperplane $(a,a^\intercal c + r)$.\\

    The proof of Corollary 1 in \cite{Tovey92} establishes that there exists a set ${\bar{\cal H}}$ of size $O(n^k)$ such that ${\bar{\cal H}}$ is sufficient to define the yolk, i.e., $B(c,r)$ is a smallest ball intersects with every median hyperplane in ${\bar{\cal H}}$,
    \begin{align*}
        \bigcup_{a\in \bar{\cal H}} F(a) = \mathbb{R}^k 
        \Rightarrow   \bigcap_{a\in \bar{\cal H}}  F^c(a) = \emptyset
    \end{align*}
    For contradiction, we suppose that there is no set ${\cal H}'\subseteq \bar{\cal H}\subseteq {\cal H}(c,r)$ of size $k+1$ covering all, i.e. 
    \begin{center}
    \begin{align*}
    \underset{a\in {\cal H}'}\bigcup F(a) \neq \mathbb{R}^k \Rightarrow
    \underset{a\in {\cal H}'}\bigcap F^c(a) \neq \emptyset
    \end{align*} 
    \end{center}

    % {\color{red} Missing an extra line about hyperplanes in $\bar{\cal{H}}$ missing a citation for this for sure. }
    % {\color{red}The missing citation comes from a Tovey paper on an algorithmic approach for finding yolk. }
    
    %There exists a subset $\bar{\cal H}$ of ${\cal H}$ of size $O(n^k)$ such that $\bar{\cal H}$ is sufficient to define the yolk, i.e., $B(c,r)$ is a smallest ball that intersects with every median hyperplane in $\bar{\cal H}$.  %{\color{red} Citation?}
    We claim there exists a small set ${\cal H}'\subseteq \bar{\cal H}$ where $|{\cal H}'|= k+1$ and $B(c,r)$ is the smallest ball intersecting with every median hyperplane in ${\cal H}'$. We may apply Helly's theorem(Lemma \ref{lem:Helly}), since every $k+1$ sets has a nonempty intersection, then the intersection of all sets is nonempty, i.e., 
    %\\{\color{blue} citation, or simply restate.  since the intersection of every $k+1$ sets is non-empty, the intersection of all the sets is non-empty, i.e., }
    
    \begin{align*}
    \bigcap_{a\in\bar{\cal H}} F^c(a) \neq \emptyset
    \end{align*} 
     This contradicts $\bigcap_{a\in \bar{\cal H}}  F^c(a) = \emptyset$. Thus, there exists ${\cal H}'$ of size $k+1$ can support a yolk. 
\end{proof}

\section{Upper Bound on the Ratio Between the Yolk and LP Yolk in two dimensions}\label{sec:Odd}

In this section, we establish that the bound of $1/2$ given in Theorem \ref{thm:OddR2OK} is tight when $|I|$ is odd, i.e., the LP yolk radius is always at least half the size of the yolk. 
Given an arbitrary set of ideal points, we prove this result by selecting three very specific limiting median hyperlanes and showing that the radius of a smallest ball intersecting these hyperplanes is at least one half of the yolk radius. 
The LP yolk, which covers all limiting median hyperplanes, must be large enough to cover these three hyperplanes and therefore must have at least as large of a radius. 

We begin by showing that there always exists three median hyperplanes with beneficial properties for establishing our result.  
The configuration is depicted in Figure \ref{fig:construct x-axis}.

\begin{lemma}\label{lem:generality}
    Consider an arbitrary set of ideal points $I\subset \mathbb{R}^2$ where $|I|$ is odd. 
    By shifting and rotating the coordinate axes, and re-scaling the coordinate plane, we may assume the following without loss of generality:
    \begin{enumerate}
        \item $\B(\vec{0},1)$ is a yolk.
        \item There exists a median hyperplane $\H_3$ passing through $p_3=(\cos(\eta), \sin(\eta))$ where $\eta\in [-\pi/2,0)$.
        \item There exists a median hyperplane $\H_2$ passing through $p_2=(\cos(\eta+\alpha), \sin(\eta+\alpha))$ where $\alpha\in [\pi/2,\pi/2-\eta]$.
        \item There exists a median hyperplane $\H_1$ passing through $p_1=(\cos(\eta+\alpha+\beta), \sin(\eta+\alpha+\beta))$ where $\beta\in [\pi/2,\pi]$..
        \item A limiting median hyperplane $H_{3(\nu)}$ passing through $p_3$ can be found by rotating $H_3$ clockwise by angle $\nu=\pi/2+\eta$ through $p_3$.  The resulting median hyperplane is parallel to the $x$-axis.
    \end{enumerate}
\end{lemma}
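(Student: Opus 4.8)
The plan is to treat this as a coordinate-normalization lemma and to build the frame in stages: translation and scaling fix condition 1, after which the remaining rotational (and reflectional) freedom is spent arranging conditions 2--5. First I would record that a yolk exists and solves a well-posed problem: by \cite{bailey2023dimension} the set of median hyperplanes is compact, so a smallest intersecting ball $\B(c^\ast,r^\ast)$ exists (positivity holds generically; when $r^\ast=0$ the yolk is a point and the claimed bound is trivial since the LP yolk radius never exceeds the yolk radius, so assume $r^\ast>0$). Translating by $-c^\ast$ and rescaling by $1/r^\ast$ makes $\B(\vec 0,1)$ a yolk, establishing condition 1 and leaving a one-parameter family of rotations plus one reflection still free.

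Next I would extract three tangent median lines via Lemma \ref{lem:characterization} and Theorem \ref{thm:Defining}: every hemisphere of $\B(\vec 0,1)$ has a tangent median line, and at most $k+1=3$ of them already cover all hemispheres. For the unit ball centered at the origin the tangent point of a tangent line equals its normal $a$ (since $c+ar=a$), so by Lemma \ref{lem:covering} the three contact points $p_1,p_2,p_3$ satisfy $\bigcup_i F(p_i)=\mathbb{R}^2$, which is exactly the statement $\vec 0\in\operatorname{conv}\{p_1,p_2,p_3\}$: the three contact points surround the origin, and the three angular gaps between consecutive $p_i$ are each at most $\pi$ and sum to $2\pi$. (If only two tangent lines are needed they are antipodal, and I would pad with any third tangent median line furnished by Lemma \ref{lem:characterization}.) The combinatorial heart of conditions 2--4 is then: since the gaps sum to $2\pi$ and none exceeds $\pi$, at least two of them are $\ge\pi/2$ (otherwise the remaining gap would exceed $2\pi-\pi/2-\pi/2=\pi$). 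With only three contact points any two gaps share an endpoint, so I name that shared point $p_2$, label its neighbours $p_3,p_1$ counterclockwise, and set $\alpha$ (gap $p_3\to p_2$) and $\beta$ (gap $p_2\to p_1$) to be the two gaps that are $\ge\pi/2$; both are automatically $\le\pi$, giving $\beta\in[\pi/2,\pi]$. A single rotation places $p_3$ at angle $\eta$, and I would choose $\eta\in[-\pi/2,\min(0,\pi/2-\alpha)]$, nonempty precisely because $\alpha\le\pi$, so that simultaneously $\eta\in[-\pi/2,0)$ and $\alpha\le\pi/2-\eta$, yielding conditions 2, 3, and 4.

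The subtle step, and the one I expect to be the main obstacle, is condition 5: upgrading the (possibly non-limiting) tangent median line $H_3$ to a limiting median line parallel to the $x$-axis. Here I would invoke the structural consequence of odd cardinality in the plane: because $|I|$ is odd, every median line contains at least one ideal point, and as the normal direction rotates the median line's offset varies continuously while the ideal point it passes through changes only at finitely many transition directions, where two ideal points project equally and the median line passes through both, i.e. becomes a limiting median line. Rotating $H_3$ clockwise about $p_3$ sweeps the normal direction, and I would argue a transition is attained, producing a limiting median line through $p_3$. Since $H_3$ makes angle $\eta+\pi/2$ with the axis, rotating it clockwise by exactly $\nu=\pi/2+\eta$ brings it to the horizontal, so aligning the coordinate frame with this limiting line is what pins (or is verified against) $\eta$.

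The genuinely delicate point is reconciling this horizontality demand of (5) with conditions 2--4 using only the single remaining rotational degree of freedom, together with the relabeling $p_1\leftrightarrow p_3$ and the reflection. I must confirm that a consistent choice keeps $\eta\in[-\pi/2,0)$ while the $H_3$-derived limiting line is horizontal, and that the median-direction transition genuinely occurs within the admissible rotation range rather than overshooting. Cleanly establishing the odd-cardinality transition fact, and checking the compatibility of all the angular constraints, is where the real work of the proof lies; the normalization and gap combinatorics in the earlier steps are comparatively routine.
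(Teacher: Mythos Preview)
Your overall architecture is right, and you correctly isolate the rotational normalization as the crux. But the order of operations you propose is inverted relative to what actually works, and that inversion is precisely the ``delicate point'' you leave unresolved. The paper does not first pick $\eta$ and then try to make condition~(5) come out; instead it first constructs the limiting median line $H_{3(\nu)}$ by rotating $H_3$ about $p_3$ until a second ideal point is hit, and \emph{then} spends the remaining rotation to make $H_{3(\nu)}$ horizontal with $p_3$ below the axis. This forces $\eta$ to be determined by the geometry: the right triangle with vertices $\vec 0$, $p_3$, and $H_3\cap\{y=0\}$ gives $|\eta|=\pi/2-\nu$, so conditions (2)--(4) become things to \emph{verify} rather than to arrange. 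The bound $\eta\in[-\pi/2,0)$ and the upper bound $\alpha\le\pi/2-\eta$ in~(3) then both fall out of the single inequality $0\le\nu<\pi-\alpha$.

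That inequality, and hence the whole reconciliation, rests on a use of odd cardinality you do not invoke: when $|I|$ is odd, no two median lines share a normal direction (each direction has a unique median offset). This rules out $\nu\ge\pi-\alpha$, since $H_{3(\pi-\alpha)}$ would then be a median line parallel to $H_2$. You cite odd cardinality only to say every median line contains an ideal point; the sharper ``unique median per direction'' fact is the real engine, and it also cleanly disposes of the two-tangent-line case (your padding works, but is unnecessary). Finally, you are missing the lemma that justifies rotating about $p_3$ at all: rotating $H_3$ about $p_3$ stays median only because $p_3$ itself is an ideal point whenever $H_3$ is non-limiting (Lemma~\ref{lem:idealpoint}); your paragraph conflates rotating the \emph{direction} of median lines (which tracks varying ideal points) with rotating about the fixed tangent point $p_3$, and these coincide only once you know $p_3\in I$.
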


    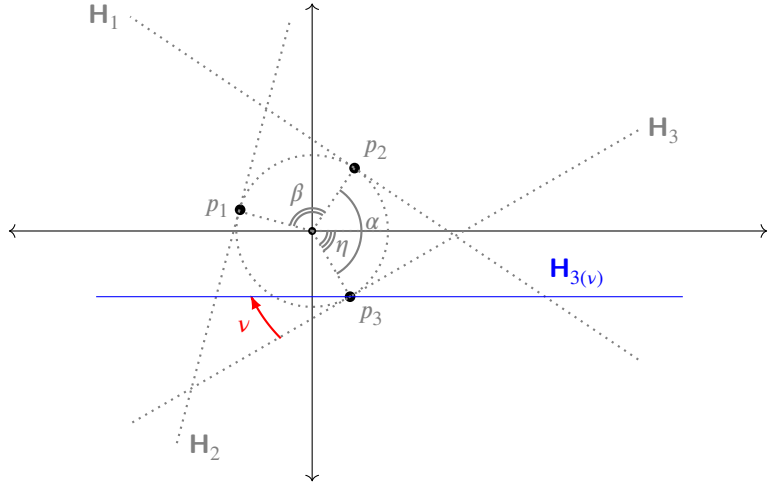
\begin{figure}[!ht]
    \centering
    \begin{tikzpicture}
    
    \def\a{56}
    \def\b{60}
    \def\x{4.3}
    \def\e{.5}
    \def\y{-2.4}
    \def\d{-30}
    \def\z{50.1}
    \def\v{-165}
    \def\g{20}
    \def\f{-1.8}
    \def\ff{-.73}
    \def\l{-.3}
    
    \pgfmathsetmacro\c{cos(\a)}
    \pgfmathsetmacro\s{sin(\a)}
    \pgfmathsetmacro\cc{cos(\a/2)/2}
    \pgfmathsetmacro\ss{sin(\a/2)/2}
    \pgfmathsetmacro\p{-\c/\s*\x+1/\s}
    \pgfmathsetmacro\xx{1/(\c)-\e}
    %\pgfmathsetmacro\pp{-\c/\s*\xx+1/\s}
    \pgfmathsetmacro\pp{-\c/\s*\y+1/\s}
    \pgfmathsetmacro\vv{sin(-\v)}
    \pgfmathsetmacro\uu{cos(-\v)}
    \pgfmathsetmacro\vvv{-\uu/\vv*\l+1/\vv}
    \pgfmathsetmacro\uuu{-\uu/\vv*\f+1/\vv}

    \pgfmathsetmacro\n{cos(\b)}
    \pgfmathsetmacro\m{sin(\b)}
    \pgfmathsetmacro\q{\n/\m*\x-1/\m)}
    \pgfmathsetmacro\qq{\n/\m*\y-1/\m)}
    \pgfmathsetmacro\nn{cos(\b/2)/2}
    \pgfmathsetmacro\mm{sin(\b/2)/2}
    
    \draw[black,<->] (-4,0)--(6,0);
    \draw[black, <->] (0,-3.3)--(0,3);
    
    \fill[black] (0,0) circle[radius=1.5pt];
    \fill[black] (\c,\s) circle[radius=2pt];
    \fill[black] (\n,-\m) circle[radius=2pt];
    \fill[black] (-.95,0.28) circle[radius=2pt];
    
    \node[above right] at (\c,\s) {$p_2$};
    \node[below right] at (\n,-\m) {$p_3$};
    \node[left] at (-.96,0.3) {$p_1$};
    
    %\node[above right] at (1,0) {(1,0)};
    %\node[above right] at (\x,\p) {$x^{+1}$};
    %\node[above right] at (\x,-\p) {$x^{-1}$};
    
    %\fill[black] (\y,\pp) circle[radius=3pt];
    %\fill[black] (\y,\qq) circle[radius=3pt];
    
    %\node[below left] at (\y,\pp) {$\H_1(a_1,b_1)$};
    %\node[above left] at (\y,\qq) {$\H_2(a_2,b_2)$};
    %\node[left] at (1,3) {$\H_3(a_3,b_3)$};
    %\node[above,blue] at (-3.9,1.5) {$\H_{3(\delta')}(a_{3(\delta')},b_{3(\delta')})$};
    \node[above right,blue] at (3,-\m) {$\H_{3(\nu)}$};

    \draw[dotted,thick] (\x,\p)--(\y,\pp);
    %\draw (\x,-\p)--(\xx,-\pp);
    \draw[dotted,thick] (\x,\q)--(\y,\qq);
    \draw[dotted,thick] (\l, \vvv)--(\f,\uuu);
    \draw[dotted,thick] (0,0)--(1,0);
    
    \draw[dotted,thick] (0,0) circle[radius=1];

    %%%%%%%%%%%%%%%%%%%%%%%%%%%%%%%%%%%%%%%%%%
    \node[] at (0.8,0.1) {$\alpha$};
    % \draw[] (.2,0) arc (0:\a:.2);
    % \draw[] (.25,0) arc (0:\a:.25);
    % \draw[] (.2,0) arc (0:-\b:.2);
    % \draw[] (.15,.2) arc (\a:-\v:.2);
    % \draw[] (.15,.25) arc (\a:-\v:.25);
    % \draw[] (.15,.3) arc (\a:-\v:.3);
    \draw [thick,domain=-\b:0] plot ({.3*cos(\x)}, {.3*sin(\x)}); 
    \draw [thick,domain=-\b:0] plot ({.2*cos(\x)}, {.2*sin(\x)}); 
    \draw [thick,domain=-\b:0] plot ({.25*cos(\x)}, {.25*sin(\x)}); 
    
    % \draw [thick,domain=0:\a] plot ({.2*cos(\x)}, {.2*sin(\x)}); 
    \draw [thick,domain=-\b:\a] plot ({.65*cos(\x)}, {.65*sin(\x)}); 
    
    \draw [thick,domain=\a:-\v] plot ({.3*cos(\x)}, {.3*sin(\x)}); 
    \draw [thick,domain=\a:-\v] plot ({.25*cos(\x)}, {.25*sin(\x)});
    
    \node[above] at (-.2,0.2) {$\beta$};
    %\draw[thick] (.25,0) arc (0:-\b:.25);
    \node[below] at (.4,0) {$\eta$};
    %%%%%%%%%%%%%%%%%%%%%%%%%%%%%%%%%%%%%%%%%%

    %\draw[dotted,thick] (1,0)--(0,0);
    \draw[dotted,thick] (\c,\s)--(0,0);
    \draw[dotted,thick] (0,0)--(\n,-\m);
    \draw[dotted,thick] (0,0)--(\uu,\vv);
    
    \coordinate (o) at (1,0);
    \coordinate (a) at (1,3);
    \coordinate (b) at (-2.1,2.76);
    
    \pgfmathsetmacro\e{(sin(\b)+sin(\a))/sin((\a)+(\b))}
    \pgfmathsetmacro\f{(cos(\b)-cos(\a))/sin((\a)+(\b))}
    %\fill[black] (\e,\f) circle[radius=3pt];
    \tkzDefPoint(\x,\p){A}
    \tkzDefPoint(\e,\f){B}
    \tkzDefPoint(\x,\q){C}
    
    \node[left] at (\y,\pp) {$\H_1$};
    \node[right] at (\x,\q) {$\H_3$};
    \node[right] at (-1.75,\uuu) {$\H_2$};

    %%%%%%%%%%%%%%%%%%%%%%%%%%%%%%%%%%%%%%%%%%
    \draw[-latex] [thick,domain=-135:-155,red] plot ({2*cos(\x)+1}, {2*sin(\x)});
    \node[red] at (-.9,-1.25) {$\nu$};
    %%%%%%%%%%%%%%%%%%%%%%%%%%%%%%%%%%%%%%%%%%
    
    %\draw[-latex] [thick,domain=90:140,red] plot ({2*cos(\x)+1}, {2*sin(\x)});
    %\node[red] at (0.3,1.5) {$\delta$};
    
    %\node[blue] at (1+2.3*cos(120), 2.3*sin(120)) {$\delta$};
    %arrow size

    %\draw[dashed] (1,3)--(1,-3);
    \draw[rotate around={\d:(\n,-\m)},blue] (\x,\q)--(\y,\qq);
    %\draw[rotate around={\z:(1,0)},red] (1,3.9)--(1,-3.9);
    %\tkzDefLine[bisector](A,B,C)\tkzGetPoint{t}
    %\tkzDrawSegment[dotted,thick](B,(0,0))
    
    \end{tikzpicture}\caption{Without loss of generality, the configuration of the three selected median hyperplanes for an arbitrary set of ideal points.}\label{fig:construct x-axis}
    \end{figure}

\begin{proof}
    The yolk and LP yolk solution concepts are defined via Euclidean measurements and therefore are preserved through shifts, re-scaling, and rotations. 
    Thus, by shifting and re-scaling, we may assume without loss of generality that $\B(\vec{0},1)$ is a yolk.
    Later on, we will only rotate the coordinate plane, and $\B(\vec{0},1)$ will remain a yolk.  
    Thus, condition (1) of the lemma holds.  

    Next, we claim this is a set of exactly 3 tangent median hyperplanes covering all hemispheres of $\B(\vec{0},1)$ --- note that Theorem \ref{thm:Defining} indicates that there is a set of at most 3. 
    By Lemma \ref{lem:characterization}, there must be a median hyperplane tangent to every hemisphere of $\B(\vec{0},1)$.  
    Therefore, there must be at least two tangent median hyperplanes. 
    If there are only two, then they must be in the form $\H(a, 1)$ and $\H(a, - 1)$ (covering opposite poles of $\B(\vec{0},1)$) implying two median hyperplanes have the same gradient. 
    However, by the constrapositive of Lemma 14 in \cite{bailey2023dimension}, if the number of ideal points is odd, then every gradient yields a unique median hyperplane, a contradiction to there being only two tangent median hyperplanes.
    Thus there are three tangent median hyperplanes that define the yolk --- i.e., that cover all hemispheres.

    Next, let $\H_1, \H_2$, and $\H_3$ be three tangent median hyperplanes that define the yolk $\B(\vec{0},1)$ as in the previous claim. 
    Let $p_i = \H_i\cap \B(\vec{0},1)$ be the point on the hyperplane that intersects with the yolks.
    We remark that $p_i$ is not necessarily an ideal point when $\H_i$ is limiting, but later show in Lemma \ref{lem:idealpoint} that $p_i$ is an ideal point if $\H_i$ is non-limiting. 
    The points $p_1,p_2$, and $p_3$ divide the ball $\B(\vec{0},1)$ into three slices with total degree $2\pi$. 
    We first claim that each slice has angle at most $\pi$; if not, then $p_1$, $p_2$ and $p_3$ appear on a same hemisphere contradicting that $\H_1$, $\H_2$, and $\H_3$ cover all hemispheres. 
    Similarly, there are at least two slices with angle at least $\pi/2$ since otherwise the third slice would have angle more than $\pi$. 
    Thus, by relabeling our points and hemispheres, we may assume $p_1$, $p_2$, and $p_3$ appear clockwise on $\B(\vec{0},1)$ where the angle between $p_1$ and $p_2$ is $\beta \geq \pi/2$ and the angle between $p_2$ and $p_3$ is $\alpha\geq \pi/2$ as depicted in Figure \ref{fig:pointsplacement}.
    Thus $\alpha \in [\pi/2,\pi]$ and $\beta\in [\pi/2,\pi]$.
    Note that if $p_3=(\cos(\eta),\sin(\eta))$ and $\eta \leq \pi/2-\alpha$ then conditions (3) and (4) of the Lemma immediately follow and it remains to show (2) and (5). 

 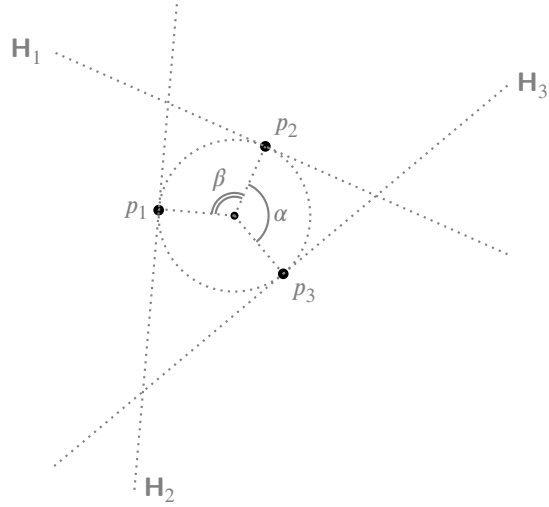
\begin{figure}[!ht]
\centering
\begin{tikzpicture}

\def\a{66}
\def\b{50}
\def\x{3.6}
\def\e{.5}
\def\y{-2.4}
\def\d{-30}
\def\z{50.1}
\def\v{-175}
\def\g{20}
\def\f{-1.32}
\def\ff{-.73}
\def\l{-.76}

\pgfmathsetmacro\c{cos(\a)}
\pgfmathsetmacro\s{sin(\a)}
\pgfmathsetmacro\cc{cos(\a/2)/2}
\pgfmathsetmacro\ss{sin(\a/2)/2}
\pgfmathsetmacro\p{-\c/\s*\x+1/\s}
\pgfmathsetmacro\xx{1/(\c)-\e}
%\pgfmathsetmacro\pp{-\c/\s*\xx+1/\s}
\pgfmathsetmacro\pp{-\c/\s*\y+1/\s}
\pgfmathsetmacro\vv{sin(-\v)}
\pgfmathsetmacro\uu{cos(-\v)}
\pgfmathsetmacro\vvv{-\uu/\vv*\l+1/\vv}
\pgfmathsetmacro\uuu{-\uu/\vv*\f+1/\vv}

\pgfmathsetmacro\n{cos(\b)}
\pgfmathsetmacro\m{sin(\b)}
\pgfmathsetmacro\q{\n/\m*\x-1/\m)}
\pgfmathsetmacro\qq{\n/\m*\y-1/\m)}
\pgfmathsetmacro\nn{cos(\b/2)/2}
\pgfmathsetmacro\mm{sin(\b/2)/2}

\fill[black] (0,0) circle[radius=1.5pt];
\fill[black] (\c,\s) circle[radius=2pt];
\fill[black] (\n,-\m) circle[radius=2pt];
\fill[black] (-1,0.073) circle[radius=2pt];

\node[above right] at (\c,\s) {$p_2$};
\node[below right] at (\n,-\m) {$p_3$};
\node[left] at (-1,0.073) {$p_1$};

%\node[above right] at (1,0) {(1,0)};
%\node[above right] at (\x,\p) {$x^{+1}$};
%\node[above right] at (\x,-\p) {$x^{-1}$};

%\fill[black] (\y,\pp) circle[radius=3pt];
%\fill[black] (\y,\qq) circle[radius=3pt];

\node[left] at (\y,\pp) {$\H_1$};
\node[right] at (\x,\q) {$\H_3$};
\node[right] at (\f,\uuu) {$\H_2$};
%\node[above,blue] at (-3.9,1.5) {$\H_{3(\delta')}(a_{3(\delta')},b_{3(\delta')})$};
%\node[above,red] at (-3,2.5) {$\H_{3(\delta)}(a_{3(\delta)},b_{3(\delta)})$};

\draw[dotted,thick] (\x,\p)--(\y,\pp);
%\draw (\x,-\p)--(\xx,-\pp);
\draw[dotted,thick] (\x,\q)--(\y,\qq);
\draw[dotted,thick] (\l, \vvv)--(\f,\uuu);

\draw[dotted,thick] (0,0) circle[radius=1];

\node[right] at (0.39,0) {$\alpha$};
%\draw[thick] (.2,0) arc (0:\a:.2);
%\draw[thick] (.2,0) arc (0:-\b:.2);
%\draw[thick] (.1,.3) arc (\a:-\v:.28);

\draw [thick,domain=\a:-\v] plot ({.3*cos(\x)}, {.3*sin(\x)}); 
\draw [thick,domain=\a:-\v] plot ({.25*cos(\x)}, {.25*sin(\x)});

% \draw [thick,domain=0:\a] plot ({.2*cos(\x)}, {.2*sin(\x)}); 
\draw [thick,domain=-\b:\a] plot ({.45*cos(\x)}, {.45*sin(\x)}); 

\node[above] at (-.2,0.2) {$\beta$};
%\draw[thick] (.25,0) arc (0:-\b:.25);

%\draw[dotted,thick] (1,0)--(0,0);
\draw[dotted,thick] (\c,\s)--(0,0);
\draw[dotted,thick] (0,0)--(\n,-\m);
\draw[dotted,thick] (0,0)--(\uu,\vv);

\coordinate (o) at (1,0);
\coordinate (a) at (1,3);
\coordinate (b) at (-2.1,2.76);

\pgfmathsetmacro\e{(sin(\b)+sin(\a))/sin((\a)+(\b))}
\pgfmathsetmacro\f{(cos(\b)-cos(\a))/sin((\a)+(\b))}
%\fill[black] (\e,\f) circle[radius=3pt];
\tkzDefPoint(\x,\p){A}
\tkzDefPoint(\e,\f){B}
\tkzDefPoint(\x,\q){C}

% \draw[-latex] [thick,domain=-130:-150.5,red] plot ({2*cos(\x)+1}, {2*sin(\x)});
% \node[red] at (-.75,-1.56) {$\nu$};

% \node[below right,blue] at (4.2,-.15) {$\H_{3(\nu)}$};

%\draw[dashed] (1,3)--(1,-3);
% \draw[rotate around={\d:(\n,-\m)},blue] (\x,\q)--(\y,\qq);
%\draw[rotate around={\z:(1,0)},red] (1,3.9)--(1,-3.9);
%\tkzDefLine[bisector](A,B,C)\tkzGetPoint{t}
%\tkzDrawSegment[dotted,thick](B,(0,0))
\end{tikzpicture}\caption{Initial location of tangent hyperplanes $\H_1$, $\H_2$, and $\H_3$.}\label{fig:pointsplacement}
\end{figure}

    Prior to establishing (2) and (5), we introduce a limiting hyperplane going through $p_3$. 
    Let $\H_{3(\theta)}$ be the hyperplane obtained by rotating $\H_3$ in the clockwise direction around the point $p_3$ by angle $\theta\geq 0$. 
    Let $\nu\geq 0$ be the smallest angle where $\H_{3(\nu)}$ intersects with at least two ideal points. 
    By Lemma \ref{lem:MiddleMedian} in the appendix, $\H_{3(\theta)}$ is a median hyperplane for all $\theta \in [0,\nu]$ and $\H_{3(\nu)}$ is a limiting median hyperplane as depicted in Figure \ref{fig:search for H3}. 
    Note that $\nu=0$ if $\H_3$ is already limiting. 
    Further, $\nu < \pi-\alpha\leq \pi/2$ since otherwise $\H_{3(\pi-\alpha)}$ is a median hyperplane with the same gradient as $\H_2$, which cannot occur since we have already established that two median hyperplanes cannot have the same gradient when $|I|$ is odd in $\mathbb{R}^2$. 

    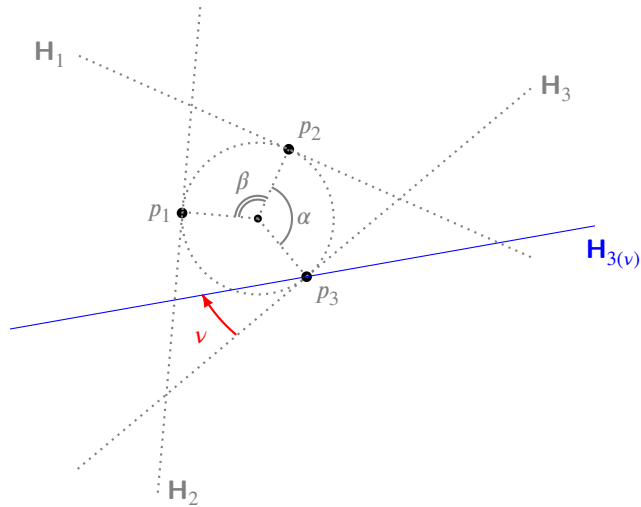
\begin{figure}[!ht]
\centering
\begin{tikzpicture}

\def\a{66}
\def\b{50}
\def\x{3.6}
\def\e{.5}
\def\y{-2.4}
\def\d{-30}
\def\z{50.1}
\def\v{-175}
\def\g{20}
\def\f{-1.32}
\def\ff{-.73}
\def\l{-.76}

\pgfmathsetmacro\c{cos(\a)}
\pgfmathsetmacro\s{sin(\a)}
\pgfmathsetmacro\cc{cos(\a/2)/2}
\pgfmathsetmacro\ss{sin(\a/2)/2}
\pgfmathsetmacro\p{-\c/\s*\x+1/\s}
\pgfmathsetmacro\xx{1/(\c)-\e}
%\pgfmathsetmacro\pp{-\c/\s*\xx+1/\s}
\pgfmathsetmacro\pp{-\c/\s*\y+1/\s}
\pgfmathsetmacro\vv{sin(-\v)}
\pgfmathsetmacro\uu{cos(-\v)}
\pgfmathsetmacro\vvv{-\uu/\vv*\l+1/\vv}
\pgfmathsetmacro\uuu{-\uu/\vv*\f+1/\vv}

\pgfmathsetmacro\n{cos(\b)}
\pgfmathsetmacro\m{sin(\b)}
\pgfmathsetmacro\q{\n/\m*\x-1/\m)}
\pgfmathsetmacro\qq{\n/\m*\y-1/\m)}
\pgfmathsetmacro\nn{cos(\b/2)/2}
\pgfmathsetmacro\mm{sin(\b/2)/2}

\fill[black] (0,0) circle[radius=1.5pt];
\fill[black] (\c,\s) circle[radius=2pt];
\fill[black] (\n,-\m) circle[radius=2pt];
\fill[black] (-1,0.073) circle[radius=2pt];

\node[above right] at (\c,\s) {$p_2$};
\node[below right] at (\n,-\m) {$p_3$};
\node[left] at (-1,0.073) {$p_1$};

%\node[above right] at (1,0) {(1,0)};
%\node[above right] at (\x,\p) {$x^{+1}$};
%\node[above right] at (\x,-\p) {$x^{-1}$};

%\fill[black] (\y,\pp) circle[radius=3pt];
%\fill[black] (\y,\qq) circle[radius=3pt];

\node[left] at (\y,\pp) {$\H_1$};
\node[right] at (\x,\q) {$\H_3$};
\node[right] at (\f,\uuu) {$\H_2$};
%\node[above,blue] at (-3.9,1.5) {$\H_{3(\delta')}(a_{3(\delta')},b_{3(\delta')})$};
%\node[above,red] at (-3,2.5) {$\H_{3(\delta)}(a_{3(\delta)},b_{3(\delta)})$};

\draw[dotted,thick] (\x,\p)--(\y,\pp);
%\draw (\x,-\p)--(\xx,-\pp);
\draw[dotted,thick] (\x,\q)--(\y,\qq);
\draw[dotted,thick] (\l, \vvv)--(\f,\uuu);

\draw[dotted,thick] (0,0) circle[radius=1];

\node[right] at (0.39,0) {$\alpha$};
%\draw[thick] (.2,0) arc (0:\a:.2);
%\draw[thick] (.2,0) arc (0:-\b:.2);
%\draw[thick] (.1,.3) arc (\a:-\v:.28);

\draw [thick,domain=\a:-\v] plot ({.3*cos(\x)}, {.3*sin(\x)}); 
\draw [thick,domain=\a:-\v] plot ({.25*cos(\x)}, {.25*sin(\x)});

% \draw [thick,domain=0:\a] plot ({.2*cos(\x)}, {.2*sin(\x)}); 
\draw [thick,domain=-\b:\a] plot ({.45*cos(\x)}, {.45*sin(\x)}); 

\node[above] at (-.2,0.2) {$\beta$};
%\draw[thick] (.25,0) arc (0:-\b:.25);

%\draw[dotted,thick] (1,0)--(0,0);
\draw[dotted,thick] (\c,\s)--(0,0);
\draw[dotted,thick] (0,0)--(\n,-\m);
\draw[dotted,thick] (0,0)--(\uu,\vv);

\coordinate (o) at (1,0);
\coordinate (a) at (1,3);
\coordinate (b) at (-2.1,2.76);

\pgfmathsetmacro\e{(sin(\b)+sin(\a))/sin((\a)+(\b))}
\pgfmathsetmacro\f{(cos(\b)-cos(\a))/sin((\a)+(\b))}
%\fill[black] (\e,\f) circle[radius=3pt];
\tkzDefPoint(\x,\p){A}
\tkzDefPoint(\e,\f){B}
\tkzDefPoint(\x,\q){C}

\draw[-latex] [thick,domain=-130:-150.5,red] plot ({2*cos(\x)+1}, {2*sin(\x)});
\node[red] at (-.75,-1.56) {$\nu$};

\node[below right,blue] at (4.2,-.15) {$\H_{3(\nu)}$};

%\draw[dashed] (1,3)--(1,-3);
\draw[rotate around={\d:(\n,-\m)},blue] (\x,\q)--(\y,\qq);
%\draw[rotate around={\z:(1,0)},red] (1,3.9)--(1,-3.9);
%\tkzDefLine[bisector](A,B,C)\tkzGetPoint{t}
%\tkzDrawSegment[dotted,thick](B,(0,0))
\end{tikzpicture}\caption{We obtain the limiting median hyperplane $\H_{3(\nu)}$ by rotating $\H_3$ clockwise by the angle $\nu$. If $H_3$ is already limiting, then $\nu=0$.}\label{fig:search for H3}
\end{figure}

    To establish conditions (2) and (5), we simply rotate the coordinate plane so that (a) $H_{3(\nu)}$ is parallel to the $x$-axis, and (b) $p_3$ is on or below the $x$-axis as in Figure \ref{fig:construct x-axis}.
    If $p_3$ is on the $x$-axis, then there are two possible rotation where $p_3\in (\pm 1,0)$; in this case we select the configuration where $p_3=(1,0)$. 
    Let $\eta$ be the angle between the $x$-axis and $p_3$.  
    We define $\eta$ such that $\eta\in (-\pi,0]$ since $p_3$ is below the $x$-axis, i.e., $p_3=(\cos(\eta),\sin(\eta))$. 
    By observing the right triangle formed by $\vec{0}$, $p_3$, and the intersection of the $x$-axis with $\H_3$, we conclude $|\eta|=\pi/2-\nu$. 
    Earlier, we showed $0\leq \nu < \pi-\alpha \leq \pi/2$, and therefore $|\eta| \in (0, \pi/2]$.  
    Since $\eta<0$, $\eta \in [-\pi/2,0)$.
    Furthermore, since $|\eta|=\frac{\pi}{2}-\nu=\frac{\pi}{2}-\pi+\alpha=-\frac{\pi}{2}+\alpha$
    and $\eta<0$, $\eta\leq -\left(-\frac{\pi}{2}+\alpha\right)=\frac{\pi}{2}-\alpha$. 
    Therefore $\eta\in[-\frac{\pi}{2},\frac{\pi}{2}-\alpha)$ thereby establishing the upper bound on $\alpha$ in (3).
    We remark that this condition actually implies that (b) cannot occur, i.e,. $\eta<0$ means that $p_3$ cannot be on the $x$-axis. 
    The angle $\eta$ and $p_3$ now satisfy condition (2) and $\H_{3(\nu)}$ satisfies condition (5). 
    Thus, all 5 conditions hold. 
\end{proof}

Now that we have established these hyperplanes exist for an arbitrary instance, we show that the LP yolk radius is always at least 1/2 of the yolk radius by considering only median planes passing through $p_1$, $p_2$, and $p_3$.

\begin{theorem}\label{thm:MainHalf}
    When $|I|$ is odd in $\mathbb{R}^2$, the LP yolk radius is always at least 1/2 the size of the yolk radius.  
\end{theorem}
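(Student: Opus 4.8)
The plan is to exploit the normalization from Lemma~\ref{lem:generality}: after shifting, rotating, and rescaling we may take the yolk to be $\B(\vec 0,1)$, defined by three tangent median hyperplanes $\H_1,\H_2,\H_3$ touching the unit circle at $p_1,p_2,p_3$, with consecutive arcs $\alpha,\beta\in[\pi/2,\pi]$ and with the distinguished limiting median $H_{3(\nu)}$ through $p_3$ made parallel to the $x$-axis ($\nu=\pi/2+\eta$, $\eta\in[-\pi/2,0)$). Since the yolk radius is now $1$, it suffices to prove the LP yolk radius is at least $1/2$. The driving observation is that the LP yolk must intersect \emph{every} limiting median hyperplane, so its radius is bounded below by the radius of the smallest ball meeting any fixed collection of limiting medians. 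I would therefore isolate three specific limiting medians and bound the smallest ball meeting those three.

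First I would produce three limiting medians $L_1,L_2,L_3$ through $p_1,p_2,p_3$. Take $L_3=H_{3(\nu)}$, already supplied by Lemma~\ref{lem:generality}. For $i\in\{1,2\}$, if $\H_i$ is already limiting set $L_i=\H_i$; otherwise Lemma~\ref{lem:idealpoint} guarantees $p_i$ is an ideal point, so I can rotate $\H_i$ about $p_i$ and invoke Lemma~\ref{lem:MiddleMedian} to stay a median until the line first meets a second ideal point, yielding a limiting median $L_i$ through $p_i$. The crucial bookkeeping here is to record how far each $L_i$ may have rotated away from the tangent direction at $p_i$: exactly as for $H_{3(\nu)}$ (where $\nu<\pi-\alpha$ forced the distance of $L_3$ from the center to exceed $|\cos\alpha|$), the admissible rotation is capped because two medians cannot share a gradient when $|I|$ is odd (Lemma 14 in \cite{bailey2023dimension}), so each $L_i$ remains a bounded-away-from-diametral chord of the unit circle.

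Next I would reduce the radius bound to a planar computation. Because the three lines $L_1,L_2,L_3$ generically form a triangle, the smallest ball intersecting all three is its incircle, so its radius equals the inradius of that triangle: the function $c\mapsto\max_i d(c,L_i)$ is convex and, since the three inward normals of a bounded triangle positively span the plane, it is minimized at the incenter with value equal to the inradius (this can also be read off from the hemisphere characterization of Lemma~\ref{lem:characterization}). Writing this inradius explicitly in terms of $p_1,p_2,p_3$ and the recorded rotation angles of $L_1,L_2,L_3$, I would then minimize over the feasible ranges of $\alpha,\beta,\eta$ and of those rotation angles, aiming to show the minimum is $1/2$, attained only in a limit.

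The main obstacle is precisely this last optimization. The lines are allowed to be secants rather than tangents --- $L_3$ already cuts the interior at height $\sin(\eta)$ --- and pushing all three chords toward diameters would collapse the inradius to $0$; the entire content of the theorem is that the oddness of $|I|$, through the no-repeated-gradient constraint and the minimal-rotation construction, prevents this collapse and keeps the inradius at least $1/2$. Controlling the joint effect of the three (only partially determined) orientations, rather than a single worst line, is the delicate step, and I expect it to require monotonicity arguments reducing to an extremal configuration whose inradius equals $1/2$ --- necessarily the same family as in Theorem~\ref{thm:OddR2OK}, where the bound is approached as $\alpha,\beta\to\pi/2$, confirming tightness.
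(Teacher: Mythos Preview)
Your plan is the same strategy the paper uses: normalize via Lemma~\ref{lem:generality}, pick three limiting medians through $p_1,p_2,p_3$, and lower-bound the LP yolk radius by the inradius of the triangle they form. You also correctly anticipate that the final step is a monotonicity argument reducing to the extremal configuration from Theorem~\ref{thm:OddR2OK}.

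The substantive gap is in how you produce $L_1$ and $L_2$. You say you will rotate each $\H_i$ about $p_i$ ``until the line first meets a second ideal point,'' and bound each rotation \emph{individually} against an original tangent line (``exactly as for $H_{3(\nu)}$''). That is not enough. The paper's argument works because the rotation directions are chosen deliberately --- $\H_2$ is rotated clockwise by $\gamma$ and $\H_1$ counter-clockwise by $\delta$, so the two rotated lines are turning \emph{toward each other} --- and then the no-repeated-gradient lemma is applied to the pair $(\H_{1(\delta)},\H_{2(\gamma)})$, not just to each against an unrotated neighbor. This yields the joint constraint $\delta<\pi-\beta-\gamma$, i.e.\ $\beta+\gamma+\delta<\pi$, and that inequality is precisely what drives the final computation: after dropping a nonnegative term from the explicit inradius formula, the paper shows the resulting bound $f(\beta)$ is decreasing in $\beta$ and then evaluates at $\beta=\pi-\gamma-\delta$ to get $(\cos\delta+\cos\gamma)/2\ge 1/2$. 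If you bound $\gamma$ and $\delta$ only individually (each against an original $\H_j$), you do not obtain $\beta+\gamma+\delta\le\pi$, and the optimization can no longer be closed at $1/2$.

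So: keep your outline, but specify the rotation directions so that $L_1$ and $L_2$ rotate toward one another, and obtain the rotation bound on $\delta$ by comparing $\H_{1(\delta)}$ to the already-rotated $\H_{2(\gamma)}$. With that joint constraint in hand, the inradius reduces (after the trigonometric work the paper carries out) to $(\cos\delta+\cos\gamma)/2$ at the extremal $\beta$, which is $\ge 1/2$ since $\gamma+\delta\le\pi/2$.
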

\begin{proof}
    Given an arbitrary instance, we will identify 3 specific limiting median hyperplanes and will show the smallest ball intersecting these three limiting median hyperplanes has radius at least 1/2 of the yolk radius.  
    Since the LP yolk must also cover these three hyperplanes, the LP yolk radius will be at least as large. 

    Without loss of generality, it suffices to consider the configuration given in the statement of Lemma \ref{lem:generality}.
    $\H_{3(\nu)}$ is our first limiting median hyperplane. 
    We find the other two by rotating $\H_2$ and $\H_1$. 
    Based on conditions $(1)-(5)$ of Lemma \ref{lem:generality}, the equations for $\H_1,\H_2, \H_3$, and $\H_{3(\nu)}$ are:
        \begin{align*}
        &\H_1=\left\{(x,y):\cos{(\alpha+\eta+\beta)}\cdot x+ {\sin{(\alpha+\eta+\beta)}}\cdot y=1\right\}\\
        &\H_2=\left\{(x,y):{\cos{(\alpha+\eta)}}\cdot x+\sin{(\alpha+\eta)}\cdot y=1\right\}\\
        &\H_3=\left\{(x,y):{\cos{(\eta)}}\cdot x+\sin{(\eta)}\cdot y =1\right\}\\
        &\H_{3(\nu)}=\left\{(x,y):y=\sin{(\eta)}\right\}\\
    \end{align*}
    Next, we identify two other limiting median hyperplanes that pass through $p_1$ and $p_2$.  
    These median hyperplanes are depicted in Figure \ref{fig:construct} and formally defined below. 
    We will use these three limiting median hyperplanes to provide a lower bound on the radius of the LP yolk.

\begin{figure}[!ht]
\centering
\begin{tikzpicture}

\def\a{56}
\def\b{60}
\def\x{3.9}
\def\i{3.3}
\def\e{.5}
\def\y{-2.2}
\def\d{-30}
\def\z{50.1}
\def\v{195}
\def\g{20}
\def\f{-1.8}
\def\ff{-.73}
\def\l{-.3}
\def\dd{-20}
\def\ddd{-80}

\pgfmathsetmacro\c{cos(\a)}
\pgfmathsetmacro\s{sin(\a)}
\pgfmathsetmacro\cc{cos(\a/2)/2}
\pgfmathsetmacro\ss{sin(\a/2)/2}
\pgfmathsetmacro\p{-\c/\s*\x+1/\s}
\pgfmathsetmacro\xx{1/(\c)-\e}
%\pgfmathsetmacro\pp{-\c/\s*\xx+1/\s}
\pgfmathsetmacro\pp{-\c/\s*\y+1/\s}
\pgfmathsetmacro\vv{sin(-\v)}
\pgfmathsetmacro\uu{cos(-\v)}
\pgfmathsetmacro\vvv{-\uu/\vv*\l+1/\vv}
\pgfmathsetmacro\uuu{-\uu/\vv*\f+1/\vv}

\pgfmathsetmacro\n{cos(\b)}
\pgfmathsetmacro\m{sin(\b)}
\pgfmathsetmacro\q{\n/\m*\i-1/\m)}
\pgfmathsetmacro\qq{\n/\m*\y-1/\m)}
\pgfmathsetmacro\nn{cos(\b/2)/2}
\pgfmathsetmacro\mm{sin(\b/2)/2}

\fill[black] (0,0) circle[radius=1.5pt];
\fill[black] (\c,\s) circle[radius=2pt];
\fill[black] (\n,-\m) circle[radius=2pt];
%\fill[black] (-.95,0.28) circle[radius=2pt];
\fill[black] (\uu,\vv) circle[radius=2pt];

\node[above right] at (\c,\s) {$p_2$};
\node[below] at (\n,-\m) {$p_3$};
\node[left] at (\uu,\vv) {$p_1$};

\node[above,blue] at (4.8,-\m) {$\H_{3(\nu)}$};
\node[left,blue] at (-.96,-3.1) {$\H_{1(\delta)}$};
\node[left,blue] at (-1.3,3.3) {$\H_{2(\gamma)}$};

\draw[dotted,thick] (\x,\p)--(\y,\pp);
%\draw (\x,-\p)--(\xx,-\pp);
\draw[dotted,thick] (\i,\q)--(\y,\qq);
\draw[dotted,thick] (\l, \vvv)--(\f,\uuu);
\draw[dotted,thick] (0,0)--(1,0);
\draw[blue] (\uu,3.6)--(\uu,-3);
\draw[dotted,thick] (0,0) circle[radius=1];

%\node[above right] at (0.2,0) {\large$\alpha$};
%\draw[thick] (.2,0) arc (0:\a:.2);
%\draw[thick] (.2,0) arc (0:-\b:.2);
%\draw[thick] (.15,.28) arc (\a:165:.28);
%\draw[thick] (.3,0) arc (0:-\b:.3);

%\node[above] at (-.2,0.2) {\large$\beta$};
%\draw[thick] (.25,0) arc (0:-\b:.25);
%\node[below] at (.4,0) {\large$\eta$};

%\draw[dotted,thick] (1,0)--(0,0);
\draw[dotted,thick] (\c,\s)--(0,0);
\draw[dotted,thick] (0,0)--(\n,-\m);
\draw[dotted,thick] (0,0)--(\uu,\vv);

\coordinate (o) at (1,0);
\coordinate (a) at (1,3);
\coordinate (b) at (-2.1,2.76);

\pgfmathsetmacro\e{(sin(\b)+sin(\a))/sin((\a)+(\b))}
\pgfmathsetmacro\f{(cos(\b)-cos(\a))/sin((\a)+(\b))}
%\fill[black] (\e,\f) circle[radius=3pt];
\tkzDefPoint(\uu,\vv){A}
\tkzDefPoint(\uu,3.6){B}
\tkzDefPoint(\l, \vvv){C}

\draw[-latex] [thick,domain=-130:-145,red] plot ({1.5*cos(\x)+1}, {1.5*sin(\x)});
\node[red] at (-.3,-1.15) {$\nu$};
\draw[-latex] [thick,domain=131:121,red] plot ({1.8*cos(\x)+1}, {1.8*sin(\x)});
\node[left,red] at (0,1.62) {$\gamma$};
%\draw[-latex] [thick,domain=140:160,red] plot ({1.5*cos(\x)+1}, {1.5*sin(\x)});
\node[red] at (-.85,1.5) {$\delta$};
\pic [draw, -latex, thick,
      angle radius=10mm, angle eccentricity=1.2, red] {angle = C--A--B};
%\draw[-latex] [thick,domain=90:140,red] plot ({2*cos(\x)+1}, {2*sin(\x)});
%\node[red] at (0.3,1.5) {$\delta$};

%\node[blue] at (1+2.3*cos(120), 2.3*sin(120)) {$\delta$};
%arrow size

%\draw[dashed] (1,3)--(1,-3);
\draw[rotate around={\d:(\n,-\m)},blue] (\i,\q)--(\y,\qq);
\draw[rotate around={\dd:(\c,\s)},blue] (\x,\p)--(\y,\pp);

\node[] at (0.8,0.1) {$\alpha$};
% \draw[] (.2,0) arc (0:\a:.2);
% \draw[] (.25,0) arc (0:\a:.25);
% \draw[] (.2,0) arc (0:-\b:.2);
% \draw[] (.15,.2) arc (\a:-\v:.2);
% \draw[] (.15,.25) arc (\a:-\v:.25);
% \draw[] (.15,.3) arc (\a:-\v:.3);
\draw [thick,domain=-\b:0] plot ({.3*cos(\x)}, {.3*sin(\x)}); 
\draw [thick,domain=-\b:0] plot ({.2*cos(\x)}, {.2*sin(\x)}); 
\draw [thick,domain=-\b:0] plot ({.25*cos(\x)}, {.25*sin(\x)}); 

% \draw [thick,domain=0:\a] plot ({.2*cos(\x)}, {.2*sin(\x)}); 
\draw [thick,domain=-\b:\a] plot ({.65*cos(\x)}, {.65*sin(\x)}); 

\draw [thick,domain=\a:165] plot ({.3*cos(\x)}, {.3*sin(\x)}); 
\draw [thick,domain=\a:165] plot ({.25*cos(\x)}, {.25*sin(\x)});

\node[above] at (-.2,0.2) {$\beta$};
%\draw[thick] (.25,0) arc (0:-\b:.25);
\node[below] at (.4,0) {$\eta$};

\draw[blue] (-3.9,-\m)--(4.8,-\m);
    \draw[black,<->] (-4.5,0)--(5.4,0);
    \draw[black, <->] (0,-3.3)--(0,3.9);

\end{tikzpicture}\caption{$\H_{1(\delta)}$, $\H_{2(\gamma)}$ and $\H_{3(\nu)}$ are the three limiting median hyperplanes used to lower bound the LP yolk radius. An interactive figure that shows the resulting ``LP yolk'' for arbitrary positioning of the median and limiting median hyperplanes with the resulting LP yolk radius is located at \href{https://www.desmos.com/calculator/iiui7kanv3}{https://www.desmos.com/calculator/iiui7kanv3}.}\label{fig:construct}
\end{figure}
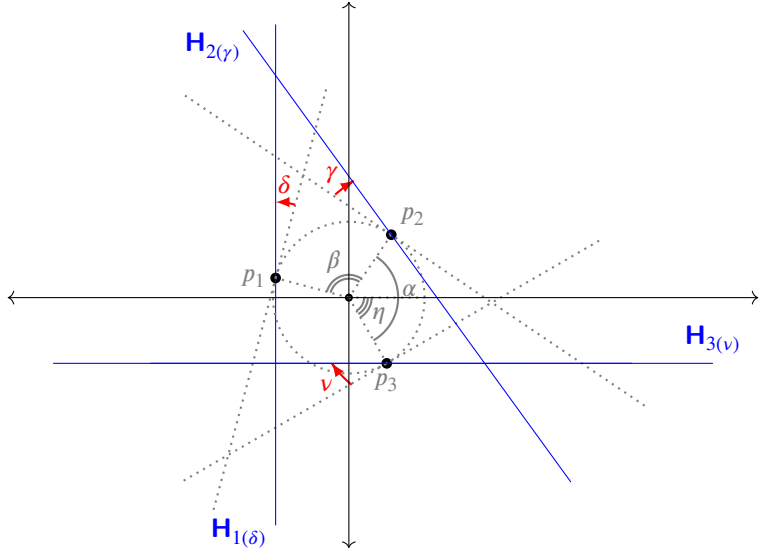

    By Lemmas \ref{lem:MiddleMedian} and \ref{lem:idealpoint} in the appendix, there is a limiting median hyperplane $H_{2(\gamma)}$ passing through $p_2$ obtained by rotating $H_2$ clockwise by angle $\gamma\geq 0$ through $p_2$. 
    The equation for $H_{2(\gamma)}$ is:
    \begin{align*}
        \H_{2(\gamma)}=\{(x,y):y=-\cot{(\alpha+\eta-\gamma)}\cdot x+\cot{(\alpha+\eta-\gamma)}\cdot \cos{(\alpha+\eta)}+\sin{(\alpha+\eta)}\}
    \end{align*}
    where $\gamma\in[0,\pi-\beta)$ since otherwise Lemma \ref{lem:MiddleMedian} implies $\H_{2(\pi-\beta)}$ is a median hyperplane with the same gradient as $\H_1$ -- 
    as established earlier, no two median hyperplanes have the same gradient when there are an odd number of ideal points.  
    
    Similarly, the limiting median hyperplane $H_{1(\delta)}$ passing through $p_1$ is obtained by rotating $H_1$ counter-clockwise by angle $\delta\geq 0$ through $p_1$.
    The equation for $H_{1(\delta)}$ is:
    \begin{align*}
        \H_{1(\delta)}=\{(x,y):y=-\cot{(\alpha+\eta+\beta+\delta)}\cdot x+\cot{(\alpha+\eta+\beta+\delta)}\cdot \cos{(\alpha+\eta+\beta)}+\sin{(\alpha+\eta+\beta)}\}
    \end{align*}
    where $\delta\in[0,\pi-\beta-\gamma)$ since otherwise $\H_{1(\pi-\beta-\gamma)}$ is a median hyperplane with the same gradient as $\H_2(\gamma)$.
    The three median limiting hyperplanes are depicted in  Figure \ref{fig:construct}.

    Since the LP yolk must cover all limiting median hyperplanes, to lower bound the size of the LP yolk, it suffices to find the smallest ball that intersects $\H_{1(\delta)}$, $\H_{2(\gamma)}$ and $\H_{3(\nu)}$.
%We now show that the radius of the LP yolk is at least $1/2$ that of the yolk.
We now characterize the smallest ball that covers all limiting median hyperplanes, and show its radius is at least $1/2$. 
This implies the ratio between a LP yolk radius and a yolk radius and  is at least 1/2 since the yolk radius is 1 by construction.

By Lemma \ref{lem:characterization}, the smallest ball covering these three hyperplanes is necessarily tangent to all three lines, i.e., the smallest ball is a circle that is inscribed in a triangle formed by these three lines. 
It is well-known, the center of a circle inscribed in a triangle is the intersection of any two interior angle bisectors. The bisector of $\H_{2(\gamma)}$ and $\H_{3(\nu)}$ is
%{\color{red} Notation $y_{12}$ doesn't make sense}
    % \begin{equation}
    % \begin{aligned}
    %  y=&-(\cos{(\eta+\alpha-\gamma)}\cdot x+\sin{(\eta+\alpha-\gamma)}\cdot y-\cos{(\eta+\alpha-\gamma)}\cdot\cos{(\eta+\alpha)}\\
    % &-\sin{(\eta+\alpha)}\cdot\sin{(\eta+\alpha-\gamma)})+\sin{(\eta)}
    % \end{aligned}
    % \tag{Bisector of $\H_{2(\gamma)}$ and $\H_{3(\nu)}$}
    % \end{equation}
    \begin{equation}
    \begin{aligned}
     y=\frac{-\cos{(\alpha+\eta-\gamma)}}{1+\sin{(\alpha+\eta-\gamma)}}\cdot x+\frac{\cos{(\alpha+\eta-\gamma)}\cdot\cos{(\alpha+\eta)}+\sin{(\alpha+\eta-\gamma)}\cdot\sin{(\alpha+\eta)}+\sin{(\eta)}}{1+\sin{(\alpha+\eta-\gamma)}}
    \end{aligned}
    \tag{Bisector of $\H_{2(\gamma)}$ and $\H_{3(\nu)}$}
    \end{equation}

%{\color{red} Do same below}

and the bisector of $\H_{1(\delta)}$ and $\H_{3(\nu)}$ is
%     \begin{equation}
%     \begin{aligned}
%     y=&-(\cos{(\eta+\alpha+\beta+\delta)}\cdot x+\sin{(\eta+\alpha+\beta+\delta)}\cdot y-\cos{(\eta+\alpha+\beta+\delta)}\cdot\cos{(\eta+\alpha+\beta)}\\
%     &-\sin{(\eta+\alpha+\beta)}\cdot\sin{(\eta+\alpha+\beta+\delta)})+\sin{(\eta)}
% \end{aligned}
% \tag{Bisector of $\H_{1(\delta)}$ and $\H_{3(\nu)}$}
%     \end{equation}
    \begin{equation}
    \begin{aligned}
     y=& \ \frac{-\cos{(\alpha+\eta+\beta+\delta)}}{1+\sin{(\alpha+\eta+\beta+\delta)}}\cdot x\ +\\ 
&\frac{\cos{(\alpha+\eta+\beta+\delta)}\cdot\cos{(\alpha+\eta+\beta)}+\sin{(\alpha+\eta+\beta+\delta)}\cdot\sin{(\alpha+\eta+\beta)}+\sin{(\eta)}}{1+\sin{(\alpha+\eta+\beta+\delta)}}
    \end{aligned}
    \tag{Bisector of $\H_{1(\delta)}$ and $\H_{3(\nu)}$}
    \end{equation}
    % \begin{equation}
    % \begin{aligned}
    %  y=&(-\cos{(\alpha+\eta+\beta+\delta)}\cdot x+\cos{(\alpha+\eta+\beta+\delta)}\cdot\cos{(\alpha+\eta+\beta)}\\&+\sin{(\alpha+\eta+\beta+\delta)}\cdot\sin{(\alpha+\eta+\beta)}+\sin{(n)})/(1+\sin{(\alpha+\eta+\beta+\delta)})
    % \end{aligned}
    % \tag{Bisector of $\H_{1(\delta)}$ and $\H_{3(\nu)}$}
    % \end{equation}

Let $\B(c,r)$ be the smallest ball intersecting all three lines -- equivalently the circle inscribed in the triangle formed by the three lines. 
The center $c=(x_c,y_c)$ of the circle $\B(c,r)$ is the intersection of these two bisectors as depicted in Figure \ref{fig:3limiting}. The equation for $y_c$ is $y_c=\frac{a-b}{d}$ where
\begin{align*}
    a:=\ & \bigl(\cos{(\alpha+\eta-\gamma)}\cdot\cos{(\alpha+\eta)}+\sin{(\alpha+\eta-\gamma)}\cdot\sin{(\alpha+\eta)}+\sin{(n)}\bigr)\cdot\cos{(\alpha+\eta+\beta+\delta})\\
    b:=\ & \bigl(\cos{(\alpha+\eta+\beta+\delta)}\cdot\cos{(\alpha+\eta+\beta)}\ +\\ &\sin{(\alpha+\eta+\beta+\delta)}\cdot\sin{(\alpha+\eta+\beta)}+\sin{(n)}\bigr)\cdot\cos{(\alpha+\eta-\gamma})\\
    d:=\ & \bigl(1+\sin{(\alpha+\eta-\gamma)})\cdot\cos{(\alpha+\eta+\beta+\delta)}-(1+\sin{(\alpha+\eta+\beta+\delta)}\bigr)\cdot\cos{(\alpha+\eta-\gamma)}.
\end{align*}

Further, by (5) of Lemma \ref{lem:generality}, $\H_{3(\nu)}$ is parallel to the $x$-axis.
Since $\B(c,r)$ is tangent to this line,  $r= y_c-\sin{(\eta)}=\frac{a-b}{d}-\sin(\eta)$.
After applying a series of product-to-sum and sum-to-product trigonometric identities, $r$ simplifies\footnote{Part of the reduction was initially identified using Mathematica (\url{https://github.com/rannn-h/size-of-the-yolk}) and was verified by hand.} to 
\begin{align*}
    r=\frac{\cos{(\alpha+\eta-\gamma)}\cdot\cos{(\delta)}-\sin{(\beta+\delta+\gamma)}\cdot\sin{(\eta)}-\cos{(\alpha+\eta+\beta+\delta)}\cdot\cos{(\gamma)}}{\sin{(\beta+\delta+\gamma)}+\cos{(\alpha+\eta-\gamma)}-\cos{(\alpha+\eta+\beta+\delta)}}.
\end{align*}
The $x$-coordinate, while unneeded for the proof, of the center $c$ can be computed similarly resulting in the depiction given in Figure \ref{fig:3limiting}.
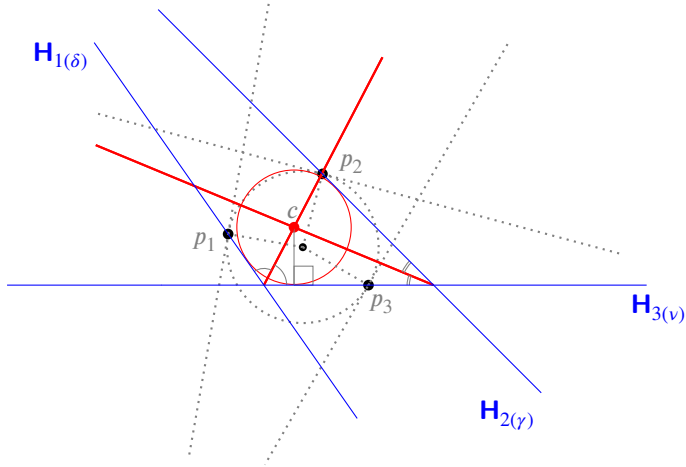
\begin{figure}[!ht]
\centering
\begin{tikzpicture}

\def\a{75}
\def\o{30}
\def\b{170}
\def\x{2.7}
\def\e{.5}
\def\y{-.5}
\def\z{-6}
\def\t{-.45}
\def\r{-1.5}
\def\d{30}
\def\f{45}
\def\g{60}
\def\h{-2.7}
\def\i{4.2}

\pgfmathsetmacro\c{cos(\a)}
\pgfmathsetmacro\s{sin(\a)}
\pgfmathsetmacro\cc{cos(\a/2)/2}
\pgfmathsetmacro\ss{sin(\a/2)/2}
\pgfmathsetmacro\p{-\c/\s*\h+1/\s)}
%\pgfmathsetmacro\xx{1/(\c)-\e}
%\pgfmathsetmacro\pp{-\c/\s*\xx+1/\s}
\pgfmathsetmacro\pp{-\c/\s*\i+1/\s}

\pgfmathsetmacro\n{cos(\b)}
\pgfmathsetmacro\m{sin(\b)}
\pgfmathsetmacro\q{-\n/\m*\t+1/\m)}
\pgfmathsetmacro\qq{-\n/\m*\r+1/\m)}
\pgfmathsetmacro\nn{cos(\b/2)/2}
\pgfmathsetmacro\mm{sin(\b/2)/2}

\pgfmathsetmacro\u{cos(\o)}
\pgfmathsetmacro\v{sin(\o)}
\pgfmathsetmacro\uu{\u/\v*\x-1/\v)}
\pgfmathsetmacro\vv{\u/\v*\y-1/\v)}

%\node[above] at (\c,\s) {$p_1$};
%\node[left] at (\n,\m) {$p_22$};
%\node[below] at (\u,-\v) {$p_3$};

\fill[black] (\u,-\v) circle[radius=2pt];
\fill[black] (\n,\m) circle[radius=2pt];
\fill[black] (\c,\s) circle[radius=2pt];
\fill[red] (-.116,.263) circle[radius=2pt];
%\node[below left] at (\y,\pp) {$\H_1(a_1,b_1)$};
%\node[above left] at (\y,\qq) {$\H_2(a_2,b_2)$};
%\node[left] at (1,3) {$\H_3(a_3,b_3)$};

\fill[black] (0,0) circle[radius=1.5pt];

\draw[dotted,thick] (\h,\p)--(\i,\pp);
%\draw (\x,-\p)--(\xx,-\pp);
\draw[dotted,thick] (\t,\q)--(\r,\qq);
\draw[dotted,thick] (\x,\uu)--(\y,\vv);
\draw[dotted,thick] (0,0) circle[radius=1];

\draw[blue] (-3.9,-1/2)--(-1.86,-1/2);
\draw[thick,red] (1.725,-.5)--(-2.73,1.345);
\draw[thick,red] (-.513,-.5)--(1.05,2.503);

\draw (-.116,.263)--(-.116,-1/2);
\draw[red] (-.116,.263) circle (0.754); 
\node[above] at (-.13,.263) {$c$};
%\node at (\cc,\ss) {\large$\alpha$};
%\draw[thick] (.33,0) arc (0:\a:.33);

%\node at (\nn,-\mm) {\large$\beta$};
%\draw[thick] (.25,0) arc (0:-\b:.25);

%\draw[dotted,thick] (0,0)--(1,0);
\draw[dotted,thick] (\c,\s)--(0,0);
\draw[dotted,thick] (0,0)--(\n,\m);
\draw[dotted,thick] (0,0)--(\u,-\v);
%\draw[thick,red] (1.007,1.1)--(-1.995,-2.05);
%\draw[thick,red] (3.2,-1.584)--(-2.7,1.422);

\node[below right,blue] at (4.2,-.5) {$\H_{3(\nu)}$ };
\node[above left,blue] at (-2.7,2.2) {$\H_{1(\delta)}$};
\node[below right,blue] at (2.2,-1.9) {$\H_{2(\gamma)}$};
%\node[below right,blue] at (\x,\s) {$\H_{3(\nu)}$};

%\node[above right] at (\c,\s) {$(\cos{(\eta+\alpha)},\sin{(\eta+\alpha)})$};
%\node[below right] at (.76,-\v) {$(\cos{(\eta)},\sin{(\eta)})$};
%\node[left] at (-1,0.073) {$(\cos{(\eta+\alpha+\beta)},\sin{(\eta+\alpha+\beta)})$};
\node[above right] at (.36,.86) {$p_2$};
\node[below right] at (.76,-\v) {$p_3$};
\node[left] at (-1,0.073) {$p_1$};

\draw[thick,red] (1.725,-.5)--(-2.73,1.345);
\draw[thick,red] (-.513,-.5)--(1.05,2.503);
\tkzDefPoint(\c,\s){A}
\tkzDefPoint(0,0){O}
\tkzDefPoint(-1,0){B}
\tkzDefPoint(\n,\m){C}
\tkzDefPoint(\u,-\v){D}
\tkzDefPoint(-.116,.263){E}
\tkzDefPoint(-.116,-1/2){F}
\tkzDefPoint(1,-1/2){G}
\tkzDefPoint(1.725,-.5){H}
\tkzDefPoint(-2.73,1.345){I}
\tkzDefPoint(-.513,-.5){J}
\tkzDefPoint(1.05,2.503){K}
\tkzMarkRightAngle(E,F,G)
\pic [draw,
      angle radius=4.2mm, angle eccentricity=1.2] {angle = A--H--E};
\pic [draw,
      angle radius=4.5mm, angle eccentricity=1.2] {angle = A--H--E};
\pic [draw,
      angle radius=3.2mm, angle eccentricity=1.2] {angle = E--H--F};
\pic [draw,
      angle radius=3.5mm, angle eccentricity=1.2] {angle = E--H--F};
\pic [draw,
      angle radius=3mm, angle eccentricity=1.2] {angle = F--J--E};
\pic [draw,
      angle radius=2.2mm, angle eccentricity=1.2] {angle = E--J--C};

\draw[rotate around={-\d:(\c,\s)},blue] (\h,\p)--(\i,\pp);
\draw[rotate around={\f:(\n,\m)},blue] (\t,\q)--(\r,\qq);
\draw[rotate around={-\g:(\u,-\v)},blue] (\x,\uu)--(\y,\vv);

\end{tikzpicture}\caption{The smallest ball intersecting the three limiting median hyperplanes $\H_{1(\delta)}, \H_{2(\gamma)}$ and $\H_{3(\nu)}$. 
The red lines are the bisector of two limiting median hyperplanes. 
The red point $c$ is the center of the smallest ball intersecting the three limiting median hyperplanes. 
The distance from $c$ to the limiting hyperplane $\H_{3(\nu)}$ is the radius of this ball.}\label{fig:3limiting}
\end{figure}

%{\color{red} Maybe make bisectors solid line and use angle marks to illustrate that they are bisectors}

To establish $r\geq 1/2$, we first show that the middle term of numerator is non-negative, i.e., $-\sin{(\beta+\delta+\gamma)}\cdot\sin{(\eta)} \geq 0$. 
We also establish that the denominator is non-negative.  
This allows us to lower bound $r$ by removing the middle term of the numerator. 

We begin by establishing bounds on several combinations of our variables. 
\begin{itemize}
    \item $\beta+\delta+\gamma\in [\pi/2,\pi]$ since $\beta \geq 0, \delta\geq 0$ and $\delta\in [0,\pi-\beta-\delta].$
    \item $\alpha + \eta - \gamma \in [-\pi/2,\pi/2]$, since $\eta\geq -\pi/2, \alpha\in [\pi/2,\pi/2-\eta]$ and $\gamma\in [0,\pi-\beta]\subseteq [0,\pi/2]$
    \item $\alpha+\eta + \beta +\delta\in [\pi/2,3\pi/2]$ since $\eta\geq -\pi/2, \alpha\in [\pi/2,\pi/2-\eta]$, $\beta \geq \pi/2$ and $\delta \in [0,\pi-\beta-\gamma]\subset [0,\pi-\beta]$.
\end{itemize}
These inequalities respectively imply
\begin{itemize}
    \item $\sin(\beta+\delta+\gamma)\geq 0$,
    \item $\cos(\alpha+\eta-\gamma)\geq 0$,
    \item $-\cos(\alpha+\eta+\beta+\delta)\geq 0$. 
\end{itemize}
Therefore the denomiator of $r$ is non-negative and the second term of the numerator of $r$ is positive.
Thus, to lower bound $r$, we may remove the second term of the numerator, i.e., 
\begin{align*}
    r\ge f(\beta):= \frac{\cos{(\alpha+\eta-\gamma)}\cdot\cos{(\delta)}-\cos{(\alpha+\eta+\beta+\delta)}\cdot\cos{(\gamma)}}{\sin{(\beta+\delta+\gamma)}+\cos{(\alpha+\eta-\gamma)}-\cos{(\alpha+\eta+\beta+\delta)}}.
\end{align*}

Next, we will show the lower bound on $r$, $f(\beta)$, is a decreasing function of $\beta$.  
Since $\beta \leq \pi-\delta-\gamma$, to bound $r$, it then suffices to bound $f(\pi-\delta-\gamma)$.
Let $\frac{u}{v}:= \dv{f}{\beta}$ be obtained via the quotient without simplification; i.e., $v$ is the square of the denominator of $r$ implying $v\geq 0$.  
Thus, to establish $f'(\beta)\leq 0$, it suffices to show $u\leq 0$. 
By applying sum-to-product and product-to-sum identities, the numerator of $\dv{f}{\beta}$, $u$, simplifies to:
\begin{align*}
    u=&\frac{1}{2}\cos{(\alpha+\eta-\gamma)}(\cos{(\beta+\gamma)}-2\cos{(\gamma)}+\cos{(\beta+2\delta+\gamma)}-\sin{(\alpha+\eta+\beta)}\\
    &-\sin{(\alpha+\eta+\beta+2\delta)}+\sin{(\alpha+\eta+\beta+\delta-\gamma)}+\sin{(\alpha+\eta+\beta+\delta+\gamma)})
\end{align*}
    Let $A=\alpha+\eta$ where $A\in[0,\frac{\pi}{2}]$ since $\alpha\in[\frac{\pi}{2},\pi-\eta]$. 
    With this substitution, we re-write $u$ as
    \begin{align*}
        u=&\frac{1}{2}\cos{(A-\gamma)}(\cos{(\beta+\gamma)}-2\cos{(\gamma)}+\cos{(\beta+2\delta+\gamma)}-\sin{(A+\beta)}\\
    &-\sin{(A+\beta+2\delta)}+\sin{(A+\beta+\delta-\gamma)}+\sin{(A+\beta+\delta+\gamma)}). 
    \end{align*}
    Using sum-to-product identities, $u$ further simplifies to:
    \begin{align*}
        u=&\frac{1}{2}\cos{(A-\gamma)}\cdot(-2\cos{(\gamma)}+2\sin{(A+\beta+\delta)}\cdot\cos{\gamma}+2\cos{(\beta+\delta+\gamma)}\cdot\cos{(\delta)}\\
        &-2\sin{(A+\beta+\delta)}\cdot\cos{(\delta)})\\
        =&\cos{(A-\gamma)}\cdot((\sin{(A+\beta+\delta)}-1)\cdot\cos{(\gamma)}+(\cos{(\beta+\delta+\gamma)}-\sin{(A+\beta+\delta)})\cdot\cos{(\delta)}).
    \end{align*}
    Similar to before, we bound combinations of our variables: 
    \begin{itemize}
        \item $A-\gamma\in [-\pi/2,\pi/2]$ since $A\in [0,\pi/2]$ and $\gamma \in [0,\pi-\beta-\delta]\subseteq [0,\pi/2]$. 
        \item $\delta \in [0,\pi-\beta-\gamma]\in [0,\pi/2]$.
        \item $\gamma \in [0,\pi-\beta-\delta]\in [0,\pi/2]$.
    \end{itemize}
    This implies
    \begin{itemize}
        \item $\cos(A-\delta)\geq 0$.
        \item $\cos(\delta)\geq 0$.
        \item $\cos(\gamma)\geq 0$.
    \end{itemize}
    Therefore to show $u\leq 0$, it suffices to show $\cos{(\beta+\delta+\gamma)}-\sin{(A+\beta+\delta)}\leq 0$. 
    Through similar inequalities, $A+\beta+\delta \in [\pi/2,3\pi/2]$. 
    Notably, $\sin{(\cdot)}$ is decreasing on this domain.
    Since the definition of $\beta$ and $\delta$ are independent of $A$, $A$ can be made as large as possible $(\pi/2)$ while remaining in the domain $A+\beta+\delta \in [\pi/2,3\pi/2]$.
    Therefore,
    \begin{align*}
        \cos{(\beta+\delta+\gamma)}-\sin{(A+\beta+\delta)}
        \le\ & \cos{(\beta+\delta+\gamma)}-\sin{(\frac{\pi}{2}+\beta+\delta)}\\
        =\ &\cos{(\beta+\delta+\gamma)}-\cos{(\beta+\delta)}\\
        \le\ &\cos{(\beta+\delta)}-\cos{(\beta+\delta)}
        =\ 0
    \end{align*}
    Therefore $u\leq 0, \dv{f}{\beta}\leq 0$, and $r\geq f(\pi-\delta-\gamma)$.
    Finally, we establish that $f(\pi-\delta-\gamma)\geq 1/2$.
    \begin{align*}
        r&\geq f(\beta)=\frac{\cos{(\alpha+\eta-\gamma)}\cdot\cos{(\delta)}-\cos{(\alpha+\eta+\beta+\delta)}\cdot\cos{(\gamma)}}{\sin{(\beta+\delta+\gamma)}+\cos{(\alpha+\eta-\gamma)}-\cos{(\alpha+\eta+\beta+\delta)}}\\
        &\geq f(\pi-\delta-\gamma)=\frac{\cos{(\alpha+\eta-\gamma)}\cdot\cos{(\delta)}-\cos{(\alpha+\eta+\pi-\delta-\gamma+\delta)}\cdot\cos{(\gamma)}}{\sin{(\pi-\delta-\gamma+\delta+\gamma)}+\cos{(\alpha+\eta-\gamma)}-\cos{(\alpha+\eta+\pi-\delta-\gamma+\delta)}}\\
        &=\frac{\cos{(\alpha+\eta-\gamma)}\cdot\cos{(\delta)}+\cos{(\alpha+\eta-\gamma)}\cdot\cos{(\gamma)}}{\cos{(\alpha+\eta-\gamma)}+\cos{(\alpha+\eta-\gamma)}}\\
        &=\frac{\cos{(\delta)}+\cos{(\gamma)}}{2}\\
        &\geq \frac{\cos{(\delta)}+\cos{\left(\frac{\pi}{2}-\delta\right)}}{2}\\
        &=\frac{\cos{(\delta)}+\sin{(\delta)}}{2}
    \end{align*}
    since $\delta \in(0,\frac{\pi}{2}]$, and the $l^2$ norm is less than or equal to $l^ 1$ norm,
    \begin{align*}
        &\cos{(\delta)}+\sin{(\delta)}\\
        =&|\cos{(\delta)}|+|\sin{(\delta)}|\\
        =&\left\lVert(\cos{(\delta)},\sin{(\delta)})\right\rVert_1\\
        \ge&\left\lVert(\cos{(\delta)},\sin{(\delta)})\right\rVert_2\\
        =&1
    \end{align*}
    that is,
    \begin{align*}
        r&\geq \frac{\cos{(\delta)}+\sin{(\delta)}}{2}\ge \frac{1}{2}
    \end{align*}
    Thus, when $|I|$ is odd in $\mathbb{R}^2$, the yolk is at most twice the size of the LP yolk.
\end{proof}

\section{Conclusions and Future Directions}
We prove that the LP yolk radius is at least 1/2 of the yolk radius when there are an odd number of voters in $\mathbb{R}^2$. 
However, even in this simplified case, we also show that the LP yolk  can be arbitrarily far from the yolk (relative to the yolk radius). 
Further, the LP yolk radius can be arbitrarily small relative to the yolk radius in all other settings. 
For both of these negative results, our family of examples demonstrating the bound is tight appear in general position, thereby suggesting these bad examples occur with positive probability when ideal points are generated from a continuous distribution. 
Thus, the LP yolk is an unreliable, and potentially arbitrarily poor, approximation of the yolk. 
%In our future work, it may be interesting to explore how to improve LP yolk to more accurately approximate the yolk in two-dimensional and high-dimensional spaces, or to develop alternative solutions to more reliably approximate the yolk.

\bibliographystyle{cas-model2-names}
 \bibliography{bib}

\appendix

\section{Additional Proofs}

\begin{lemma}\label{lem:MiddleMedian}
    Suppose $I\subset \mathbb{R}^2$ and that $\H$ is a median hyperplane that goes through the ideal point $\pi_v$.
    Let $\H_\nu$ be the median hyperplane obtained by rotating $\H$ about the point $\pi_v$ in the counter-clockwise direction by angle $\nu$. 
    Let $\nu\geq 0$ be the smallest value such that $\H_{\nu}$ intersects with at least 2 ideal points, i.e., such that $\H_\nu$ is a median limiting hyperplane, then $\H_{\nu'}$ is a median hyperplane for all $\nu'\in [0,\nu]$. 
\end{lemma}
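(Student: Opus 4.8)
The plan is to treat the rotation as a one-parameter family of lines all passing through $\pi_v$ and to track how each ideal point is classified --- strictly left of, strictly right of, or lying on the line --- as the angle $\nu'$ increases from $0$. First I would dispose of the trivial case $\nu=0$, where $[0,\nu]=\{0\}$ and $\H_0=\H$ is median by hypothesis. So assume $\nu>0$; then by the minimality of $\nu$, the line $\H_{\nu'}$ meets $I$ in exactly the single point $\pi_v$ for every $\nu'\in[0,\nu)$, while $\H_\nu$ contains $\pi_v$ together with at least one further ideal point.

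The key observation is that a point $p\in I\setminus\{\pi_v\}$ can change sides only by momentarily lying on the rotating line, and the angle at which the line through $\pi_v$ passes through $p$ is unique in $[0,\pi)$. Hence on any range of angles avoiding all such crossing angles, the partition of $I\setminus\{\pi_v\}$ into strictly-left and strictly-right points is constant. By minimality of $\nu$, no crossing angle lies in $(0,\nu)$, so this partition is constant throughout $[0,\nu)$; write $L$ and $R$ for the fixed numbers of strictly-left and strictly-right points, so that $L+R=|I|-1$. Because $\pi_v$ lies on every $\H_{\nu'}$, the closed left half-plane of $\H_{\nu'}$ contains exactly $L+1$ ideal points and the closed right half-plane exactly $R+1$, for all $\nu'\in[0,\nu)$. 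Since $\H_0=\H$ is median, $L+1\ge |I|/2$ and $R+1\ge|I|/2$; as $L$ and $R$ do not change, these same inequalities certify that $\H_{\nu'}$ is median for every $\nu'\in[0,\nu)$.

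It then remains to handle the endpoint $\nu'=\nu$, which I expect to be the main (though mild) obstacle, since this is exactly where ideal points move onto the line and the combinatorial bookkeeping is most delicate. Here I would exploit that a point lying on $\H_\nu$ is counted in \emph{both} closed half-planes. Concretely, every point counted in the closed left half-plane for angles just below $\nu$ either remains strictly left at $\nu$ or moves onto $\H_\nu$; in both cases it still lies in the closed left half-plane of $\H_\nu$. Thus the closed-left count at $\nu$ is at least $L+1\ge|I|/2$, and symmetrically the closed-right count is at least $R+1\ge|I|/2$, so $\H_\nu$ is median as well. Combining the two ranges yields that $\H_{\nu'}$ is a median hyperplane for all $\nu'\in[0,\nu]$. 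The only care needed is that several points may land on $\H_\nu$ simultaneously and that points arriving from the opposite side only increase a closed-half-plane count, so none of the required inequalities can be violated.
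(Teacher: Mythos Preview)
Your proposal is correct and follows essentially the same approach as the paper. The paper sets up explicit coordinates and applies the intermediate value theorem to the signed-distance function $f(\theta)=(\cos\theta,\sin\theta)\cdot\pi'_v-\cos\theta$ to rule out any side-switch on $[0,\nu]$, whereas you phrase the same continuity argument combinatorially via the uniqueness of each point's crossing angle and handle the endpoint $\nu$ separately; the underlying idea---no ideal point can change its (closed) half-plane classification without first lying on the rotating line, which is forbidden before $\nu$---is identical.
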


We remark that in the statement of our Lemma, that $\nu$ may be 0 if the median hyperplane is already limiting.  

%\begin{proof}
%    \jpb{By rotating, rescaling, translating, you can assume $\H$ is in the same form as the previous lemma 4. }
%\end{proof}
\begin{proof}
    By rotating and translating the coordinate plane, we assume without loss of generality that $\H=((1,0),1)=\{(x,y):x=1\}$ and $\pi_v = (1,0)$.
    First, observe that if $\nu=0$, then the result trivially holds. 
    Thus for the remainder of the proof, we consider $\nu>0$. 

    Since $\H_{\nu}$ is the line obtained by rotating $\H$ by angle $\nu$ about the point $\pi_v$ in the counter-clockwise direction, $\H_\nu=\{(x,y):\cos(\nu)\cdot x+ \sin(\nu)\cdot y=\cos(\nu)\}$.
    We now show that if an ideal point is to the ``right'' of $\H$, then it is also to the right of $\H_{\nu'}$ for $\nu'\in [0,\nu]$. 
    Formally, we show that if $(1,0)\cdot  \pi'_v \ge 1$, then $(\cos{\nu'},\sin{\nu'})\cdot  \pi'_v\ge\cos{\nu'}$, $\forall \nu' \in [0,\nu]$. 
    We consider two cases based on whether $\pi'_v=\pi_v$. 

     Case 1: $\pi'_v=\pi_v=(1,0)$. Then $(\cos\nu,\sin\nu)\cdot \pi'_v=\cos\nu$  and $\pi'_v$ remains on the ``right'' side as desired.
    
    Case 2: $\pi'_v \neq  \pi_v = (1,0)$. Since $\nu>0$,  $\pi_v=(1,0)$ is the only ideal point on $\H=\H_0=\{(x,y):(1,0)\cdot (x,y)=1\}=\{(x,y):(\cos(0),\sin(0))\cdot(x,y)=\cos(0)\}$. Therefore $\pi'_v$ is not on this line and $(\cos{0},\sin{0})\cdot \pi'_v > \cos{0}$, i.e., it is strictly to the ``right'' of $\H$.
    For contradiction, suppose $(\cos{\nu'},\sin{\nu'})\cdot \pi_v<\cos{\nu'}$, i.e., it is to the ``left'' of $\H_\nu$. 
    Let $f(\theta)=(\cos{\theta},\sin{\theta})\cdot \pi'_v-\cos{\theta}$. 
    By definition, $f(0)=(\cos{0},\sin{0})\cdot  \pi'_v-\cos{0}>0$ and $f(\nu')<0$. 
    Further, $f(\theta)$ is a continuous %{\color{red} a continuous}
    function, and, by the median value theorem, there must be some $\nu''\in(0,\nu')$, such that $f(\nu'')=0$, i.e. $(\cos{\nu''},\sin{\nu''})^\intercal\pi_v=\cos{\nu''}$, this means there are at least 2 ideal points on $(\cos\nu'',\sin\nu'')\cdot (x,y)=\cos\nu''$  --- both $\pi'_v$ and $\pi_v=(1,0)$. 
    This contradicts that $\nu$ is the smallest value such that at least 2 ideal points are on $\H_\nu$. 

    This completes both cases and therefore, if $(1,0)\cdot \pi'_v \ge 1$, then $(\cos{\nu'},\sin{\nu'})\cdot \pi'_v\ge \cos{(\nu')}$. 
    Thus, the number of points to the ``right'' of $\H_{\nu'}$ is at least as large as the number of points to the ``right'' of $\H$, which is at least $|I|/2$. 
    Symmetrically, the number of points to the ``left'' of $\H_{\nu'}$ is at least as large as the number of points to the ``left'' of $\H$, which is at least $|I|/2$
    and $\H_{\nu'}$ is a median hyperplane for all $\nu'\in [0,\nu]$. 
\end{proof}

\begin{lemma}\label{lem:idealpoint}
    Let $\H_i$ and $p_i$ be as in the statement of Lemma \ref{lem:generality}. 
    If $\H_i$ is a non-limiting median hyperplane, i.e., if there is only one ideal point on $\H_i$, then $p_i$ is an ideal point. 
\end{lemma}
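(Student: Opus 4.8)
The plan is to argue by contradiction, exploiting the freedom to rotate a non-limiting median hyperplane about its unique ideal point (Lemma \ref{lem:MiddleMedian}) together with the defining property that the yolk $\B(\vec{0},1)$ must intersect \emph{every} median hyperplane. Suppose $\H_i$ is a non-limiting median hyperplane with unique ideal point $q$, and suppose toward a contradiction that $q\neq p_i$. Since $\H_i$ is tangent to the yolk $\B(\vec{0},1)$ at $p_i$, the vector $p_i-\vec{0}$ is normal to $\H_i$, so $p_i$ is the foot of the perpendicular from the center $\vec{0}$ to $\H_i$; hence $\lVert p_i\rVert_2=1$ and $p_i$ is the unique closest point of $\H_i$ to $\vec{0}$. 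As $q\in\H_i$ with $q\neq p_i$, this forces $\lVert q\rVert_2>1$.

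Next I would analyze how the distance from $\vec{0}$ to $\H_i$ varies as $\H_i$ is rotated about $q$. Writing $g(\theta)$ for the distance from $\vec{0}$ to the line through $q$ obtained by rotating $\H_i$ by angle $\theta$, an elementary computation gives $g(\theta)=\lVert q\rVert_2\,\lvert\sin\psi(\theta)\rvert$, where $\psi(\theta)$ is the angle between the rotated line and the ray $\overline{\vec{0}q}$; thus $g$ is smooth, ranges over $[0,\lVert q\rVert_2]$, and attains its extrema only when the line passes through $\vec{0}$ or is perpendicular to $\overline{\vec{0}q}$. Since $g(0)=1$ and $0<1<\lVert q\rVert_2$, the value $1$ is non-extremal, so $g'(0)\neq 0$; consequently there is a rotation direction along which $g$ strictly increases above $1$.

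I would then invoke Lemma \ref{lem:MiddleMedian} in precisely that increasing direction. The lemma is stated for one orientation, but the identical argument (equivalently, a reflection of the plane) gives it for the other, so it applies regardless of which way I must rotate. Because $\H_i$ is non-limiting, $q$ is the only ideal point on it, so the smallest rotation angle that meets a second ideal point is strictly positive; hence a sufficiently small rotation in the chosen direction produces a median hyperplane $\H_i'$ with distance $g(\theta)>1$ from $\vec{0}$. But then $\B(\vec{0},1)$ does not intersect $\H_i'$, since the distance from its center to $\H_i'$ exceeds its radius $1$, contradicting that $\B(\vec{0},1)$ is a yolk and therefore meets every median hyperplane. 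Hence $q=p_i$, i.e., $p_i$ is an ideal point.

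The main obstacle I anticipate is justifying the two rotation facts cleanly: first, that Lemma \ref{lem:MiddleMedian} may be applied in the specific outward direction I need — handled by the reflection symmetry of its proof and by non-limitingness guaranteeing a positive admissible rotation angle — and second, the elementary but essential observation that the center-to-line distance has nonzero derivative at the tangent configuration precisely because $q$ lies strictly outside the unit circle. Everything else reduces to a direct application of the yolk's defining intersection property.
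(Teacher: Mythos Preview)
Your proposal is correct and follows essentially the same route as the paper: both argue by contradiction, rotate $\H_i$ about its unique ideal point using Lemma~\ref{lem:MiddleMedian}, and show the center-to-line distance strictly exceeds $1$ for a small rotation, contradicting that $\B(\vec{0},1)$ meets every median hyperplane. The only cosmetic differences are that the paper fixes coordinates (taking $\H_i=\{x=1\}$, $p_i=(1,0)$, $q=(1,\bar y)$ with $\bar y>0$) and computes the distance explicitly as $f(\nu)=\cos\nu+\bar y\sin\nu$ with $f'(0)=\bar y>0$, whereas you work coordinate-free via $g(\theta)=\lVert q\rVert_2\,\lvert\sin\psi(\theta)\rvert$ and handle the rotation direction by reflection rather than by the WLOG $\bar y>0$.
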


\begin{proof}
    For contradiction, suppose $\H_i$ is a non-limiting median hyperplane, but $p_i$ is not an ideal point. That is, the ideal point $\pi$ is somewhere else on $\H_i$. 
    Without loss of generality, we assume $\B(\vec{0},1)$ is a yolk, $\H_i=((1,0),1)=\{(x,y):x=1\}$ is median hyperplane, and $p_i=\H_i\cap \B(\vec{0},1)=(1,0)$ is the point on $\H_i$ that intersects with the yolk, and $\pi=(1,\bar{y})$ is the ideal point on $\H_i$ for some $\bar{y}>0$. 
    By Lemma \ref{lem:MiddleMedian}, when we rotate $\H_i$ about the point $\pi$ in counter-clockwise direction by angle $\nu>0$ for sufficiently small $\nu$, the hyperplane $\H_{i(\nu)}$ is still a median hyperplane. 
    However, as depicted in Figure \ref{fig:idealpoint}, $\H_{i(\nu)}\cap \B(\vec{0},1)=\emptyset$, i.e., $\B(\Vec{0},1)$ is not a yolk, since a yolk is a smallest ball that intersects with all median hyperplanes. 
    This contradicts our assumption. 

    Formally, $\H_{i(\nu)}=\{(x,y): \cos(\nu)\cdot x + \sin(\nu)\cdot y = \cos(\nu) + \sin(\nu)\cdot \bar{y}\}$ and the distance from $\vec{0}$ to $\H_{i(\nu)}$ is $f(\nu):= \cos(\nu) + \sin(\nu)\cdot \bar{y}$. 
    Observe that $f'(\nu)>0$ for sufficiently small $\nu$ since $\bar{y}>0$ and that $f(0)=1$.  
    Therefore, for sufficiently small $\nu>0$, $f(\nu)> 1$ implying that $\H_{i(\nu)}\cap \B(\vec{0},1)=\emptyset$, which contradicts that $\B(\vec{0},1)$ is a yolk. 
    Thus, if $\H_i$ is non-limiting median hyperplane, then $p_i$ is an ideal point.  
\end{proof}

\begin{figure}[!ht]
\centering
\begin{tikzpicture}

\def\a{90.01}
\def\b{110}
\def\x{3}
\def\e{.5}
\def\y{-7}
\def\d{30}
\def\z{50.1}

\pgfmathsetmacro\c{cos(\a)}
\pgfmathsetmacro\s{sin(\a)}
\pgfmathsetmacro\cc{cos(\a/2)/2}
\pgfmathsetmacro\ss{sin(\a/2)/2}
\pgfmathsetmacro\p{-\c/\s*\x+1/\s)}
\pgfmathsetmacro\xx{1/(\c)-\e}
%\pgfmathsetmacro\pp{-\c/\s*\xx+1/\s}
\pgfmathsetmacro\pp{-\c/\s*\y+1/\s}

\pgfmathsetmacro\n{cos(\b)}
\pgfmathsetmacro\m{sin(\b)}
\pgfmathsetmacro\q{\n/\m*\x-1/\m)}
\pgfmathsetmacro\qq{\n/\m*\y-1/\m)}
\pgfmathsetmacro\nn{cos(\b/2)/2}
\pgfmathsetmacro\mm{sin(\b/2)/2}

\fill[black] (1,1) circle[radius=3pt];
%\fill[black] (\x,\p) circle[radius=3pt];
%\fill[black] (\x,\q) circle[radius=3pt];
\fill[black] (0,0) circle[radius=2pt];

\node[above] at (0,0) {$\B(\Vec{0},1)$};
\node[right] at (1,0) {$p_i$};
\node[right] at (1,1) {$\pi=(1,\bar{y})$};
%\node[above right] at (\x,\p) {$x^{+1}$};
%\node[above right] at (\x,-\p) {$x^{-1}$};

%\fill[black] (\y,\pp) circle[radius=3pt];
%\fill[black] (\y,\qq) circle[radius=3pt];

%\node[below left] at (\y,\pp) {$\H_1(a_1,b_1)$};
%\node[above left] at (\y,\qq) {$\H_2(a_2,b_2)$};
\node[right] at (1,3) {$\H_i$};
\node[left] at (-0.1,3) {$\H_{i(\nu)}$};
%\node[above,red] at (-3,2.5) {$\H_{3(\delta)}(a_{3(\delta)},b_{3(\delta)})$};

%\draw (\x,\p)--(\y,\pp);
%\draw (\x,-\p)--(\xx,-\pp);
%\draw (\x,\q)--(\y,\qq);

\draw[dotted,thick] (0,0) circle[radius=1];

%\node at (\cc-0.1,\ss-0.1) {\large$\alpha$};
%\draw[thick] (.18,0) arc (0:\a:.18);

%\node at (\nn,-\mm) {\large$\beta$};
%\draw[thick] (.25,0) arc (0:-\b:.25);

\draw[dotted,thick] (1,0)--(0,0);
%\draw[dotted,thick] (\c,\s)--(0,0);
%\draw[dotted,thick] (0,0)--(\n,-\m);

\coordinate (o) at (1,0);
\coordinate (a) at (1,3);
\coordinate (b) at (-2.1,2.76);

\pgfmathsetmacro\e{(sin(\b)+sin(\a))/sin((\a)+(\b))}
\pgfmathsetmacro\f{(cos(\b)-cos(\a))/sin((\a)+(\b))}
%\fill[black] (\e,\f) circle[radius=3pt];
\tkzDefPoint(\x,\p){A}
\tkzDefPoint(\e,\f){B}
\tkzDefPoint(\x,\q){C}

%\fill[gray] (\e,\f) circle[radius=2pt];
%\draw plot [only marks,mark=x,mark options={color=red,very thick}]  (\e,\f);
%\node[below] at (\e,\f) {$(x',y')$};

%\pic["$\delta$", draw=black, ->, angle eccentricity=1.5, angle radius=2cm,blue]
%    {angle=a--o--b};

\draw[-latex] [thick,domain=90:106,blue] plot ({2*cos(\x)+1}, {2*sin(\x)});
\node[blue] at (.73,2.2) {$\nu$};

%\draw[-latex] [thick,domain=90:140,red] plot ({2*cos(\x)+1}, {2*sin(\x)});
%\node[red] at (0.3,1.5) {$\delta'$};

%\node[blue] at (1+2.3*cos(120), 2.3*sin(120)) {$\delta$};
%arrow size

\draw[dashed] (1,3)--(1,-2);
\draw[rotate around={\d:(1,1)},dashed] (1,3.3)--(1,-2.4);
%\draw[rotate around={\z:(1,0)},red] (1,3.9)--(1,-3.9);
%\tkzDefLine[bisector](A,B,C)\tkzGetPoint{t}
%\tkzDrawSegment[dotted,thick](B,(0,0))

%Manually compute stuff
%\draw[red] (-.555,.105) circle (0.9);  
%\fill[red] (-.555,.105) circle(3pt);
%\fill[blue] (0,0) circle(3pt);
%\draw[-latex,ultra thick] (0,0)--(-.555,.108); 

%\draw[blue] (-.955,.168) circle (0.82);  
%\fill[blue] (-.955,.168) circle(3pt);
%\fill[blue] (0,0) circle(3pt);
%\draw[-latex,ultra thick] (-.555,.105)--(-.955,.168); 

\end{tikzpicture}\caption{After rotating $\H_i$ about the ideal point $\pi$ in counter-clockwise direction by angle $\nu$, $\H_{i(\nu)}\cap \B(\vec{0},1)=\emptyset$.}\label{fig:idealpoint}
\end{figure}
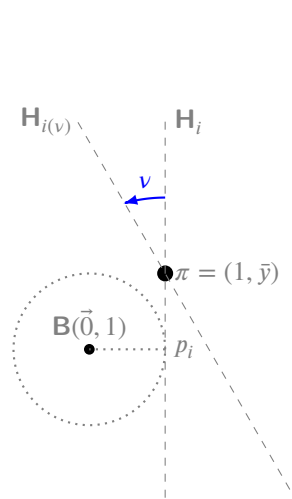

\end{document}